\theoremstyle{definition}
\newtheorem{example}{Example}
\theoremstyle{plain}
\newtheorem{theorem}{Theorem}
\newtheorem{lemma}{Lemma}
\newtheorem{proposition}{Proposition}
\newtheorem{corollary}{Corollary}
\theoremstyle{remark}
\newtheorem{remark}{Remark}
\def\laweq{\buildrel \mathrm{d} \over =}
\theoremstyle{definition}
\def\P{\mathbb{P}}
\def\p{\mathbb{P}}
\def\E{\mathbb{E}}
\def\R{\mathbb{R}}
\def\H{\mathcal{H}}
\def\X{\mathcal{X}}
\def\X{\mathcal{X}}
\def\d{\,\mathrm{d}}
\DeclareMathOperator*{\esssup}{ess\text{-}sup}
\DeclareMathOperator*{\essinf}{ess\text{-}inf}
\newcommand{\dsquare}{\mathop{  \square} \displaylimits}
\newcommand{\dboxplus}{\mathop{  \boxplus} \displaylimits}
\def\id{\mathds{1}}
\title{Risk sharing, measuring variability, and distortion riskmetrics}
\author{Jean-Gabriel Lauzier\thanks{Department of Economics, Memorial University of Newfoundland,  Canada. \Letter~{\url{jlauzier@mun.ca}}}\and Liyuan Lin\thanks{Department of Econometrics and Business Statistics, Monash University,  Australia. \Letter~{\url{liyuan.lin@monash.edu}} } \and  Ruodu Wang\thanks{Department of Statistics and Actuarial Science, University of Waterloo,  Canada. \Letter~{\url{wang@uwaterloo.ca}}}}
\date{\today}
\begin{document}
	\maketitle
	\begin{abstract}
We address the problem of sharing risk among agents with preferences modelled by a general class of comonotonic additive and law-invariant functionals that need not be either monotone or convex. 
Such functionals are called distortion riskmetrics, which include many statistical measures of risk and variability used in portfolio optimization and insurance. 
The set of Pareto-optimal allocations
is characterized under various settings of general or comonotonic risk sharing problems. We solve explicitly Pareto-optimal allocations among agents using the Gini deviation, the mean-median deviation, or the inter-quantile difference as the relevant variability measures. The latter is of particular interest, as optimal allocations are not comonotonic in the presence of inter-quantile difference agents;
instead, the optimal allocation features a  mixture of pairwise counter-monotonic structures, showing some patterns of extremal negative dependence. 

\textbf{Keywords}:  Signed Choquet integrals, risk sharing, inter-quantile difference, variability measures, pairwise counter-monotonicity
	\end{abstract}

 \noindent\rule{\textwidth}{0.5pt}

\section{Introduction} 

Anne, Bob and Carole are sharing a  random financial loss. 
After negotiating their respective expected returns,
each of them prefers to minimize a statistical measure of variability of their allocated risk. 
While agreeing on the distribution of the total loss, and that the variance is a poor metric of riskiness,  each of them has their own favourite tool for measuring risks. Anne, as an economics student, likes the Gini deviation (GD) because of its intuitive appearance as an economic index. Bob, as a computer science student, prefers the mean-median deviation (MMD) because it minimizes the mean absolute error. Finally,
Carole, as a statistics student, finds that an inter-quantile difference (IQD) is the most representative of her preference, as she does not worry about extreme events for this particular risk.\footnote{For an integrable random variable $X$, its GD is defined as $\E[|X-X'|]$ where $X'$ is an iid copy of $X$, its
MMD is defined as $\E[|X-m|]$ where $m$ is a median of $X$, and its IQD at level $\alpha\in (0,1/2)$ is the difference between the $\alpha$-quantile and the $(1-\alpha)$-quantile of $X$.
For precise definitions, see Section \ref{sec:2}.} How should Anne, Bob and Carole optimally share risks among themselves?

The reader familiar with risk sharing problems may immediately realize two notable features of such a problem. First, the preferences are not monotone, different from standard decision models in the literature. Second, and most crucially, Carole's preference is neither convex nor consistent with second-order stochastic dominance.
This alludes to the possibility of non-comonotonic Pareto-optimal allocations, in contrast to the comonotonic ones, which are well studied in the literature (e.g., \citealp{LM94, JST08, CDG12, R13}).
 
The GD, the MMD and the IQD are measures of distributional variability. Variability is used to characterize the concept of risk as in the classic work of \cite{MARK52} and \cite{RS70}. For this reason, we also call them \emph{riskmetrics}, which also include \emph{risk measures} in the literature, often associated with monotonicity (e.g., \citealp{FS16}). 
As the most popular measure of variability, 
the variance is known to be a coarse metric and it does not distinguish positive and negative deviations from the mean;\footnote{The latter criticism also applies to GD, MMD and IQD, but most of our results do not assume symmetry and can accommodate non-symmetric riskmetrics.} {\cite{EMS02} discussed various flaws of using variance and correlation in financial risk management.}
Anne's decision criterion has been proposed in \cite{SY84}, which considers an optimal portfolio problem \textit{\`a la} \cite{MARK52}, but with the variance replaced by the GD.\footnote{The authors use the term \textit{Gini's mean difference}.}
  Formally, the authors analyze the investor's problem $\min_{X} \mathrm{GD}(X)$ subject to $\E [X]\geq R$, for a given rate  $R\geq0 $ of return proportional to the investor's risk aversion.
As with mean-variance preferences (e.g., \citealp{M14, MMR13}), the decision criterion can thus be viewed as the problem of maximizing $\E[X]- \eta  \mathrm{GD}(X)$, for $\eta \ge 0$ being the Lagrange multiplier of the problem. While the decision criterion $\E[X]-\eta \mathrm{GD}(X)$ seems natural,  it is not monotone unless $\eta $ is less than or equal to one, in which case the investor's preference belongs to those of \cite{Y87}. The other measures MMD and IQD also have sound foundations and long history in statistics and its applications (\citealp[Chapter 6]{Y11}).  
Slightly different from  MMD, \cite{KY91} studied portfolio optimization using the mean-absolute deviation from the mean. 
Risk sharing problems with convex risk measures are well studied (e.g., ~\citealp{BE05}, \citealp{JST08}, \citealp{FS08} and \cite{RS14}), but the classes of riskmetrics mentioned above do not belong to convex risk measures in general.
When the riskmetric is a deviation measure, the risk sharing problem may become non-monotone, a feature that also appears in the model of \cite{MARK52}.

In this paper, we address the problem of sharing risk among agents that uses \emph{distortion riskmetrics} as their preferences. Distortion riskmetrics are evaluation functionals that are characterized by comonotonic additivity and law invariance \citep{WWW20a}. This rich family includes many measures of risk and variability,  and in particular, the mean, the GD, the MMD, the IQD, and their linear combinations.
Distortion riskmetrics are closely related to Choquet integrals and rank-dependent utilities widely used in decision theory (e.g., \citealp{Y87, S89, CD03}); for a comprehensive treatment of distortions in decisions and economics, see \cite{W10}. 
The combination of the mean and GD or that of the mean and MMD, as well as other distortion riskmetrics,
are used as premium principles in the insurance literature; see \cite{D90}.  Several variability measures within the class of distortion riskmetrics are studied by \cite{GMZ09}, \cite{FWZ17} and \cite{BFWW22}. 

While we analyze the general problem of sharing risk amongst distortion riskmetrics agents, non-monotone and non-cash-additive evaluation functionals receive greater attention. This is for a few reasons. First, the special case of sharing risk with cash-additive and law-invariant functionals is well studied, and more so when the functionals are monotone, but the general case is less understood. Second,   the formalism we introduce allows us to generalize the example above and consider individuals that analyze their risks with different variability measures.  
This is critical because we aim to understand how the act of measuring risk differently gives rise to incentives to trade it.
Also, it allows for non-monetary measurement of risk.  
Third, technically, relaxing monotonicity and convexity allows us to deal with maximization and minimization problems of risk in a unified framework.

The following simple example, by considering the GD and MMD agent only, illustrates the structure of a Pareto-optimal allocation as either an insurance or a financial contract.

\begin{example}\label{ex:intro}

Consider the problem of sharing a random loss $X$ between Anne ($A$) and Bob ($B$) only. Recall that Bob evaluates its allocation $X_B$ using the mean-median deviation  $\mathrm{MMD}(X_B)$. Similarly, Anne's allocation is $X_A$ which she evaluates with the Gini deviation $\mathrm{GD}(X_A)$. We will show (in Section \ref{sec:5}) that any Pareto-optimal allocation takes the form
$$X_A=X\wedge \ell - X\wedge  d, ~~~ X_B=X-X_A,$$
where $\ell \ge d$ will be specified later.   
We can interpret this as a situation where $X$ is  Bob's potential loss and Anne provides some degree of insurance for Bob. The contract (transfer function) is thus simply the random variable $X_A$. Notice that (i) when $\ell \ge X\ge d=0$ there is full insurance, (ii) when $\ell = d$ there is no insurance and (iii) for other choices of $\ell >d$, the contract is a simple deductible $d$ with an upper limit $\ell $. Further, we show that each Pareto-optimal allocation minimizes $\lambda \mathrm{MMD}(X_B)+ (1-\lambda)\mathrm{GD}(X_A)$ for some $\lambda \in[0,1]$. 
If we interpret $P=-X>0$ as the future (after one period) price of an asset that Bob owns one share, then case (iii)  can be achieved by Bob writing to Anne a bull call spread option with long strike price $-l$ and short strike price $-d$. 

\end{example}

The previous example is interesting because it confirms the intuition that the act of measuring risk differently leads to incentives to trade it. Yet, the ``shape" of the solution above is not surprising, as both the Gini deviation and mean-median deviation are convex order consistent functionals, and so exhibit risk aversion of \cite{RS70}.
Just as in the increasing distortion case, risk-minimizing (utility-maximizing) Pareto-optimal allocations are comonotonic when the distortion riskmetrics' distortion function is concave (convex),  because concavity  of the distortion function is equivalent to   convex order consistency.
 
The situation for IQD agents like Carole is more sophisticated.  The distortion function of IQD is discontinuous, non-concave, non-monotone, and takes value zero on both tails of the distribution. The preference induced by IQD does not correspond to a decision criterion typically considered in the literature,  whereas the preferences induced by quantiles, 
called quantile maximization, have been axiomatized by
 \cite{R10}. IQD is a standard measure of dispersion used in statistics such as in box plots, and its most popular special case in statistics is the inter-quartile difference, which measures the difference between the 25\% and 75\% quantiles of data.

The general theory of risk sharing between agents using distortion riskmetrics is laid out in Section \ref{sec:3}.
 A convenient feature of distortion riskmetrics is that they are convex order consistent if and only if the distortion function is concave (\citealp[Theorem 3]{WWW20a}). This enables the characterization of Pareto-optimal allocations for such agents using the comonotonic improvement, a notion introduced in \cite{LM94} to characterize the optimal sharing of risk among risk-averse expected utility maximizers; see also \cite{LM08} and \cite{R13}. 
Non-concave distortion functions lead to substantial challenges and to non-comonotonic optimal allocations,
with limited recent results obtained by \cite{ELW18} and \cite{W18} for some increasing distortion riskmetrics.

We study optimality within the subset of comonotonic allocations, which we refer to as the comonotonic risk sharing problem, for general distortion riskmetrics which are not necessarily convex in Section \ref{sec:4}. 
We show that 
the risk possibility set of distortion riskmetrics is always a convex set when restricted to the subset of comonotonic allocations. By the Hahn-Banach Theorem, we can always find comonotonic Pareto-optimal allocations by optimizing a linear combination of the agents' welfare. This simple but valuable result ``essentially comes for free” by the comonotonic additivity and positive homogeneity of distortion riskmetrics. In particular, it does not require the convexity of the evaluation functionals.
Moreover, this comonotonic setting allows us to easily incorporate heterogeneous beliefs as in the setting of \cite{ELMW20}, which we study in Section \ref{sec:7} for the interested reader.

With IQD agents, the set of optimal allocations can dramatically differ when defined on the whole set of allocations or the subset of comonotonic ones, as shown by results in Sections \ref{subsec:Unc_IQD} and \ref{sec:42}. 
We show the surprising result that Pareto-optimal allocations are precisely those which solve a sum optimality problem, which is not true for other variability measures such as GD or MMD.
Closed-form Pareto-optimal allocations are obtained, which can be decomposed as the sum of two pairwise counter-monotonic allocations. This observation complements the optimal allocations for quantile agents obtained by \cite{ELW18} which are pairwise counter-monotonic. 

Combining results obtained in Sections \ref{sec:3} and \ref{sec:4}, the general problem of sharing risks between IQD agents (like Carole) and agents with concave and symmetric distortion functions (like Anne and Bob) mentioned in the beginning of the paper is solved in Section \ref{sec:new5} and further illustrated in Section \ref{sec:5}.  We obtain a sum-optimal allocation 
which features a combination of comonotonicity and pairwise counter-monotonicity.  These two structures are, respectively, regarded as extremal positive and negative dependence concepts; see \cite{PW15} for an overview of these dependence concepts and \cite{LLW23} for a stochastic representation of pairwise counter-monotonicity. More specifically, there exists an event on which the obtained Pareto-optimal allocation is comonotonic, and two events on which the sum-optimal allocation is pairwise counter-monotonic. To the best of our knowledge, this is the first article to obtain such a type of sum-optimal or Pareto-optimal allocation. 
Moreover, none of our results relies on continuity of the distortion functions.
Section \ref{sec:7} extends our results on comonotonic risk sharing to the problem where agents have heterogeneous beliefs.
We conclude the paper in Section \ref{sec:8} with a few remarks, and all proofs are put in the appendices.

\section{Preliminaries}\label{sec:2}
\subsection{Distortion riskmetrics}
For a measurable space $(\Omega, \mathcal{F})$ and a finite set function $\nu:\mathcal F \rightarrow \mathbb{R}$ with $\nu(\varnothing)=0$, the signed \textit{Choquet integral} of a random variable $X:\Omega \rightarrow \mathbb{R}$ is the integral
\begin{align}
    \label{eq:0} \int X \d \nu =\int_0^\infty  \nu(X>x) \d x + \int_{-\infty}^{0} \left(\nu(X>x)- \nu (\Omega)\right)\d x,
\end{align}
provided these integrals are finite.
Let $n$ be a positive integer and let $[n]=\{1,\dots,n\}$.
 The random variables $X_1,\dots,X_n$ are \textit{comonotonic} if there exists a collection of increasing functions $f_i: \mathbb{R}\rightarrow \mathbb{R}$, $i\in [n]$, and a random variable $Z$ such that $X_i=f_i(Z)$ a.s. for all $i\in [n]$.
 Two random variables $X_1,X_2$ are \emph{counter-monotonic} if $X_1,-X_2$ are comonotonic. The random variables $X_1,\dots,X_n$ are \emph{pairwise counter-monotonic} if $X_i,X_j$ are counter-monotonic for each pair of distinct  $i,j$ \citep{PW15, LLW23}. Terms like ``increasing" or ``decreasing" are in the non-strict sense.

Assume that $(\Omega,\mathcal F,\p)$  is an atomless probability space where almost surely equal random variables are treated as identical. When $X(\omega)$ appears, it does not matter which version we choose.
Let $\X$ be a set of random variables on this space.
For simplicity, we assume throughout that $\X=L^\infty$,  the set of essentially bounded random variables, and we will inform the reader when a result can be extended to larger spaces.   
A \emph{distortion riskmetric}
 $\rho_h$ is the mapping from $\X$ to $\R$,
\begin{align}
\rho_h(X)=\int X \d\left( h\circ \P \right)=\int_0^\infty h(\P(X > x))\d x + \int_{-\infty}^{0} (h(\P(X > x))-h(1) )\d x,
\label{eq:1}
\end{align}
where  $h$ is in the set $ \H ^{\rm BV}$ of all possibly non-monotone \emph{distortion functions}, i.e., $$\H^{\rm BV}=\{h:  [0,1]\to\R \mid \mbox{$h$ is of bounded variation and }h(0)=0 \}.$$
Distortion riskmetrics are law-invariant versions of a general Choquet integral defined with regards to (possibly non-monotone) set functions; see Theorem 4.5 of \cite{MM04}.\footnote{To avoid any confusion, we refrain from using the term capacity, as those are typically defined as positive monotone set functions that are not necessarily additive. In fact, the set functions $h\circ \P$ we consider in the text need not be either positive or monotone for $h\in \H^{\rm BV}$.}

We now recall some properties of distortion riskmetrics that we use throughout. Any distortion riskmetric $\rho_h$ always satisfies the following four properties as a function $\rho:\X\to \R$.  
\begin{enumerate}
\item \emph{Law invariance:} $\rho(X)=\rho(Y)$ for $X\overset{\d}{=}Y$.
    \item \emph{Positive homogeneity:} $\rho(\lambda X)=\lambda \rho(X)$ for all $\lambda>0$ and $X\in\X$ with $\lambda X \in \X$.
    \item \emph{Comonotonic additivity:}   $\rho(X+Y)=\rho(X)+\rho(Y)$ whenever $X$ and $Y$ are comonotonic and $X + Y \in \X$.  
     \item[4.] \emph{Translation invariance:} $\rho(X+c)=\rho(X)+ c \rho(1)$ for all $c\in\R$ and $X\in\X$ with $X + c \in \X$. 
 \end{enumerate}
 As a special case of translation invariance with $\rho(1)=1$, 
  $\rho$ is \emph{cash additive} if $\rho(X+c)=\rho(X)+c$ for $c\in \R$ and $X\in \X$. 
For a distortion riskmetric $\rho_h$,
 cash additivity means $h(1)=1$. We also say \textit{location invariance} for $h(1)=0$ and \textit{reverse cash additivity} for $h(1)=-1$. 
 We note that although we use the general term ``cash additivity" as in the literature of risk measures, the values of random variables may be interpreted as non-monetary, such as carbon dioxide emissions, as long as they can be transferred between agents  in an additive fashion.

A distortion riskmetric $\rho_h$ may also satisfy the following properties depending on $h$.
A random variable $X$ is said to be dominated by a random variable $Y$ in \textit{convex order}, denoted by $X\le_{\rm{cx}}  Y$, if $\E[\phi(X)] \le \E[\phi(Y )]$ for every convex function $\phi: \R \to \R$, provided that both expectations exist (allowed to be infinite).  
\begin{enumerate}
\item[5.] \emph{Increasing monotonicity}: $\rho(X)\leq \rho(Y)$ whenever  $X\leq Y$.
\item[6.] \emph{Convex order consistency}: $\rho(X)\leq \rho(Y)$ whenever $X\le_{\rm{cx}} Y$.
    \item[7.] \emph{Subadditivity}: $\rho(X+Y)\leq \rho(X)+\rho(Y)$
 for every $X,Y \in \X$.
 \end{enumerate}
 We also say that $\rho$ is
\emph{monotone} if either $\rho$ or $-\rho$ is increasing.
  Increasing and cash-additive functionals are 
  \textit{monetary risk measure} (\citealp{FS16}) or  \textit{niveloids} \citep{CMMR14}; see also \cite{ADEH99} for coherent risk measures and \cite{CMMM11} for quasiconvex risk measures.  
 For a distortion riskmetric $\rho_h$, increasing monotonicity means that $h$ is increasing, and either subadditivity or   convex order consistency   is equivalent to the concavity of   $h$ by Theorem 3 of \cite{WWW20a}.

Distortion riskmetrics are precisely all law-invariant 
 and comonotonic-additive mappings satisfying some forms of continuity; see \cite{WWW20b} on $L^\infty$ and \cite{WWW20a} on general spaces.
The subset of increasing normalized distortion functions is denoted by $ \H^{\rm DT}$, that is,
 $$\H^{\rm DT}=\{h:  [0,1]\to\R  \mid \mbox{$h$ is increasing, $h(0)=0$ and $h(1)=1$} \}.$$ 
 
 If $h\in \H^{\rm DT}$, then $\rho_h$ is often called a \emph{dual utility} of \cite{Y87}.  
 Recall that a Yaari agent is strongly risk averse when the distortion function $h$ is concave \citep{Y87}. Hence, we slightly abuse nomenclature and simply say that a distortion riskmetric agent is risk averse when its distortion function is concave, regardless of whether it is increasing or not. This is consistent with the concept of increasing in risk introduced by \cite{RS70}.  With risk aversion, $\rho_h$ is a  \emph{spectral risk measure} \citep{A02} in risk management,  an important class of \emph{coherent} risk measures \citep{K01}. In insurance, it is also known as \textit{Wang's premium principle} \citep{W96}.

Any distortion riskmetric admits a quantile representation (Lemma 1 of \cite{WWW20a}; see the monotone case in Theorems 4 and 6 of \cite{DKLT12}). 
For a concise presentation of results,  
we define quantiles by counting losses \emph{from large to small}.\footnote{It will be clear from Theorem \ref{th:IQD} that this nontraditional choice of notation significantly simplifies the presentation of several results; this is also the case in \cite{ELW18}.} Formally, for $t\in [0,1]$, we define the left quantile of a random variable $X\in \X$ as 
$Q_t^-(X)=\inf\{x\in \R: \p(X\le x)\ge 1-t \}$,
and the  right quantile as
$ Q_t^+(X)=\inf\{x\in \R: \p(X\le x) > 1-t \}$, where $\inf \varnothing =\infty$.
Further let $\esssup=Q_0^-$ and $\essinf=Q_1^+$. 
The following integrals are in the sense of Lebesgue--Stieltjes.

\begin{lemma} \label{lem:qr}
For $h\in \H ^{\rm BV}$ and $X\in \X$ such that $\rho_h(X)$ is well-defined (it may take values $\pm \infty$),
    \begin{enumerate}[(i)] 
    \item if $h$ is right-continuous, then $\int X\d ( h\circ\P)=\int^1_0 Q_{t}^+(X)\d h(t)$;
    \item if $h$ is left-continuous, then $\int X\d ( h\circ\P)=\int^1_0 Q^{-}_{t}(X)\d h(t)$; 
    \item if $t\mapsto Q^-_{t}(X)$ is continuous on $(0,1)$, then $\int X\d (h\circ\P)=\int^1_0 Q^{-}_{t}(X)\d h(t)=\int^1_0 Q^+_{t}(X)\d h(t)$.
\end{enumerate}
\end{lemma}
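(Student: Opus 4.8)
The plan is to prove all three parts from one Fubini-type identity and to let the one-sided continuity of $h$ select the appropriate quantile. Since $X\in\X=L^\infty$ is bounded, set $M=\|X\|_\oo$ and let $G(x)=\p(X>x)$ be the (decreasing, right-continuous) survival function, so $G\equiv 0$ on $(M,\oo)$, $G\equiv 1$ on $(-\oo,-M)$, and both integrals in \eqref{eq:1} range over the finite interval $[-M,M]$. The whole argument rests on the duality between $G$ and the quantiles read from large to small. First I would record, by unwinding $Q_t^-(X)=\inf\{y:G(y)\le t\}$ and $Q_t^+(X)=\inf\{y:G(y)<t\}$ together with the monotonicity and right-continuity of $G$, the relations
\begin{align*}
Q_t^-(X)>x \iff G(x)>t, \qquad Q_t^+(X)>x \implies G(x)\ge t.
\end{align*}
The first is an exact equivalence; for $Q^+$ the reverse implication can fail only at the single point $x=Q_t^+(X)$ in the case $G(Q_t^+(X))=t$.

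Next I would expand each quantile by the horizontal representation $a=\int_0^\oo \ii\{a>x\}\,\d x-\int_{-\oo}^0\ii\{a\le x\}\,\d x$ and apply Fubini. Absolute integrability is immediate: $\d h$ is a finite signed measure since $h$ is of bounded variation, the inner $x$-integral equals $|Q_t^\pm(X)|\le M$, so the double integral is dominated by $M\,\mathrm{TV}(h)<\oo$. Swapping the order turns $\int_0^1 Q_t^-(X)\,\d h(t)$ into
\begin{align*}
\int_0^\oo\Big(\int_0^1\ii\{Q_t^-(X)>x\}\,\d h(t)\Big)\d x-\int_{-\oo}^0\Big(\int_0^1\ii\{Q_t^-(X)\le x\}\,\d h(t)\Big)\d x.
\end{align*}
For (ii), where $h$ is left-continuous, the exact duality gives $\{Q_t^-(X)>x\}=\{t<G(x)\}$ and $\{Q_t^-(X)\le x\}=\{t\ge G(x)\}$, so the inner integrals are the $\d h$-masses of $[0,G(x))$ and $[G(x),1]$, equal to $h(G(x))$ and $h(1)-h(G(x))$ precisely because $h(G(x)^-)=h(G(x))$. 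Substituting these recovers exactly the two terms of \eqref{eq:1}, proving (ii). For (i) the identical computation with $Q^+$ uses $\{Q_t^+(X)>x\}=\{t\le G(x)\}$ and the interval $[0,G(x)]$, whose $\d h$-mass is $h(G(x))$ by right-continuity.

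The step I expect to be the main obstacle is reconciling the atoms of $\d h$ with the one-point failure of the $Q^+$ duality. For fixed $x$ the integrands $\ii\{Q_t^+(X)>x\}$ and $\ii\{G(x)\ge t\}$ in $t$ disagree at most at $t=G(x)$, which carries positive $\d h$-mass exactly when $G(x)$ is a jump of $h$. The remedy is to integrate in $x$ first: for each fixed jump $t_0$ of $h$ there is at most one $x$, namely $x=Q_{t_0}^+(X)$, with $G(x)=t_0$ and the duality failing, so the whole exceptional set is a countable union over the (countable) jump set $J_h$ and hence Lebesgue-null in $x$. The discrepancy therefore vanishes after the outer integration, which is exactly why (i) must pair right-continuity with $Q^+$ and (ii) left-continuity with $Q^-$. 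Finally, for (iii) the continuity of $t\mapsto Q_t^-(X)$ is equivalent to $G$ having no flat stretches, hence to $Q_t^-(X)=Q_t^+(X)$ for all $t\in(0,1)$; then $\{x:G(x)=t_0\}$ is at most a single point for each $t_0$, so $\{x:G(x)\in J_h\}$ is countable for every $h\in\H^{\rm BV}$, the inner integral equals $h(G(x))$ for a.e.\ $x$ with no continuity hypothesis on $h$, and both quantile integrals coincide with $\rho_h(X)$.
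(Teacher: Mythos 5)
The paper never actually proves this lemma: it is imported as Lemma 1 of \cite{WWW20a}, so there is no internal proof to compare against, and your argument must be judged on its own merits. Parts (i) and (ii) are correct and follow the standard route: the layer-cake expansion, the Fubini swap, the exact duality $Q_t^-(X)>x \iff \P(X>x)>t$, and the observation that the one-point failure of the $Q^+$ duality is absorbed because each jump of $h$ produces at most one exceptional $x$, are all sound. One imprecision should be repaired even there: under the paper's definitions $Q_1^-(X)=-\infty$ and $Q_0^+(X)=+\infty$, so your domination $|Q_t^\pm(X)|\le M$ justifying Fubini holds only $|\d h|$-almost everywhere. It is precisely left-continuity (resp.\ right-continuity) of $h$ that forces the atom $\d h(\{1\})=h(1)-h(1^-)$ (resp.\ $\d h(\{0\})=h(0^+)-h(0)$) to vanish, i.e.\ that the endpoint where the paired quantile blows up carries no mass; this, and not only the interior jump bookkeeping, is why (i) must pair right-continuity with $Q^+$ and (ii) left-continuity with $Q^-$.

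Part (iii) has a genuine gap at exactly these boundary levels. Your claim that ``$\{x:G(x)=t_0\}$ is at most a single point for each $t_0$'' is true for $t_0\in(0,1)$ under the continuity hypothesis, but false for $t_0\in\{0,1\}$: for bounded $X$ these level sets are half-lines, and continuity of $Q^-_\cdot(X)$ on the open interval says nothing about them. Since a general $h\in\H^{\rm BV}$ may have atoms at $t=0$ or $t=1$, the argument breaks down there, and in fact the identity you claim to establish fails under the paper's stated conventions. Take $h(t)=0$ for $t\in[0,1)$, $h(1)=1$ (an admissible, right-continuous element of $\H^{\rm BV}$) and $X$ uniform on $[0,1]$: the hypothesis of (iii) holds, and $\rho_h(X)=0=\int_0^1 Q_t^+(X)\d h(t)$, but $\int_0^1 Q_t^-(X)\d h(t)=Q_1^-(X)\cdot\bigl(h(1)-h(1^-)\bigr)=-\infty$, and your Fubini bound $M\,\mathrm{TV}(h)$ fails as well. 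So part (iii) cannot be proved by the argument as written: the $\d h$-atoms at $t=0$ and $t=1$ must be split off and handled by an explicit endpoint convention (the atom at $1$ weighting $\essinf X=Q_1^+(X)$ rather than $Q_1^-(X)$, and the atom at $0$ weighting $\esssup X=Q_0^-(X)$ rather than $Q_0^+(X)$), after which your interior argument applies to the remainder of $h$. Making those endpoint conventions precise, consistently with \cite{WWW20a}, is the content your proof is missing.
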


There are two main advantages of working with non-monotone distortion functions.  First, as monotonicity is not assumed, results on maxima and minima are symmetric; we only need to analyze one of them. Second,  distortion riskmetrics include many more functionals in risk management, such as variability measures, which never have a monotone distortion function. Some properties of non-monotone risk functionals are studied by \cite{WW20}. We will make extensive use of three variability measures which appeared in the introduction. 
They are well defined on spaces larger than $L^\infty$, although we state our main results on $\X=L^\infty$.

The first measure of variability we use extensively is the Gini deviation  (GD)  $$\mathrm{GD}(X)=\frac{1}{2}\E [\vert X^*-X^{**}\vert]=\int X \d (h_{\mathrm{GD}}\circ \P)$$ for  $X \in L^1$, where  $X^*$, $X^{**}$ are any independent copies of $X$,  and $h_{\mathrm{GD}}(t)= t-t^2$ for $t\in[0,1]$. The specific choice of $X^*$, $X^{**}$ is not relevant. Its distortion function is depicted in Figure \ref{fig:distortion} (a).
As our second measure of variability, the mean-median deviation (MMD) is defined by $$\mathrm{MMD}(X)=\min_{x\in \mathbb R}\E [\vert X-x\vert ]=\E[\vert X-Q^-_{1/2}(X)\vert]=\int X \d (h_{\mathrm{MMD}} \circ \P)$$ for $X \in L^1$ and  $h_\mathrm{MMD}(t)=t \wedge (1-t)$, $t\in [0,1]$; see Figure \ref{fig:distortion} (b). The mean-median deviation is sometimes called the mean (or average) absolute deviation from the median and is well known for its statistical robustness.
Both the mean-median deviation and the Gini deviation have concave distortions and thus are convex order consistent.
Lastly, the inter-quantile difference (IQD) is defined by 
$$\mathrm{IQD}_\alpha(X) =  Q_\alpha^-(X) -  Q_{1-\alpha}^+(X)=\int X \d (h_{\mathrm{IQD}}\circ \P )$$
for $X \in L^0$ and $h_{\mathrm{IQD}}(t)=\id_{\{\alpha < t< 1-\alpha\}}$, $t \in [0,1]$ and $\alpha \in [0,1/2)$. See Figure \ref{fig:distortion} (c)  for its distortion function.
We further set $\mathrm{IQD}_\alpha=0$ for $\alpha \in [1/2,\infty)$, but this is only for the purpose of unifying the presentation of some results.
Our formulation of IQD is slightly different from the definition used by \cite{BFWW22} where $\mathrm{IQD}_\alpha$ is defined as 
$ Q_{\alpha}^+ -  Q_{1-\alpha}^-$, but this difference is minor. The two definitions coincide when the quantile function is continuous.
For $X\in \X$ and $\alpha \in [0,1/2)$, a convenient formula (see Theorem 1 of \cite{BFWW22})  is 
\begin{equation}
\label{eq:IQD-1}
\mathrm{IQD}_\alpha(X) = Q_\alpha^- (X)+  Q_\alpha^- (-X),
\end{equation}
and this is due to   $Q_{1-{\alpha}}^+(X) =-  Q_{\alpha}^-(-X)$.

Consider now a preference functional $\mathcal{I}$ of the form
$$\mathcal{I}(X)= \theta \E(X) + \gamma D(X)$$
for $\theta \geq 0$, $\gamma\in \R$ and $D(X)$ a variability measure. 
The version of $\mathcal{I}$ with $\theta =1$ and $\gamma<0$ is widely used in modern portfolio theory (as an objective to maximize). There, the random variable $X$ denotes the gains, the parameter $ \gamma$ indicates the degree of risk aversion and $D(X)$ is a variability measure chosen to replace the variance. This yields the ``Mean-$D$" preferences nomenclature common in the literature. The version of $\mathcal{I}$ with $X$ being a loss, $\theta \geq 1$ and $\gamma \ge 0$ is common in the insurance/reinsurance literature, where it is called a distortion-deviation premium principle. For instance, \cite{D90} suggests the premium principle $\theta=1$ and $D(X)=\mathrm{MMD}(X)$. The functional  $\mathcal{I}$ is a distortion riskmetric 
as long as $D$ is one,
and so we adopt the convention of denoting such functional by $\rho_h$ and interpreting $X$ as losses. 
Panels (d)-(f) of 
Figure \ref{fig:distortion} illustrate the distortion functions of $\E+\gamma D$.

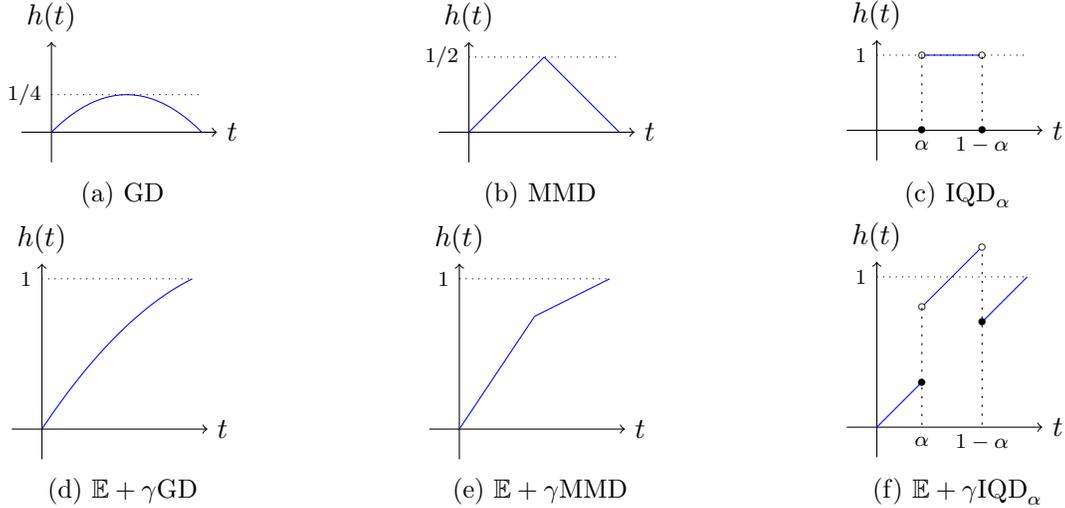
\begin{figure}[t]
\caption{Distortion functions for GD, MMD, IQD and $\E+\gamma D$, where $\gamma=1/2$}
\label{fig:distortion}
\begin{subfigure}[b]{0.33\textwidth}
\centering
\begin{tikzpicture}[scale=2]
\draw[->] (-0.2,0) --(1.1,0) node[right] {$t$};
\draw[->] (0,-0.2) --(0,0.6) node[above] {$h(t)$};
\draw[blue,domain =0:1] plot (\x ,{\x-\x*\x});
\node[left] at (0,0.25) {\scriptsize $1/4$};
 \draw[dotted] (0, 0.25)--(1,0.25);
\end{tikzpicture}
\caption{GD}
\end{subfigure}
\begin{subfigure}[b]{0.33\textwidth}
\centering
\begin{tikzpicture}[scale=2]
\draw[->] (-0.2,0) --(1.1,0) node[right] {$t$};
\draw[->] (0,-0.2) --(0,0.6) node[above] {$h(t)$};
 \draw[blue,domain =0:0.5] plot (\x ,{\x});
 \draw[blue,domain =0.5:1] plot (\x ,{1-\x});
\node[left] at (0,0.5) {\scriptsize $1/2$};
 \draw[dotted] (0, 0.5)--(1,0.5);
\end{tikzpicture}
\caption{MMD}
\end{subfigure}
\begin{subfigure}[b]{0.33\textwidth}
\centering
\begin{tikzpicture}[scale=2]
\draw[->] (-0.2,0) --(1.1,0) node[right] {$t$};
\draw[->] (0,-0.2) --(0,0.6) node[above] {$h(t)$};
 \draw[blue,domain =0.315:0.685] plot (\x ,{0.5});
 \draw[dash pattern={on 0.84pt off 2.51pt}] (0.3, 0.49)--(0.3,0);
  \draw[dash pattern={on 0.84pt off 2.51pt}]  (0.7, 0.49)--(0.7,0);
 
\node at (0.3, 0) {\tiny\textbullet};
\node at (0.7, 0) {\tiny\textbullet};
\node at (0.3, 0.5) {\tiny$\circ$};
\node at (0.7, 0.5) {\tiny$\circ$};
\node[below] at (0.3,0) {\scriptsize $\alpha$};
\node[below] at (0.7,0.02) {\scriptsize $1-\alpha$};
\node[left] at (0,0.5) {\scriptsize $1$};
 \draw[dotted] (0, 0.5)--(1,0.5);
\end{tikzpicture}
\caption{ $\mathrm{IQD}_\alpha$}
\end{subfigure}\\
\begin{subfigure}[b]{0.33\textwidth}
\centering
\begin{tikzpicture}[scale=2]
\draw[->] (-0.2,0) --(1.1,0) node[right] {$t$};
\draw[->] (0,-0.2) --(0,1.1) node[above] {$h(t)$};
\draw[blue,domain =0:1] plot (\x ,{\x+0.5*(\x-\x*\x)});
\node[left] at (0,1) {\scriptsize $1$};
 \draw[dotted] (0, 1)--(1,1);
\end{tikzpicture}
\caption{$\E+\gamma\mathrm{GD}$}
\end{subfigure}
\begin{subfigure}[b]{0.33\textwidth}
\centering
\begin{tikzpicture}[scale=2]
\draw[->] (-0.2,0) --(1.1,0) node[right] {$t$};
\draw[->] (0,-0.2) --(0,1.1) node[above] {$h(t)$};
 \draw[blue,domain =0:0.5] plot (\x ,{1.5*\x});
 \draw[blue,domain =0.5:1] plot (\x ,{0.5*(\x-0.5)+0.75});
\node[left] at (0,1) {\scriptsize $1$};
 \draw[dotted] (0, 1)--(1,1);
\end{tikzpicture}
\caption{ $\E+\gamma \mathrm{MMD}$}
\end{subfigure}
\begin{subfigure}[b]{0.33\textwidth}
\centering
\begin{tikzpicture}[scale=2]
\draw[->] (-0.2,0) --(1.1,0) node[right] {$t$};
\draw[->] (0,-0.2) --(0,1.1) node[above] {$h(t)$};
 \draw[blue,domain =0:0.3] plot (\x ,{\x});
 \draw[blue,domain =0.315:0.685] plot (\x ,{\x+0.5});
  \draw[blue,domain =0.7:1] plot (\x ,{\x});
 \draw[dash pattern={on 0.84pt off 2.51pt}](0.3, 0.795)--(0.3,0);
 \draw[dash pattern={on 0.84pt off 2.51pt}] (0.7, 1.15)--(0.7,0);
\node at (0.3, 0.3) {\tiny\textbullet};
\node at (0.7, 0.7) {\tiny\textbullet};
\node at (0.3, 0.8) {\tiny$\circ$};
\node at (0.7, 1.2) {\tiny$\circ$};
 
\node[below] at (0.3,0) {\scriptsize $\alpha$};
\node[below] at (0.7,0.02) {\scriptsize $1-\alpha$};
\node[left] at (0,1) {\scriptsize $1$};
 \draw[dotted] (0, 1)--(1,1);
\end{tikzpicture}
\caption{$\E+\gamma \mathrm{IQD}_\alpha$}
\end{subfigure}
\end{figure}

\subsection{Risk sharing problems}

There are $n$ agents sharing a total loss $X\in \X $.
Recall that, for all results,  we consider $\X=L^\infty$. 
Agents can have different preferences, which may be due to their own statistical modelling; see e.g., \cite{A22}.
Suppose that agent $i\in [n]$ has a preference modelled by a distortion riskmetric $\rho_{h_i}$  with smaller values preferred. Given $X\in \mathcal{X}$, we define the set of \emph{allocations} of $X$ as
\begin{align}
\mathbb{A}_n(X)=\left\{(X_1,\ldots,X_n)\in \mathcal{X}^n: \sum_{i=1}^nX_i=X\right\}. \label{eq:intro1}
\end{align} 
The \emph{inf-convolution}  $\dsquare_{i=1}^n \rho_{h_i}$ of $n$ distortion riskmetrics $\rho_{h_1},\dots,\rho_{h_n}$ is defined as
\begin{align*}
\dsquare_{{i=1}}^n\rho_{h_i}(X):=\inf\left\{\sum_{i=1}^n\rho_{h_i}(X_i): (X_1,\dots,X_n)\in \mathbb{A}_n(X) \right\},~~X\in \X.
\end{align*}
That is, the inf-convolution of $n$ distortion riskmetrics is the infimum over aggregate welfare for all possible allocations. 
For a general treatment of inf-convolution in risk sharing problems, see \cite{R13}. 
 
  Let  $\rho_{h_1},\dots,\rho_{h_n}$ be distortion riskmetrics and $X\in \X$. The allocation $(X_1,\dots,X_n)$ is \textit{sum optimal  in $\mathbb{A}_n(X)$} if $\dsquare_{{i=1}}^n\rho_{h_i}(X)=\sum_{i=1}^n\rho_{h_i}(X_i) $.
An allocation $(X_1,\ldots,X_n)\in\mathbb{A}_n(X)$ is     \emph{Pareto optimal  in $\mathbb{A}_n(X)$} if for any
$(Y_1,\ldots,Y_n)\in\mathbb{A}_n(X)$ satisfying $\rho_{h_i}(Y_i)\leq\rho_{h_i}(X_i)$ for all $i\in [n]$, we have $\rho_{h_i}(Y_i)=\rho_{h_i}(X_i)$ for all $i\in [n]$. Note that sum-optimal allocations are always Pareto optimal.

Part of our analysis is conducted for the constrained problem where 
the allocations are confined to 
the set of comonotonic allocations, that is,
 $$\mathbb{A}_n^+(X)=\left\{(X_1,\ldots,X_n)\in\mathbb{A}_n(X):~ X_1, \dots, X_n \mbox{ are comonotonic}\right\},
$$

We first make a useful observation about $\mathbb{A}_n^+(X)$ below, which is a simple generalization of \citet[Proposition 4.5]{D94} in the case of $n=2$. 
\begin{proposition}\label{pr:den}
The random variables $X_1,\dots,X_n$ are comonotonic if and only if  there exist   increasing functions $f_i:\R \rightarrow \R$ such that $X_i=f_i(\sum_{k=1}^n X_k)$ a.s., $i\in [n]$   and $\sum_{i=1}^n f_i(x)=x$ for $x\in \R$. 
\end{proposition}
Proposition \ref{pr:den} implies that if $(X_1, \dots, X_n) \in \mathbb{A}_n^+(X)$, then we can set $X=Z$ in the definition of comonotonicity while guaranteeing that $\sum_{i=1}^n X_i=X$ a.s.

The \emph{comonotonic inf-convolution} $\dboxplus_{i=1}^n \rho_{h_i}$ of distortion riskmetrics $\rho_{h_1},\dots,\rho_{h_n}$ is defined as
\begin{align*}
\dboxplus_{{i=1}}^n\rho_{h_i}(X):=\inf\left\{\sum_{i=1}^n\rho_{h_i}(X_i): (X_1,\dots,X_n)\in \mathbb{A}_n^+(X) \right\}.
\end{align*}

  Let  $\rho_{h_1},\dots,\rho_{h_n}$ be distortion riskmetrics and $X\in \X$. 
An allocation $(X_1,\dots,X_n)$ is \textit{sum optimal in} $\mathbb{A}_n^+(X)$ when $\dboxplus_{{i=1}}^n\rho_{h_i}(X)=\sum_{i=1}^n\rho_{h_i}(X_i) $.
An allocation $(X_1,\ldots,X_n)\in\mathbb{A}^+_n(X)$ is   \emph{Pareto optimal   in $\mathbb{A}^+_n(X)$} if for any
$(Y_1,\ldots,Y_n)\in\mathbb{A}^+_n(X)$ satisfying $\rho_{h_i}(Y_i)\leq\rho_{h_i}(X_i)$ for all $i\in [n]$, we have $\rho_{h_i}(Y_i)=\rho_{h_i}(X_i)$ for all $i\in [n]$. 
  
 The set of comonotonic allocations $\mathbb{A}_n^+(X)$ is a strict subset of the set of all possible allocations $\mathbb{A}_n(X)$. Hence, the sequel refers to the problem of sharing risk in $\mathbb{A}_n(X)$ and $\mathbb{A}_n^+(X)$ as \textit{unconstrained} and  \textit{comonotonic} risk sharing, respectively. 

\section{Unconstrained risk sharing}\label{sec:3}
This section tackles the unconstrained risk sharing problem. It is divided into two subsections. The first aims at providing general results and  
 the second subsection characterizes the unconstrained risk sharing problem with IQD agents. There, we show that sum-optimal allocations involve pairwise counter-monotonicity, an extreme form of negative dependence between the agents' risk.
\subsection{Pareto optimality, sum optimality, and comonotonic improvement}\label{sec:31}
In all results, we will always assume that agents have preferences modelled by $\rho_{h_1},\dots,\rho_{h_n}$ where $h_1,\dots,h_n\in \mathcal H^{\rm BV}$, with one exception which will be specified clearly. The value of $h(1)$ is important for a distortion riskmetric $\rho_h$ because, by translation invariance, it pins down the value attributed to a sure gain or loss.
 \begin{proposition}\label{prop:1}~ Let $X\in \X$. Then
\begin{enumerate}[(i)]
\item If a Pareto-optimal allocation in either $\mathbb A^+_n(X)$ or $\mathbb A_n(X)$ exists then  
  $h_i(1)$, $i\in [n]$, are all $0$, all positive, or all negative;
\item If $\dboxplus_{i=1}^n \rho_{h_i}(X)> -\infty$, then $h_1(1)=\dots=h_n(1)$.
  \end{enumerate}
 \end{proposition}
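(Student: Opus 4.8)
The key observation is that translation invariance completely determines the effect of a pure cash transfer between two agents: for a distortion riskmetric one has $\rho_{h_i}(X_i+c)=\rho_{h_i}(X_i)+c\,h_i(1)$ for every constant $c$, since $\rho_{h_i}(1)=h_i(1)$. The plan is to perturb an arbitrary candidate allocation by moving a constant amount of loss from one agent to another. Such a transfer is always feasible, and, crucially, it preserves comonotonicity, because adding a constant to an increasing function $f_i(X)$ of $X$ keeps it an increasing function of $X$. Consequently the same perturbation argument applies verbatim in both $\mathbb A_n(X)$ and $\mathbb A_n^+(X)$, and it is this single device that drives both parts.

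For part (i), I would fix an allocation $(X_1,\dots,X_n)$, a pair $i\neq j$, and a scalar $t\in\R$, and set $Y_i=X_i+t$, $Y_j=X_j-t$, leaving the other coordinates unchanged; this stays in the same allocation set (comonotonic if the original was). By translation invariance the welfare vector changes only in coordinates $i$ and $j$, by $t\,h_i(1)$ and $-t\,h_j(1)$ respectively. I would then argue the contrapositive: if $h_1(1),\dots,h_n(1)$ are not all $0$, all positive, or all negative, then some pair satisfies $h_i(1)\ge 0\ge h_j(1)$ with the two not both zero. Choosing $t<0$ makes both changes $\le 0$ with at least one strict, i.e.\ a Pareto improvement of the (arbitrary) allocation. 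Hence no Pareto-optimal allocation can exist, which is exactly the stated necessary condition.

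For part (ii), I would start from any comonotonic allocation, for instance $(X,0,\dots,0)\in\mathbb A_n^+(X)$, and apply the same transfer between a pair $i\neq j$. The total welfare $\sum_{k}\rho_{h_k}(Y_k)$ then equals a constant plus $t\,(h_i(1)-h_j(1))$. If $h_i(1)\neq h_j(1)$, letting $t\to+\infty$ or $t\to-\infty$ (whichever makes the coefficient negative) drives the total welfare, and therefore $\dboxplus_{i=1}^n\rho_{h_i}(X)$, to $-\infty$, contradicting the finiteness hypothesis. Thus all the values $h_i(1)$ must coincide.

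I do not expect a genuine obstacle here: the whole statement is essentially a consequence of translation invariance together with the feasibility of cash transfers. The only points that require care are (a) verifying that a cash transfer keeps a comonotonic allocation comonotonic, so that the $\mathbb A_n^+$ case is genuinely covered, and (b) the small bookkeeping step in part (i) that reduces ``the $h_i(1)$ are not all of the same sign'' to the existence of a single pair $(i,j)$ with $h_i(1)\ge 0\ge h_j(1)$ and not both zero, which is a short case check on the three possible signs.
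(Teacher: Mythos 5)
Your proposal is correct and follows essentially the same route as the paper's proof: both parts rest on translation-invariant cash transfers $(X_i+c, X_j-c)$ between a pair of agents, yielding a Pareto improvement in part (i) when the signs of the $h_i(1)$ are mixed, and driving the aggregate welfare to $-\infty$ in part (ii) when $h_i(1)\neq h_j(1)$. Your unification of the paper's three sign cases into a single pair with $h_i(1)\ge 0\ge h_j(1)$, not both zero, and your explicit check that cash transfers preserve comonotonicity, are minor streamlinings of the same argument.
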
 
The proof of Proposition \ref{prop:1} highlights the role of translation invariance. 
For (i), we thus assume by contradiction that $(X_1,\dots, X_n)$ is Pareto optimal but that $h_i(1)$, $i\in [n]$, are not all zero or all of the same sign. We can organize a (cash) transfer $(c_1,\dots, c_n)$ between agents such that $\sum_{i=1}^n c_i = 0$ and the allocation $(X_1+ c_1, \dots, X_n+ c_n)$ strictly improves upon $(X_1,\dots,X_n)$;  this should not be possible.
This condition implies that, in order for the risk-sharing problem to be meaningful,
all agents must agree on whether they like or dislike
an increase of their allocation.
In the former case, $X_1,\dots,X_n$ may represent a good like monetary gains, and in the latter case,  they may represent bad outcomes, like carbon dioxide emissions.
For (ii), when the value of $h(1)$ differs between agents, a similar type of transfer strictly reduces the sum of welfare $\sum_{i=1}^n\rho_{h_i}$, and so the value attained by the inf-convolution $\dboxplus_{i=1}^n \rho_{h_i}$ is arbitrarily small.

For $h\in \mathcal H^{\rm BV}$, we write $\tilde h = h/|h(1)|$ if $h(1)\ne 0$ and $\tilde h = h$ if $h(1)=0$. If $h(1)\ne 0$, then   $\tilde h(1)=\pm 1$. 
Note that replacing $h_i$ with its normalized version $\tilde h_i$ does not change the preference of agent $i$. Hence, we sometimes consider in our proofs distortion riskmetrics that are either all cash additive or all reverse cash additive. While this normalization does change the value attained by the inf-convolution, it is without loss of generality for characterizing Pareto optimality.

We now state our first theorem, a generalization of Proposition 1 of \cite{ELW18}  stated for monetary risk measures.
\begin{theorem}\label{th:1}
Suppose that   $h_i(1) \ne 0$ for some $i\in [n]$. 
  An allocation
 $(X_1,\dots,X_n)\in \mathbb A_n(X)$ is  Pareto optimal in $\mathbb A_n(X)$ if and only if   $\sum_{i=1}^n \rho_{\tilde h_i}(X_i) =\dsquare_{{i=1}}^n\rho_{\tilde h_i}(X)$.
\end{theorem}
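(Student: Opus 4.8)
The plan is to prove the equivalence between Pareto optimality in $\mathbb{A}_n(X)$ and sum optimality for the \emph{normalized} riskmetrics $\rho_{\tilde h_i}$. The key structural fact I would exploit is that, after normalization, the hypothesis $h_i(1)\ne 0$ for some $i$ combined with Proposition \ref{prop:1}(i) forces all $\tilde h_i(1)$ to share the same sign, so we may assume (by replacing each $\rho_{h_i}$ with $\pm\rho_{h_i}$ if needed) that $\tilde h_i(1)=1$ for all $i$, i.e., every $\rho_{\tilde h_i}$ is \emph{cash additive}. This reduction is the crucial first step, since cash additivity is precisely what makes welfare transferable: a deterministic transfer $c_i$ shifts agent $i$'s value by exactly $c_i$, and such transfers are budget-balanced when $\sum_i c_i=0$.

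For the direction ``sum optimal $\Rightarrow$ Pareto optimal,'' I would argue by contraposition. Suppose $(X_1,\dots,X_n)$ is not Pareto optimal, so there exists $(Y_1,\dots,Y_n)\in\mathbb{A}_n(X)$ with $\rho_{\tilde h_i}(Y_i)\le\rho_{\tilde h_i}(X_i)$ for all $i$ and strict inequality for some $i$. Summing gives $\sum_i\rho_{\tilde h_i}(Y_i)<\sum_i\rho_{\tilde h_i}(X_i)$, so $(X_1,\dots,X_n)$ cannot attain the infimum $\dsquare_{i=1}^n\rho_{\tilde h_i}(X)$, contradicting sum optimality. This direction is immediate and does not even use cash additivity.

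For the converse ``Pareto optimal $\Rightarrow$ sum optimal,'' the idea is that if $(X_1,\dots,X_n)$ is Pareto optimal but not sum optimal, there is another allocation $(Y_1,\dots,Y_n)\in\mathbb{A}_n(X)$ with $\sum_i\rho_{\tilde h_i}(Y_i)<\sum_i\rho_{\tilde h_i}(X_i)$, yet by Pareto optimality this improvement in the aggregate cannot be achieved coordinatewise. Here cash additivity does the work: set $c_i=\rho_{\tilde h_i}(X_i)-\rho_{\tilde h_i}(Y_i)-\tfrac1n\big(\sum_j\rho_{\tilde h_j}(X_j)-\sum_j\rho_{\tilde h_j}(Y_j)\big)$ and consider the redistributed allocation $(Y_1+c_1,\dots,Y_n+c_n)$. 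One checks $\sum_i c_i=0$ so this is a valid allocation of $X$, and by cash additivity $\rho_{\tilde h_i}(Y_i+c_i)=\rho_{\tilde h_i}(X_i)-\tfrac1n\big(\sum_j\rho_{\tilde h_j}(X_j)-\sum_j\rho_{\tilde h_j}(Y_j)\big)<\rho_{\tilde h_i}(X_i)$ for every $i$, since the bracketed quantity is strictly positive. This strictly dominates $(X_1,\dots,X_n)$ in every coordinate, contradicting Pareto optimality.

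I expect the main obstacle to be not the algebra but the careful handling of the normalization and sign conventions at the outset. One must verify that the problem hypothesis together with Proposition \ref{prop:1}(i) genuinely rules out the mixed-sign and all-zero cases for the normalized functions, and then justify that flipping signs (which turns reverse-cash-additive agents into cash-additive ones) preserves both the Pareto-optimality relation and the sum-optimality relation, so that the reduction to the cash-additive case is truly without loss of generality. A secondary subtlety is ensuring the even redistribution of the aggregate gain stays within $\X=L^\infty$; since the $c_i$ are constants this is automatic, but it is worth noting that this is exactly why cash additivity (rather than mere translation invariance) is the natural hypothesis and why the statement is phrased in terms of $\rho_{\tilde h_i}$ rather than the original $\rho_{h_i}$.
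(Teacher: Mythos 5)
Your two core arguments are exactly the paper's: the ``sum optimal $\Rightarrow$ Pareto optimal'' direction by summing a putative Pareto improvement (using only finiteness on $L^\infty$ and the fact that positive normalization preserves each agent's preference), and the ``Pareto optimal $\Rightarrow$ sum optimal'' direction by a budget-balanced deterministic transfer spreading the aggregate gain equally across agents --- your $c_i$ is precisely the paper's $c_i - c/n$. Your appeal to Proposition \ref{prop:1}(i) is also legitimate exactly where you need it: in that direction a Pareto-optimal allocation is assumed to exist, so together with $h_i(1)\ne 0$ for some $i$ it forces either $\tilde h_i(1)=1$ for all $i$ or $\tilde h_i(1)=-1$ for all $i$.

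The gap is the claimed reduction of the second case to the first ``by replacing each $\rho_{h_i}$ with $\pm\rho_{h_i}$.'' This cannot be justified, because agents prefer \emph{smaller} values: multiplying every objective by $-1$ reverses all preference orders, so it maps minimal Pareto optima to maximal ones and turns the sum-optimality infimum into a supremum; neither relation is preserved. Concretely, take $n=2$, $\rho_{h_1}=\E$, $\rho_{h_2}=\esssup$ (both cash additive) and $X$ non-constant. Since $\E[X_1]+\esssup(X_2)\ge \E[X]$ with equality iff $X_2$ is constant, the allocation $(X-c,\,c)$ is sum optimal and hence Pareto optimal. But for the flipped pair $(-\E,-\esssup)$, the allocation $\left(\E[X]-c,\; X-\E[X]+c\right)$ weakly dominates it with a strict improvement for agent $2$ (as $\esssup(X)>\E[X]$), so $(X-c,c)$ is no longer Pareto optimal; in fact the flipped problem has \emph{no} Pareto optima, since $\sup\{\E[X_1]+\esssup(X_2)\}=+\infty$. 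So the verification you hoped to supply is impossible. The repair is what the paper does: do not reduce the case $\tilde h_1(1)=\dots=\tilde h_n(1)=-1$ to the cash-additive one, but rerun the identical transfer argument there, noting that translation invariance now reads $\rho_{\tilde h_i}(Y_i+a)=\rho_{\tilde h_i}(Y_i)-a$, so the signs of the transfers simply reverse. With that one change your proof is complete and coincides with the paper's.
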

 Theorem \ref{th:1} states that  Pareto optimality and sum optimality can be translated into each other via normalization whenever the distortion riskmetrics are not location invariant. 
 The picture for location-invariant distortion riskmetrics is, however, drastically different, as we only have one direction. 
 The next statement considers this setting. Its proof is straightforward and thus omitted.
\begin{proposition}\label{pr:1}
Suppose that $h_i(1)=0$ for all $i\in [n]$.
For an allocation $(X_1,\dots,X_n)\in \mathbb A_n(X)$, it holds that (i)$\Rightarrow$(ii):
\begin{enumerate}[(i)]
\item   $\sum_{i=1}^n \lambda_i  \rho_{h_i}(X_i) =\dsquare_{{i=1}}^n(\lambda_i \rho_{h_i})(X)$ for some $(\lambda_1,\dots,\lambda_n)\in (0,\infty)^n$;
\item $(X_1,\dots,X_n)$ is  Pareto optimal in $\mathbb A_n(X)$.
\end{enumerate}
\end{proposition}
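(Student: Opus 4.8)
The plan is to prove the forward implication (i)$\Rightarrow$(ii) by a direct contradiction argument whose only essential ingredient is the strict positivity of the weights $\lambda_1,\dots,\lambda_n$. Assume (i), so that the allocation $(X_1,\dots,X_n)\in\mathbb A_n(X)$ attains the infimum defining $\dsquare_{i=1}^n(\lambda_i\rho_{h_i})(X)$. I would first note that this infimum is automatically finite: each $\rho_{h_i}(X_i)$ is finite on $\X=L^\infty$, so the left-hand side of (i) is a finite sum of reals, and therefore no $\infty-\infty$ ambiguity can arise anywhere in the argument. Observe also that each $\lambda_i\rho_{h_i}=\rho_{\lambda_i h_i}$ is again a distortion riskmetric with $(\lambda_i h_i)(1)=\lambda_i\cdot 0=0$, so the weighted functionals stay within the location-invariant class and the comonotonic/inf-convolution machinery applies verbatim.

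Next I would suppose, toward a contradiction, that $(X_1,\dots,X_n)$ is \emph{not} Pareto optimal in $\mathbb A_n(X)$. By the definition of Pareto optimality there is then a feasible allocation $(Y_1,\dots,Y_n)\in\mathbb A_n(X)$ with $\rho_{h_i}(Y_i)\le\rho_{h_i}(X_i)$ for every $i\in[n]$ and $\rho_{h_j}(Y_j)<\rho_{h_j}(X_j)$ for at least one index $j$. Multiplying each of these $n$ relations by the corresponding weight $\lambda_i>0$ and summing, the strictly positive factor $\lambda_j$ promotes the single strict inequality into a strict inequality of the aggregates, yielding $\sum_{i=1}^n\lambda_i\rho_{h_i}(Y_i)<\sum_{i=1}^n\lambda_i\rho_{h_i}(X_i)$. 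By (i) the right-hand side equals $\dsquare_{i=1}^n(\lambda_i\rho_{h_i})(X)$, whereas the left-hand side is the weighted aggregate welfare of the feasible allocation $(Y_1,\dots,Y_n)$ and hence is bounded below by that same infimum. This produces the chain $\dsquare_{i=1}^n(\lambda_i\rho_{h_i})(X)>\sum_{i=1}^n\lambda_i\rho_{h_i}(Y_i)\ge\dsquare_{i=1}^n(\lambda_i\rho_{h_i})(X)$, a contradiction, so $(X_1,\dots,X_n)$ must be Pareto optimal.

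I do not expect a genuine obstacle here: this is the standard ``scalarization implies Pareto optimality'' fact, and the argument uses nothing about convexity, monotonicity, or the fine structure of distortion riskmetrics beyond the two bookkeeping points above. The only step deserving care is the promotion of a weak-plus-one-strict family of inequalities into a strict inequality of sums, which is exactly where strict positivity of \emph{all} the $\lambda_i$ is indispensable; this one-directional character is also why, in contrast to Theorem \ref{th:1}, the reverse implication (ii)$\Rightarrow$(i) need not hold for location-invariant agents and is therefore not asserted.
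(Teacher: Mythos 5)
Your proof is correct and is precisely the standard scalarization argument the paper has in mind: the paper omits the proof of Proposition \ref{pr:1} as ``straightforward'' (and invokes the same reasoning for the analogous implication in Proposition \ref{pr:2} and the ``if'' part of Theorem \ref{th:1}). Your bookkeeping points (finiteness on $L^\infty$ and the identity $\lambda_i\rho_{h_i}=\rho_{\lambda_i h_i}$) and the strict-weight promotion of the single strict inequality are exactly the steps the paper treats as immediate.
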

The weights $(\lambda_1, \dots, \lambda_n)$ in (i) are often called \emph{Negishi weights} \citep{N60}. One might be interested in the converse statement of Proposition \ref{pr:1}, asking whether the Pareto optimality of $(X_1,\dots,X_n)$ implies the existence of a set of $(\lambda_1,\dots,\lambda_n)\in [0,\infty)^n\setminus \{\mathbf 0\}$ such that $\sum_{i=1}^n \lambda_i  \rho_{h_i}(X_i) =\dsquare_{{i=1}}^n(\lambda_i \rho_{h_i})(X)$. We see in this paper that this claim holds in three cases: when agents have $h_i(1)>0$ or $h_i(1)<0$ (Theorem \ref{th:1});  when all agents are IQD (Theorem \ref{th:IQD}); when they have concave distortion functions (a combination of Propositions \ref{pr:equivalence} and \ref{pr:2}). However, we do not know whether this holds true for general distortion functions with $h_1(1)=\dots=h_n(1)=0$; see also the discussion after Proposition \ref{pr:2}.

In view of Proposition \ref{pr:1}, we say that an allocation $(X_1,\dots,X_n)$ of $X$ is $\boldsymbol \lambda$-optimal if 
 \begin{align}\label{eq:sum-opt-lam}
\dsquare_{{i=1}}^n\rho_{\lambda_i h_i}(X)=\sum_{i=1}^n\rho_{ \lambda_i h_i}(X_i).
\end{align}
where $\boldsymbol \lambda= (\lambda_1,\dots,\lambda_n)$.
Clearly,  $\boldsymbol \lambda$-optimality is equivalent to sum optimality 
when the preferences are specified as  $(\lambda_1\rho_{h_1},\dots,\lambda_n\rho_{h_n})$,
and conversely, sum optimality is $(1,\dots,1)$-optimality. Therefore, we encounter no additional technical complications when solving either of them.

The following result follows from the well-known result of comonotonic improvement of \cite{LM94}\footnote{See \cite{R13} for this result on general spaces.} and the fact that distortion riskmetrics are convex order consistent when the distortion functions $h_i$ are concave (Theorem 3 of \cite{WWW20a}).
A \emph{comonotonic improvement} 
of $(X_1,\dots,X_n)\in \mathbb A_n(X)$ is a random vector $(Y_1,\dots,Y_n)\in \mathbb A^+_n(X)$ such that $Y_i\le_{\rm cx} X_i$ for all $i\in [n]$. 
Such a comonotonic improvement always exists for any $(X_1,\dots,X_n)$.
\begin{proposition}\label{pr:equivalence}
Suppose that $h_1,\dots,h_n$ are concave. It holds that
$\dsquare_{{i=1}}^n \rho_{h_i}=\dboxplus_{{i=1}}^n \rho_{h_i}$.
 Moreover,  for any $X\in \X$, if there exists a  Pareto-optimal allocation in $\mathbb A_n(X)$, then there exists a comonotonic Pareto-optimal allocation in $\mathbb A_n(X)$. 
\end{proposition}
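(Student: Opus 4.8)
The plan is to run the two cited ingredients in tandem: the comonotonic improvement theorem of \cite{LM94} (available on general spaces by \cite{R13}), which guarantees that every allocation admits a \emph{comonotonic} allocation of the same total that dominates it coordinatewise in convex order, together with the convex order consistency of each $\rho_{h_i}$ when $h_i$ is concave (Theorem 3 of \cite{WWW20a}). Throughout, the comonotonic improvement $(Y_1,\dots,Y_n)$ is furnished directly in $\mathbb A^+_n(X)$, so feasibility is automatic; one only needs to combine convex order with the consistency of each functional.

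For the identity $\dsquare_{i=1}^n \rho_{h_i}=\dboxplus_{i=1}^n \rho_{h_i}$, I would first record the trivial inequality $\dsquare_{i=1}^n \rho_{h_i}(X)\le \dboxplus_{i=1}^n \rho_{h_i}(X)$, which holds because the infimum defining $\dboxplus$ ranges over the smaller set $\mathbb A^+_n(X)\subseteq \mathbb A_n(X)$. For the reverse inequality, I would fix an arbitrary $(X_1,\dots,X_n)\in \mathbb A_n(X)$, apply comonotonic improvement to get $(Y_1,\dots,Y_n)\in \mathbb A^+_n(X)$ with $Y_i\le_{\rm cx} X_i$ for each $i$, and then invoke convex order consistency to obtain $\rho_{h_i}(Y_i)\le \rho_{h_i}(X_i)$. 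Summing over $i$ and using feasibility of $(Y_1,\dots,Y_n)$ gives $\dboxplus_{i=1}^n \rho_{h_i}(X)\le \sum_{i=1}^n \rho_{h_i}(Y_i)\le \sum_{i=1}^n \rho_{h_i}(X_i)$; taking the infimum over all $(X_1,\dots,X_n)\in \mathbb A_n(X)$ yields $\dboxplus_{i=1}^n \rho_{h_i}(X)\le \dsquare_{i=1}^n \rho_{h_i}(X)$, hence equality.

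For the second assertion, I would begin with a Pareto-optimal allocation $(X_1,\dots,X_n)\in \mathbb A_n(X)$ and again take a comonotonic improvement $(Y_1,\dots,Y_n)\in \mathbb A^+_n(X)$ with $Y_i\le_{\rm cx} X_i$, so that $\rho_{h_i}(Y_i)\le \rho_{h_i}(X_i)$ for all $i$ by consistency. Because $(X_1,\dots,X_n)$ is Pareto optimal, these one-sided inequalities force equalities $\rho_{h_i}(Y_i)=\rho_{h_i}(X_i)$ for every $i$. It then remains to check that $(Y_1,\dots,Y_n)$ is itself Pareto optimal: if some $(Z_1,\dots,Z_n)\in \mathbb A_n(X)$ satisfied $\rho_{h_i}(Z_i)\le \rho_{h_i}(Y_i)$ for all $i$, then $\rho_{h_i}(Z_i)\le \rho_{h_i}(X_i)$ for all $i$, and Pareto optimality of $(X_1,\dots,X_n)$ would give $\rho_{h_i}(Z_i)=\rho_{h_i}(X_i)=\rho_{h_i}(Y_i)$ for all $i$. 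Thus $(Y_1,\dots,Y_n)$ is a comonotonic Pareto-optimal allocation.

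The argument is short precisely because both heavy tools are imported. The main point requiring care, and the only genuine obstacle, is the logical step in the second part: convex order consistency delivers merely the inequalities $\rho_{h_i}(Y_i)\le \rho_{h_i}(X_i)$, and it is the Pareto optimality of the \emph{original} allocation that upgrades these to equalities and then transfers optimality to the comonotonic improvement. A secondary check is that the improved allocation remains admissible in $\mathcal X^n=(L^\infty)^n$, but this is already built into the statement that comonotonic improvement produces an element of $\mathbb A^+_n(X)$.
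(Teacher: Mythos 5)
Your proof is correct and follows exactly the route the paper intends: the paper omits a formal proof of this proposition, presenting it as an immediate consequence of the comonotonic improvement theorem of Landsberger--Meilijson (on general spaces via R\"uschendorf) combined with convex order consistency of $\rho_{h_i}$ for concave $h_i$ (Theorem 3 of Wang--Wei--Willmot), and your write-up supplies precisely those steps, including the correct transfer of Pareto optimality from the original allocation to its comonotonic improvement.
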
  

Next, we prove that if $h_1,\dots,h_n$ are strictly concave, then the set of optimal allocations in  $\mathbb A_n(X)$ is exactly that of those in  $\mathbb A^+_n(X)$. This is because comonotonic improvements lead to a strict increase in welfare when the probability distortions $h_i$ are strictly concave. We state this result formally in Corollary \ref{cor:rw1} as a consequence of the following ancillary lemma:

\begin{lemma}\label{lem:rw1}
For two random variables $X,Y\in \X$, the following are equivalent:
\begin{enumerate}[(i)]
\item $X\laweq Y$;
\item  $\rho_h(X) = \rho_h(Y)$ for all concave $h\in \H^{\rm BV}$;
\item  $\rho_h(X)\le \rho_h(Y)$ for all concave $h\in \H^{\rm BV}$, in which the 
equality holds for a strictly concave $h$;
\item $X\le_{\rm cx} Y$ and $\rho_h(X)=\rho_h(Y)$ for a strictly concave $h \in \H ^{\rm BV}$.  
\end{enumerate} 
\end{lemma}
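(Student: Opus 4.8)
The plan is to prove the chain of implications (i) $\Rightarrow$ (ii) $\Rightarrow$ (iii) $\Rightarrow$ (iv) $\Rightarrow$ (i), which establishes the equivalence of all four statements. The first two links are immediate: (i) $\Rightarrow$ (ii) is nothing but the law invariance of every $\rho_h$, and (ii) $\Rightarrow$ (iii) follows because equality for all concave $h$ in particular yields the weak inequality for all concave $h$, while the required equality for \emph{some} strictly concave $h$ is automatic (take $h_{\mathrm{GD}}(t)=t-t^2$). The substance is therefore concentrated in the last two implications.

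For (iii) $\Rightarrow$ (iv), the equality clause of (iv) is literally the second half of (iii), so it remains to deduce $X\le_{\rm cx}Y$ from the inequality $\rho_h(X)\le \rho_h(Y)$ holding for every concave $h\in\H^{\rm BV}$. I would extract this from two families of test distortions. Applying the inequality to the affine distortions $h(t)=t$ and $h(t)=-t$ (both concave and in $\H^{\rm BV}$, with $\rho_{t}=\E$ and $\rho_{-t}=-\E$) gives $\E[X]\le\E[Y]$ and $\E[X]\ge\E[Y]$, hence $\E[X]=\E[Y]$. Applying it to the increasing concave Expected Shortfall distortions $h_p(t)=\min(t/p,1)$ gives $\ES_p(X)\le\ES_p(Y)$ for every $p\in(0,1)$. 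The standard characterization of the convex order by equality of means together with dominance of all Expected Shortfalls then yields $X\le_{\rm cx}Y$.

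The main obstacle is (iv) $\Rightarrow$ (i): upgrading a single equality $\rho_h(X)=\rho_h(Y)$ for one strictly concave $h$, under the assumption $X\le_{\rm cx}Y$, to equality in distribution. Here I would use the left-quantile representation of Lemma \ref{lem:qr}(ii) (legitimate since a concave $h$ is continuous on $(0,1)$) and integrate by parts. Writing $D(t)=\int_0^t\!\big(Q_s^-(Y)-Q_s^-(X)\big)\,\d s$, the convex order gives $D\ge 0$ on $[0,1]$ with $D(0)=D(1)=0$ (the right endpoint uses equality of means), and integration by parts recasts the equality as
\begin{equation*}
0=\rho_h(Y)-\rho_h(X)=\int_{(0,1)} D(t)\,\d\nu(t),
\end{equation*}
where $\nu=\d(-h')\ge 0$ is the curvature measure of the concave function $h$. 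Strict concavity of $h$ is precisely the statement that $\nu$ has full support on $(0,1)$; since $D$ is continuous and nonnegative, the vanishing integral forces $D\equiv 0$. Differentiating $D$ then yields $Q_t^-(X)=Q_t^-(Y)$ for almost every $t$, whence $X\laweq Y$.

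The delicate points I anticipate are confined to the last step. First, the boundary terms in the integration by parts must be shown to vanish; because $X,Y\in L^\infty$ the integrand $Q_s^-(Y)-Q_s^-(X)$ is bounded, so $D$ is Lipschitz with $D(t)=O(t)$ near $0$, and for a decreasing integrable $h'$ one has $t\,h'(t)\to 0$, killing the endpoint contributions. Second, care is needed in passing from ``$\ES_p$ dominance at all levels plus equal means'' to the convex order, for which I would invoke the standard stop-loss/Expected-Shortfall characterization rather than reprove it. Everything else is routine, and the argument uses no continuity assumption on the distortion functions beyond what concavity already guarantees on $(0,1)$.
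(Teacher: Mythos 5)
Your architecture (i)$\Rightarrow$(ii)$\Rightarrow$(iii)$\Rightarrow$(iv)$\Rightarrow$(i) matches the paper's, and the first three implications agree in substance: the paper dismisses them as straightforward, citing Theorem 2 of \cite{WWW20b} for ``$X\le_{\rm cx}Y$ iff $\rho_h(X)\le\rho_h(Y)$ for all concave $h$'', while you rederive the needed direction from the test distortions $h(t)=\pm t$ and $h_p(t)=\min(t/p,1)$ plus the standard equal-means/Expected-Shortfall characterization of convex order; both are legitimate. The genuine difference is in (iv)$\Rightarrow$(i). The paper argues by contradiction: assuming $X\not\laweq Y$, it constructs a local averaging $Y_{t,\epsilon}$ of the quantile function of $Y$ on a small interval $[t-\epsilon,t+\epsilon]\subseteq(0,1)$ where the integrated quantile $\mu_Y$ is not linear, shows $X\le_{\rm cx}Y_{t,\epsilon}\le_{\rm cx}Y$, and then uses the Fr\'echet--Hoeffding (Chebyshev) inequality with strict concavity to force the strict inequality $\rho_h(Y_{t,\epsilon})<\rho_h(Y)$, contradicting the sandwiched equality. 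Your argument is direct: $\rho_h(Y)-\rho_h(X)=\int_{(0,1)}D\,\d\nu$ with $D$ the integrated quantile difference and $\nu=\d(-h')$ the curvature measure, and strict concavity enters exactly as full support of $\nu$, forcing $D\equiv0$. Your route is more economical (no auxiliary random variable, no Chebyshev inequality) and makes the role of strict concavity transparent; what the paper's localization buys is that every integral lives on a compact subinterval of $(0,1)$, where a concave $h$ is automatically Lipschitz, so the endpoint behaviour of $h$ never needs to be discussed.

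That endpoint behaviour is the one place where your write-up has a gap. A (even strictly) concave $h\in\H^{\rm BV}$ need not be continuous at $t=0$ or $t=1$: for instance $h(t)=1+t-t^2$ for $t>0$ with $h(0)=0$, or $h(t)=t-t^2$ for $t<1$ with $h(1)=-1$, are strictly concave members of $\H^{\rm BV}$. For such $h$ your appeal to Lemma \ref{lem:qr}(ii) is not justified (left-continuity fails at $t=1$), and the formula $\int_0^1 Q_t^-\,\d h$ is not even meaningful there, since an atom of $\d h$ at $t=1$ would be weighted by $Q_1^-=-\infty$; the correct representation carries separate endpoint terms $(h(0+)-h(0))\esssup$ and $(h(1)-h(1-))\essinf$. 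So your closing claim that ``the argument uses no continuity assumption beyond what concavity already guarantees on $(0,1)$'' overstates the case: concavity guarantees nothing at the endpoints, which is precisely where Lemma \ref{lem:qr}(ii) needs a hypothesis. The gap is patchable: split off the endpoint atoms of $\d h$; under $X\le_{\rm cx}Y$ one has $\esssup X\le\esssup Y$ and $\essinf X\ge\essinf Y$, while a concave $h$ can only jump up at $0$ and down at $1$, so both endpoint contributions to $\rho_h(Y)-\rho_h(X)$ are nonnegative. The equality $\rho_h(X)=\rho_h(Y)$ then still forces $\int_{(0,1)}D\,\d\nu=0$, and your full-support argument finishes the proof. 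With that patch (and your boundary estimate $t\,h'(t)\to0$, which is correct for a decreasing integrable $h'$), the proof is complete; also note that your sign claim $D\ge0$ is the right one under the paper's large-to-small quantile convention.
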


\begin{corollary}\label{cor:rw1}
If $X\le_{\rm cx} Y$ and  $X\not\laweq Y$, then $\rho_h(X)<\rho_h(Y)$ for any strictly concave $h$.
\end{corollary}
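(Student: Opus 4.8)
The plan is to read off Corollary \ref{cor:rw1} as an immediate consequence of Lemma \ref{lem:rw1}, specifically as the contrapositive of the implication (iv)$\,\Rightarrow\,$(i). Fix an arbitrary strictly concave $h\in\H^{\rm BV}$. Since a strictly concave function is in particular concave, $\rho_h$ is convex order consistent by property~6 of the preliminaries (equivalently, by Theorem~3 of \cite{WWW20a}). Hence the hypothesis $X\le_{\rm cx} Y$ already delivers the weak inequality $\rho_h(X)\le\rho_h(Y)$, and the entire content of the corollary reduces to excluding the equality case.

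To rule out equality, I would argue by contradiction: suppose $\rho_h(X)=\rho_h(Y)$ for the fixed strictly concave $h$. Then the pair $(X,Y)$ satisfies exactly condition (iv) of Lemma \ref{lem:rw1}, namely $X\le_{\rm cx} Y$ together with $\rho_h(X)=\rho_h(Y)$ for a strictly concave $h\in\H^{\rm BV}$. Applying the equivalence (iv)$\,\Rightarrow\,$(i) of that lemma forces $X\laweq Y$, which contradicts the standing hypothesis $X\not\laweq Y$. Consequently $\rho_h(X)\ne\rho_h(Y)$, and combined with $\rho_h(X)\le\rho_h(Y)$ from the previous paragraph this yields the strict inequality $\rho_h(X)<\rho_h(Y)$. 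As $h$ was an arbitrary strictly concave distortion function, the claim holds for all such $h$.

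I do not expect any genuine obstacle here: all the analytic work is already carried out in Lemma \ref{lem:rw1}, whose substantive direction is (iv)$\,\Rightarrow\,$(i). The only point requiring a moment of care is the elementary observation that, under $X\le_{\rm cx} Y$ and concavity of $h$, ``the inequality fails to be strict'' is the same as ``equality holds'', which is precisely what lets one feed the situation into condition (iv). Were the lemma not available, one would have to reprove this direction directly, but since it is established above, the corollary follows in a few lines.
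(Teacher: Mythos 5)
Your proposal is correct and is exactly the argument the paper intends: the paper states Corollary \ref{cor:rw1} as an immediate consequence of Lemma \ref{lem:rw1} (with no separate proof), obtained precisely by combining the weak inequality $\rho_h(X)\le\rho_h(Y)$ from concavity with the implication (iv)$\,\Rightarrow\,$(i) to exclude equality. Nothing is missing.
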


\begin{remark}
The equivalence in Lemma \ref{lem:rw1} holds true for any random variables $X,Y$ with finite mean, by requiring  that $\rho_h(X)$ and $\rho_h(Y)$ are finite for  the strictly concave function $h$ in (iii) and (iv). This follows by noting that we did not use the boundedness  of $X$ and $Y$ in the proof. 
\end{remark}

\begin{proposition}\label{pr:como}
Suppose that $h_1,\dots,h_n$ are strictly concave and $X\in \X$.
\begin{enumerate}[(i)]
\item Every Pareto-optimal allocation in $\mathbb A_n(X)$ is comonotonic.
\item If for each $i\in [n]$, $h_i= a _i h_1$ for some $a_i>0$  then an allocation   is Pareto optimal  in   $\mathbb A_n(X)$ if and only if it is comonotonic. 
\end{enumerate} 
\end{proposition}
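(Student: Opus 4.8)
The plan is to combine the comonotonic improvement of \cite{LM94} (recalled just before Proposition \ref{pr:equivalence}, where it is noted that such an improvement always exists) with the strict convex-order monotonicity of strictly concave distortion riskmetrics from Corollary \ref{cor:rw1}. For part (i), I would take a Pareto-optimal $(X_1,\dots,X_n)\in\mathbb A_n(X)$ and pass to a comonotonic improvement $(Y_1,\dots,Y_n)\in\mathbb A_n^+(X)$ with $Y_i\le_{\rm cx}X_i$ for all $i$. Since each $h_i$ is concave, $\rho_{h_i}$ is convex order consistent, so $\rho_{h_i}(Y_i)\le\rho_{h_i}(X_i)$ for every $i$; Pareto optimality then forces $\rho_{h_i}(Y_i)=\rho_{h_i}(X_i)$ for all $i$. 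Strict concavity upgrades this conclusion: by the contrapositive of Corollary \ref{cor:rw1}, the combination $Y_i\le_{\rm cx}X_i$ and $\rho_{h_i}(Y_i)=\rho_{h_i}(X_i)$ yields $Y_i\laweq X_i$ for each $i$. The remaining—and genuinely substantive—task is to show that a comonotonic improvement that leaves every marginal law unchanged can only arise from an already comonotonic allocation, i.e. that $(X_1,\dots,X_n)$ is itself comonotonic.

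I would settle this last step with a second-moment computation, which is legitimate since $\X=L^\infty$. From $Y_i\laweq X_i$ we get $\var(X_i)=\var(Y_i)$, and since $\sum_iX_i=\sum_iY_i=X$, matching the two decompositions of $\var(X)$ gives $\sum_{i<j}\cov(X_i,X_j)=\sum_{i<j}\cov(Y_i,Y_j)$. Because the $Y_i$ are comonotonic and share the marginals of the $X_i$, each $\cov(Y_i,Y_j)$ realizes the maximal covariance compatible with those marginals (Hoeffding's identity together with the upper Fr\'echet bound), so $\cov(X_i,X_j)\le\cov(Y_i,Y_j)$ termwise; the equality of the two sums then forces equality in every term. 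Equality in the Fr\'echet bound means each pair $(X_i,X_j)$ is comonotonic, and pairwise comonotonicity implies joint comonotonicity: two support points of $(X_1,\dots,X_n)$ that were incomparable would be out of order in some coordinate pair $(i,j)$, contradicting comonotonicity of $(X_i,X_j)$. Hence $(X_1,\dots,X_n)$ is comonotonic, proving (i). I expect this marginal-preservation argument to be the main obstacle, since it is the only place where the dependence structure, and not merely the one-dimensional laws, has to be pinned down.

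For part (ii), the ``only if'' direction is immediate: when $h_i=a_ih_1$ with $a_i>0$ and $h_1$ strictly concave, every $h_i$ is strictly concave, so part (i) gives that Pareto optimality implies comonotonicity. For ``if'', let $(X_1,\dots,X_n)$ be comonotonic and suppose $(Y_1,\dots,Y_n)\in\mathbb A_n(X)$ satisfies $\rho_{h_i}(Y_i)\le\rho_{h_i}(X_i)$ for all $i$; dividing by $a_i>0$ this reads $\rho_{h_1}(Y_i)\le\rho_{h_1}(X_i)$. Comonotonic additivity gives $\sum_i\rho_{h_1}(X_i)=\rho_{h_1}(X)$, while subadditivity of $\rho_{h_1}$ (equivalent to concavity of $h_1$, by the properties recalled in Section \ref{sec:2}) gives $\sum_i\rho_{h_1}(Y_i)\ge\rho_{h_1}(\sum_iY_i)=\rho_{h_1}(X)$. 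Chaining these, $\sum_i\rho_{h_1}(X_i)=\rho_{h_1}(X)\le\sum_i\rho_{h_1}(Y_i)\le\sum_i\rho_{h_1}(X_i)$, so every inequality is an equality and $\rho_{h_1}(Y_i)=\rho_{h_1}(X_i)$ for all $i$. Thus no agent is strictly improved, so $(X_1,\dots,X_n)$ is Pareto optimal, completing (ii).
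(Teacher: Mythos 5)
Your proposal is correct, and part (ii) coincides with the paper's argument almost verbatim (normalize the weights, use comonotonic additivity for the candidate allocation and subadditivity for any competitor, then chain the inequalities). For part (i) you follow the same skeleton as the paper—pass to a comonotonic improvement, use convex order consistency plus Pareto optimality to force $\rho_{h_i}(Y_i)=\rho_{h_i}(X_i)$, then invoke Corollary \ref{cor:rw1} to get $Y_i\laweq X_i$ for all $i$—but you diverge at the final step. The paper disposes of (i) in one line, asserting that ``comonotonic improvements strictly improve welfare,'' i.e.\ it takes as known (from the comonotonic improvement literature, \citealp{LM94, CDG12}) that a non-comonotonic allocation admits an improvement with $Y_i\le_{\rm cx}X_i$ and $Y_i\not\laweq X_i$ for some $i$. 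You instead prove this strictness property from scratch: since $Y_i\laweq X_i$ gives $\var(Y_i)=\var(X_i)$ and both allocations sum to $X$, the cross-covariances must have equal sums; Hoeffding's identity and the upper Fr\'echet bound make each $\cov(X_i,X_j)\le\cov(Y_i,Y_j)$, so equality holds termwise, forcing each pair $(X_i,X_j)$ to be comonotonic, and pairwise comonotonicity yields joint comonotonicity. This covariance argument is sound on $\X=L^\infty$ (second moments exist, and equality in Hoeffding's formula pins the joint law to the upper Fr\'echet bound), so your write-up is self-contained and elementary where the paper leans on cited results; the price is length, and the restriction to square-integrable settings, whereas the improvement-based strictness used by the paper is not tied to second moments.
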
 

Both Proposition \ref{pr:equivalence}  and \ref{pr:como}  can easily be extended to $L^p$ for $p\ge 1$ instead of $\X =L^\infty$ provided that every $\rho_{h_i}$ is finite when defined on $L^p$.\footnote{Conditions for the finiteness of $\rho_h$ on $L^p$ are provided in Proposition 1 of \cite{WWW20a}.}

\subsection{IQD agents and negatively dependent optimal allocations
}\label{subsec:Unc_IQD}

We characterize the sum-optimal allocations on general spaces when agents evaluate their risk with the IQD measure of variability. We start with the problem of sharing risk among $\mathrm{IQD}$ agents only. 
In this setting, agent $i\in[n]$ has $\mathrm{IQD}_{\alpha_i}$ as their preference where $\alpha_i \in[0,1/2)$. 

For a random variable $X$ on the probability space $(\Omega,\mathcal F,\p)$, 
we   define tail events as in \cite{WZ21}.
For $\beta \in [0,1]$, we say that an event $A\in \mathcal{F}$ is a \textit{right (resp.~left) $\beta$-tail event} of $X$ if $\P(A)=\beta$ and $X(\omega)\geq X(\omega')$ (resp.~$X(\omega)\le   X(\omega')$) holds for a.s.~all $\omega \in A$ and $\omega'\in A^c$, where $A^c$ stands for the complement of $A$.  One could also follow the definition of  low tail-event in \cite{JST08} to define the $\beta$-tail event; that is, $A$ a right (resp.~left) $\beta$-tail event if $\p(A)=\beta$ and $\essinf_A(X)\ge \esssup_{A^c}(X)$ (resp.~$\esssup_A(X)\le \essinf_{A^c}(X)$), where $\essinf_A(X) = \sup\{m \in \R : \p(X \ge m \mid A) = 1\}$ and $\esssup_A(X) = \inf\{m \in \R : \p(X \le m \mid B) = 1\}$. The two definitions are equivalent.

\begin{theorem}\label{th:IQD}
For $X\in \X$ and the IQD risk sharing problem in $\mathbb A_n(X)$ with $\alpha_1,\dots,\alpha_n\in [0,1/2)$, let $\alpha =\sum_{i=1}^n \alpha_i$.\begin{enumerate}[(i)]
\item An allocation of $X$ 
 is Pareto optimal if and only if  
it is sum optimal.
\item For  $\lambda_1,\dots,\lambda_n \ge 0$ and  $\lambda=\bigwedge_{i=1}^n \lambda_i$,
\begin{equation}
\label{eq:IQD} \dsquare_{i=1}^n (\lambda_i \mathrm{IQD}_{\alpha_i}) =  \left(\bigwedge_{i=1}^n \lambda_i \right) \mathrm{IQD}_{\sum_{i=1}^n \alpha_i}=  \lambda \mathrm{IQD}_{ \alpha}. \end{equation}
In particular,   $
\dsquare_{i=1}^n \mathrm{IQD}_{\alpha_i} = \mathrm{IQD}_\alpha
 $. 
\item  A class of Pareto-optimal allocations  of $X\in\X$ for IQD agents is given by
\begin{align} \label{eq:IQD_allocation}
    X_i=  (X-c)\id_{A_i\cup B_i}+ a_i (X-c)\left(1-\id_{A\cup B}\right)+c_i,~~~~i\in [n],
\end{align}
where, by setting $\beta =\alpha \wedge (1/2)$,
\begin{enumerate}[(a)]
\item  $\left \{A_i\right\}_{i=1}^n$ and $\left \{B_i\right\}_{i=1}^n$  are partitions of  a right $\beta$-tail event $A$ and a left $\beta$-tail event $B$ of $X$ with $A,B$ disjoint, respectively,  satisfying $\P(A_i)=\P(B_i)=\alpha_i \beta/\alpha$, $i \in [n]$; 
\item $a_i\ge 0$ for $i\in[n]$ and  $\sum_{i=1}^na_i=1$;
\item $c\in [Q^-_{1/2}(X),Q^+_{1/2}(X)]$ and $\sum_{i=1}^n c_i=c$.
 \end{enumerate} 
 \end{enumerate} 
\end{theorem}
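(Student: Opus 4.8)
The plan is to prove the three parts in the order (ii), (iii), (i), because the explicit family in \eqref{eq:IQD_allocation} is what drives both the attainment in (ii) and the hard direction of (i). Throughout I would exploit the two structural facts that make IQD special: the decomposition $\mathrm{IQD}_\alpha(Y)=Q^-_\alpha(Y)+Q^-_\alpha(-Y)$ from \eqref{eq:IQD-1}, and the nonnegativity $\mathrm{IQD}_{\alpha_i}\ge 0$ for $\alpha_i<1/2$ (which holds since $Q^-_{\alpha_i}\ge Q^+_{1-\alpha_i}$ when $\alpha_i<1-\alpha_i$). The workhorse is the superadditivity of the upper quantile: for $Y=\sum_i Y_i$ and $\sum_i\alpha_i=\alpha\le 1$, one has $\sum_i Q^-_{\alpha_i}(Y_i)\ge Q^-_\alpha(Y)$. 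I would prove this by setting $q_i=Q^-_{\alpha_i}(Y_i)$, noting $\P(Y_i>q_i)\le\alpha_i$, so that $\{Y>\sum_i q_i\}\subseteq\bigcup_i\{Y_i>q_i\}$ forces $\P(Y>\sum_i q_i)\le\alpha$, whence $\sum_i q_i\ge Q^-_\alpha(Y)$.

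For the lower bound in (ii) I would split at $\alpha=1/2$. When $\alpha\le 1/2$ (so $\beta=\alpha$), using $\lambda_i\ge\lambda$ and $\mathrm{IQD}_{\alpha_i}\ge0$ reduces the weights to their minimum, $\sum_i\lambda_i\mathrm{IQD}_{\alpha_i}(X_i)\ge\lambda\sum_i\mathrm{IQD}_{\alpha_i}(X_i)$; then applying the quantile superadditivity to $X$ and to $-X$ and adding via \eqref{eq:IQD-1} gives $\sum_i\mathrm{IQD}_{\alpha_i}(X_i)\ge Q^-_\alpha(X)+Q^-_\alpha(-X)=\mathrm{IQD}_\alpha(X)$. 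When $\alpha>1/2$, the target $\mathrm{IQD}_\alpha$ is defined to be $0$ and nonnegativity alone gives $\sum_i\lambda_i\mathrm{IQD}_{\alpha_i}(X_i)\ge 0$. This yields $\dsquare_{i=1}^n(\lambda_i\mathrm{IQD}_{\alpha_i})\ge\lambda\,\mathrm{IQD}_\alpha$.

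For (iii) and the matching upper bound I would compute $\mathrm{IQD}_{\alpha_i}(X_i)$ for the explicit allocation directly. With $\beta=\alpha$, the set $A_i$ has probability exactly $\alpha_i$ and carries the largest values of $X_i$ (on $A_j,B_j$ with $j\ne i$ one has $X_i=c_i$, and on the middle block $X_i=a_i(X-c)+c_i$ with $a_i\le 1$), so $A_i$ is the top $\alpha_i$-event; excluding it, the remaining values are maximized on the scaled middle block, giving $Q^-_{\alpha_i}(X_i)=a_i(Q^-_\alpha(X)-c)+c_i$, and symmetrically $Q^+_{1-\alpha_i}(X_i)=a_i(Q^+_{1-\alpha}(X)-c)+c_i$. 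Subtracting produces the clean identity $\mathrm{IQD}_{\alpha_i}(X_i)=a_i\,\mathrm{IQD}_\alpha(X)$. Since $\sum_i a_i=1$, the total equals $\mathrm{IQD}_\alpha(X)$, matching the lower bound, so the allocation is sum optimal; choosing $a_k=1$ for an index $k$ with $\lambda_k=\lambda$ attains the weighted infimum $\lambda\,\mathrm{IQD}_\alpha(X)$, completing (ii). As sum optimality implies Pareto optimality (the general direction), these allocations are Pareto optimal, giving (iii). The degenerate case $\alpha>1/2$ ($\beta=1/2$, empty middle) is treated separately, both sides being $0$.

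Finally, for (i): sum optimal $\Rightarrow$ Pareto optimal is the general argument (a feasible allocation weakly dominating a sum-optimal one forces equality in every coordinate through the welfare sums). For the converse I would use an improvement argument built on the identity from (iii). Given any $(X_1,\dots,X_n)$ with $T:=\sum_i\mathrm{IQD}_{\alpha_i}(X_i)>0$, set $a_i=\mathrm{IQD}_{\alpha_i}(X_i)/T\ge 0$, so $\sum_i a_i=1$, and form the corresponding member $(X_1^*,\dots,X_n^*)$ of \eqref{eq:IQD_allocation}. Then $\mathrm{IQD}_{\alpha_i}(X_i^*)=a_i\mathrm{IQD}_\alpha(X)=\mathrm{IQD}_{\alpha_i}(X_i)\,\mathrm{IQD}_\alpha(X)/T\le\mathrm{IQD}_{\alpha_i}(X_i)$ for each $i$, because $\mathrm{IQD}_\alpha(X)\le T$ by the lower bound. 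Thus $(X_i^*)$ weakly dominates $(X_i)$ and is sum optimal; if $(X_i)$ is Pareto optimal the domination is an equality in every coordinate, so $\sum_i\mathrm{IQD}_{\alpha_i}(X_i)=\mathrm{IQD}_\alpha(X)$, i.e.\ $(X_i)$ is sum optimal (the case $T=0$ is immediate from nonnegativity). The step I expect to be the main obstacle is the quantile bookkeeping in (iii): one must verify rigorously that $A_i$ is the top $\alpha_i$-event and that the middle block pins down the relevant quantiles, while handling atoms of $X$, the placement $c\in[Q^-_{1/2}(X),Q^+_{1/2}(X)]$, and the boundary $\alpha=1/2$ through the $Q^\pm$ conventions. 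Once the identity $\mathrm{IQD}_{\alpha_i}(X_i)=a_i\,\mathrm{IQD}_\alpha(X)$ is secured, the remaining arguments follow quickly.
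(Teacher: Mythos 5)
Your proposal is correct and follows essentially the same route as the paper's proof: the lower bound via the decomposition \eqref{eq:IQD-1} plus weight reduction by nonnegativity and the quantile superadditivity $\sum_i Q^-_{\alpha_i}(Y_i)\ge Q^-_\alpha(Y)$ (which you prove by a union bound where the paper cites Corollary 2 of \cite{ELW18}), attainment via the explicit allocation \eqref{eq:IQD_allocation}, and the Pareto-to-sum direction by choosing $a_i$ proportional to the given allocation's IQD values. The only substantive remark is that the ``quantile bookkeeping'' you flag as the main obstacle in (iii) can be bypassed entirely: it suffices to prove the one-sided bound $\mathrm{IQD}_{\alpha_i}(X_i)\le a_i\,\mathrm{IQD}_\alpha(X)$ (the paper does this via the tail-event property, Lemma A.3 of \cite{WZ21}) and then force equality by sandwiching against your own aggregate lower bound, since both sides sum to $\mathrm{IQD}_\alpha(X)$; also, the boundary case $\alpha=1/2$ should be folded into the trivial nonnegativity case, because \eqref{eq:IQD-1} and hence your superadditivity chain are only valid for $\alpha<1/2$.
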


\begin{remark}\label{rem:IQD}
The allocation \eqref{eq:IQD_allocation} satisfies 
  $\sum_{i=1}^n  \lambda_i \mathrm{IQD}_{\alpha_i}(X_i) =\dsquare_{i=1}^n (\lambda_i \mathrm{IQD}_{\alpha_i})(X)$ by setting
 $a_i=0$ for   $i\in[n]   $ such that $ \lambda_i>\lambda$.
 \end{remark}

 The surprising ingredient of Theorem \ref{th:IQD}, part (i) is that, for IQD agents,
 sum optimality is indeed equivalent to Pareto optimality, which is the case for cash-additive distortion riskmetrics (Theorem \ref{th:1}).
 However,   for general agents with $h_1(1)=\dots=h_n(1)=0$,
 Pareto optimality is not necessarily equivalent to sum optimality, because different choices of $(\lambda_1,\dots,\lambda_n)$ in Proposition \ref{pr:1} lead to
 different Pareto-optimal allocations, which are not necessarily sum optimal (see Proposition \ref{pr:1} as well as Section \ref{sec:5}).
 As a consequence of this result, 
  Pareto-optimal allocations for IQD agents are precisely 
those  for agents using the mean-risk preferences with risk measured by IQD,
 $$
 \rho_{h_i}(X_i)= \E[X_i] +\mathrm{IQD}_{\alpha_i}(X_i),~~~i\in[n],
 $$ 
 because both solve the same sum optimality problem by noting that $\sum_{i=1}^n \E[X_i] =\E[X] $ for any allocation $(X_1,\dots,X_n)$ of $X$.

In part (ii) of Theorem \ref{th:IQD}, we see that 
 the inf-convolution of several IQD is an IQD.
Related to this observation, \cite{ELW18} showed that the inf-convolution of several quantiles is again a quantile.

\begin{figure}[t]
\begin{center}
\caption{A Pareto-optimal allocation in \eqref{eq:IQD_allocation}, where 
the shaded area represents  the allocation to agent $1$ minus $c_1$, that is, $X_1-c_1= (X-c)\id_{A_i\cup B_i}+ a_i (X-c) \id_{(A\cup B^c)}$
} \label{Fig:Pairwise_counter-monotonicity}
\vspace{-0cm}
\begin{tikzpicture}[scale=1]

\draw [black] (0,0) -- (10,0);
   \node at (0.5,0.4)   { \footnotesize $B_1$}  ;         
      \node at (1.5,0.4)   {\footnotesize  $\dots$}  ;       
      \node at (1.5,1)   { \footnotesize $B$}  ;    
      \node at (2.5,0.4)   { \footnotesize $B_n$}  ;      
      \node at (9.5,-0.4)    {\footnotesize $A_1$}; 
      \node at (8.5,-0.4)    {\footnotesize $\dots$};
      \node at (8.5,-1)   { \footnotesize $A$}  ;      
      \node at (7.5,-0.4)    {\footnotesize $A_n$}; 
\node at (10.2,-0.3) { $\omega$};
\node[left] at (9.8,4.6) {  $X(\omega)$}; 

\fill [top color=gray!40, bottom color=gray!20] (3,-0.4)--(5,0)--(3,0)--(3,-0.4);
\fill [top color=gray!40, bottom color=gray!20] (7,0.4)--(5,0)--(7,0)--(7,0.4);

\fill [top color=gray!40, bottom color=gray!20] (0,0)--(1,0)--(1,-2.55)-- (0.9,-2.65) -- (0.8,-2.75)  -- (0.7,-2.86) -- (0,-4);

\fill [top color=gray!40, bottom color=gray!20] (10,0)--(9,0)--(9,2.9)-- (9.1,3.02) -- (9.2,3.20)  -- (9.3,3.36) -- (9.6,3.98)  -- (10,5);

\draw [black] (0,-4) .. controls (1,-2.3) .. (4,-0.6);
\draw [black] (6,0.6) .. controls (9,2.5) .. (10,5);
 \path[<-, draw,  thick, dashed] (6.5,0.12)
      to[out=270, in=180](7.5,-4) 
      node[right] { $a_1 (X-c)\id_{(A\cup B)^c}$};
 \path[<-, draw,  thick, dashed] (3.5,-0.12)
      to[out=270, in=180](7.5,-4)  ;

 \path[<-, draw,  thick, dashed] (0.5,-2.7)
      to[out=270, in=180](2.5,-4) 
      node[right] { $  (X-c) \id_{B_1}$};

 \path[<-, draw,  thick, dashed] (9.5,3.3)
      to[out=180, in=0](6.5,3.7) 
      node[left] { $  (X-c) \id_{A_1}$};

\draw[gray,dashed] (1,0) -- (1,-2.55);
\draw[gray,dashed] (2,0) -- (2,-1.75);
\draw[gray,dashed] (3,0) -- (3,-1.2);  
\draw[gray,dashed] (7,0) -- (7,1.2);
\draw[gray,dashed] (8,0) -- (8,1.8); 
\draw[gray,dashed] (9,0) -- (9,2.9);

\draw [black] (4,-0.6) to (6,0.6); 
\draw [dashed] (3,-0.8) to (7,0.8); 
\draw [dashed] (3,-0.4) to (7,0.4); 
\draw [decorate,  decoration = {brace,    raise=1pt,    amplitude=5pt}] (0,0) --  (1,0);
\draw [decorate,  decoration = {brace,    raise=1pt,    amplitude=5pt}] (1,0) --  (2,0);
\draw [decorate,  decoration = {brace,    raise=1pt,    amplitude=5pt}] (2,0) --  (3,0); 
\draw [decorate,  decoration = {brace,    raise=1pt,    amplitude=5pt}] (8,0) --  (7,0);
\draw [decorate,  decoration = {brace,    raise=1pt,    amplitude=5pt}] (9,0) --  (8,0);
\draw [decorate,  decoration = {brace,    raise=1pt,    amplitude=5pt}] (10,0) --  (9,0); 
\draw [decorate,  decoration = {brace,    raise=1pt,    amplitude=5pt}] (10,-0.55) --  (7,-0.55); 
\draw [decorate,  decoration = {brace,    raise=1pt,    amplitude=5pt}] (0,0.55) --  (3,0.55);

\draw[loosely dashed] (0,-2.55) -- (10,-2.55); 
\node[left] at (0,-2.55)  {\footnotesize $Q^-_{1-\alpha_1}(X)$}; 
\draw[loosely dashed] (0,-1.2) -- (10,-1.2); 
\node[left] at (0,-1.2)  {\footnotesize $Q^-_{1-\alpha}(X)$}; 
\draw[loosely dashed] (0,2.9) -- (10,2.9); 
\node[left] at (0,2.9)  {\footnotesize $Q^-_{\alpha_1}(X)$}; 
\draw[loosely dashed] (0,1.2) -- (10,1.2);
\node[left] at (0,1.2)  {\footnotesize $Q^-_{\alpha}(X)$}; 
\node[left] at (0,0)  {\footnotesize $c=Q^-_{1/2}(X)$}; 

\end{tikzpicture}
\end{center}
\end{figure}
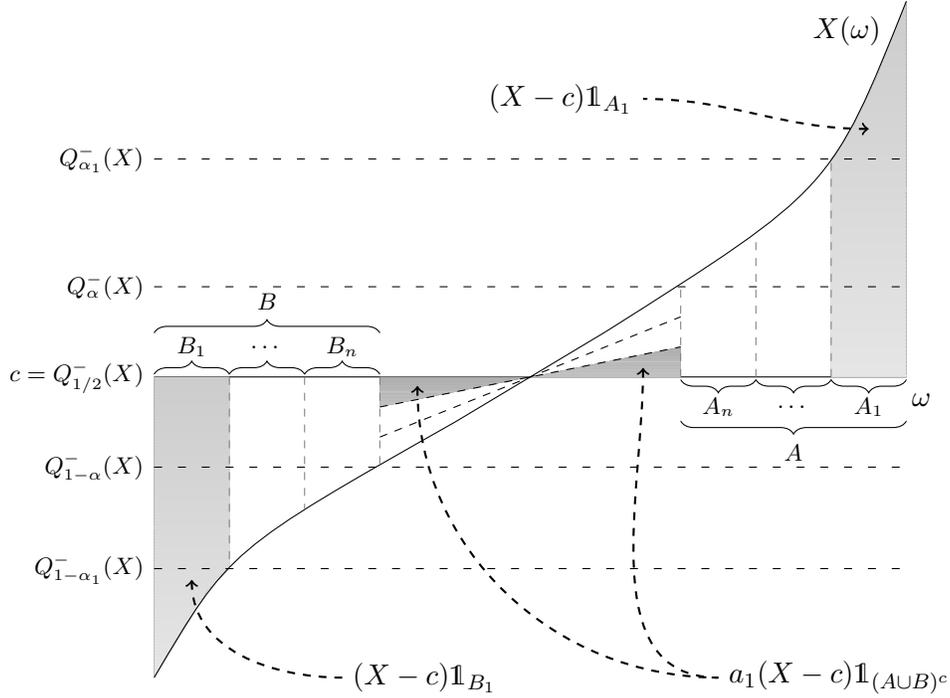

Figure \ref{Fig:Pairwise_counter-monotonicity} illustrates an example of the Pareto-optimal allocation \eqref{eq:IQD_allocation} in Theorem \ref{th:IQD}, part (iii). 
The dependence structure of this allocation warrants some further discussion. 
Without loss of generality, assume $c_1=\dots=c_n=0$   (this implies that a median of $X$ is $c=0$), and assume that $X$ is continuously distributed. 
Note that  (a.s.)~$X > 0$ if event $A$ occurs and $X< 0$ if event $B$ occurs.

First, suppose $\alpha\ge 1/2$ so that $\p({A\cup B})=1$.  
In this case, we have $X_i=X\id_{A_i\cup B_i}$ for $i\in [n]$.
The random vector $(X\id_{A_i}, X \id_{A_j})$  for $i\ne j$
is counter-monotonic because $A_i\cap A_j=\varnothing$ and $X>0$ on $A$.
This implies $(X\id_{A_1},\dots,X\id_{A_n})$  is  \textit{pairwise counter-monotonic}.
From the above analysis,  we can see that 
conditional on $A$, $(X_1,\dots,X_n)$ is pairwise counter-monotonic, 
and so is it conditional on $B$; that is $(X_1,\dots,X_n)$ is a mixture of two 
pairwise counter-monotonic vectors. 
Moreover, $(X_1,\dots,X_n)$ is also
the sum of the two pairwise counter-monotonic vectors $(X\id_{A_1},\dots,X\id_{A_n})$ and $(X\id_{B_1},\dots,X\id_{B_n})$.
We can check that 
$(X_i(\omega)-X_j(\omega))(X_i(\omega')-X_j(\omega'))< 0$ a.s.
for $\omega\in A_i$ and $\omega' \in A_j$, and 
$(X_i(\omega)-X_j(\omega))(X_i(\omega')-X_j(\omega'))> 0$ a.s.
for $\omega\in A_i$ and $\omega' \in B_j$.
Therefore, 
the allocation $(X_1,\dots,X_n)$ is not comonotonic, yet it is not pairwise counter-monotonic either.
This is illustrated by the ``vertical slicing" in Figure   \ref{Fig:Pairwise_counter-monotonicity}, where on $A$ and $B$ pairwise counter-monotonicity holds.  

As discussed above, 
we can describe $(X_1,\dots,X_n)$ as either the sum or the mixture of two pairwise counter-monotonic vectors.
 Pairwise counter-monotonicity is a form of extreme negative dependence that extends the concept of counter-monotonicity to the case of $n\geq 3$ agents; see \cite{PW15} and \cite{LLW23} for more details.  
 This observation is in contrast to the optimal allocations for quantile agents in Theorem 1 of \cite{ELW18}, which are indeed  pairwise counter-monotonic.

 If $0<\alpha<1/2$, then the term $a_i X\id_{(A\cup B)^c}$ appears in the allocation of every agent. 
 Note that conditional on  $(A\cup B)^c$,
 $(X_1,\dots,X_n)$ becomes comonotonic. 
This is illustrated by    ``proportional slicing" in the middle part of Figure \ref{Fig:Pairwise_counter-monotonicity}.
  This local comonotonicity will become crucial in  Section \ref{sec:new5}, where we  
 study the risk sharing problem among several IQD agents and risk-averse agents. 
 The lack of global comonotonicity but having some local comonotonicity is similar to the shape of optimal allocations obtained by \cite{Lieb24}, but in the latter setting this phenomenon is generated by the heterogeneity of the reference measures. A similar pattern appears in \cite{W18}.

 As hinted by Propositions \ref{pr:equivalence} and \ref{pr:como}, solving Pareto-optimal allocations for risk-averse agents requires us to study comonotonic risk sharing, which is the topic of the next section.

\section{Comonotonic risk sharing}\label{sec:4}
We now turn to the important case of comonotonic risk sharing. As before, we first provide theoretical results and then proceed to analyze further the special case of sharing risks with $\mathrm{IQD}$ agents.

\subsection{Pareto optimality, sum optimality, and explicit allocations}\label{sec:41}
The next result is similar to Theorem \ref{th:1}, but for comonotonic risk sharing. We omit its proof because it does not provide new insights.
\begin{proposition}\label{pr:com_PO_SO} 
Suppose that   $h_i(1) \ne 0$ for some $i\in [n]$. 
Then, 
 $(X_1,\dots,X_n)\in \mathbb A^+_n(X)$ is  Pareto optimal in $\mathbb A^+_n(X)$ if and only if $\sum_{i=1}^n \rho_{\tilde h_i}(X_i) =\dboxplus_{{i=1}}^n\rho_{\tilde h_i}(X)$.
\end{proposition}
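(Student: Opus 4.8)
The plan is to mirror the proof of Theorem~\ref{th:1}, replacing the unconstrained inf-convolution $\dsquare$ by its comonotonic counterpart $\dboxplus$ and the allocation set $\mathbb A_n(X)$ by $\mathbb A^+_n(X)$ throughout. Since normalizing $h_i$ to $\tilde h_i$ does not change agent $i$'s preference, Pareto optimality of $(X_1,\dots,X_n)$ in $\mathbb A^+_n(X)$ is the same property whether we phrase it with $\rho_{h_i}$ or $\rho_{\tilde h_i}$; so I would work with the normalized functionals from the outset. By Proposition~\ref{prop:1}(i), the existence of a Pareto-optimal allocation forces the $h_i(1)$ to be all zero, all positive, or all negative; combined with the hypothesis $h_i(1)\ne 0$ for some $i$, we may assume all $h_i(1)$ share a common nonzero sign, so that after normalization every $\tilde h_i(1)$ equals $+1$ (the cash-additive case) or every $\tilde h_i(1)$ equals $-1$ (the reverse cash-additive case). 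I would treat the cash-additive case and note the other follows by replacing each $\rho_{\tilde h_i}$ with $-\rho_{\tilde h_i(-\cdot)}$, i.e.\ by a sign symmetry.

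For the direction sum optimality $\Rightarrow$ Pareto optimality, suppose $(X_1,\dots,X_n)\in\mathbb A^+_n(X)$ is sum optimal, $\sum_i\rho_{\tilde h_i}(X_i)=\dboxplus_i\rho_{\tilde h_i}(X)$, but not Pareto optimal. Then there is a comonotonic allocation $(Y_1,\dots,Y_n)\in\mathbb A^+_n(X)$ with $\rho_{\tilde h_i}(Y_i)\le\rho_{\tilde h_i}(X_i)$ for all $i$ and strict inequality for some $i$. Summing gives $\sum_i\rho_{\tilde h_i}(Y_i)<\sum_i\rho_{\tilde h_i}(X_i)=\dboxplus_i\rho_{\tilde h_i}(X)$, contradicting the definition of the comonotonic inf-convolution as an infimum over $\mathbb A^+_n(X)$. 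This direction does not use cash additivity.

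For the converse, Pareto optimality $\Rightarrow$ sum optimality, suppose $(X_1,\dots,X_n)\in\mathbb A^+_n(X)$ is Pareto optimal but not sum optimal, so $\sum_i\rho_{\tilde h_i}(X_i)>\dboxplus_i\rho_{\tilde h_i}(X)$. Pick a comonotonic allocation $(Y_1,\dots,Y_n)\in\mathbb A^+_n(X)$ with $\sum_i\rho_{\tilde h_i}(Y_i)<\sum_i\rho_{\tilde h_i}(X_i)$ (here I would handle the possibility $\dboxplus=-\infty$ by merely taking $\sum_i\rho_{\tilde h_i}(Y_i)$ strictly below $\sum_i\rho_{\tilde h_i}(X_i)$, which exists since the infimum is strictly smaller). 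Set $\delta_i=\rho_{\tilde h_i}(Y_i)-\rho_{\tilde h_i}(X_i)$, so $\sum_i\delta_i<0$. Now add cash transfers $c_i$ with $\sum_i c_i=0$ to spread the gain: using translation invariance and $\tilde h_i(1)=1$, $\rho_{\tilde h_i}(Y_i+c_i)=\rho_{\tilde h_i}(Y_i)+c_i=\rho_{\tilde h_i}(X_i)+\delta_i+c_i$. Choosing $c_i=-\delta_i+(\sum_j\delta_j)/n$ makes $\delta_i+c_i=(\sum_j\delta_j)/n<0$ uniformly, so $\rho_{\tilde h_i}(Y_i+c_i)<\rho_{\tilde h_i}(X_i)$ for every $i$. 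Crucially, adding a constant to each coordinate preserves comonotonicity, so $(Y_1+c_1,\dots,Y_n+c_n)\in\mathbb A^+_n(X)$ and it strictly Pareto-dominates $(X_1,\dots,X_n)$, contradicting Pareto optimality.

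The step most likely to require care is the comonotonicity bookkeeping in the converse: one must confirm that shifting a comonotonic vector by constants keeps it comonotonic (immediate, since adding a constant is an increasing transformation and preserves the common monotone representation) and that the transfers leave the sum equal to $X$ (guaranteed by $\sum_i c_i=0$). The other subtlety is the $\dboxplus=-\infty$ edge case, which the argument above sidesteps by working directly with a strictly dominating allocation rather than with the infimum value. Because this reasoning is a direct transcription of the argument behind Theorem~\ref{th:1}, with $\mathbb A_n$ and $\dsquare$ replaced by $\mathbb A^+_n$ and $\dboxplus$ and the only new observation being the closure of $\mathbb A^+_n(X)$ under coordinatewise constant shifts, the authors are justified in omitting it as providing no new insight.
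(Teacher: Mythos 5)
Your proof is correct and takes essentially the same route as the paper, which omits the proof of this proposition precisely because it is the argument of Theorem \ref{th:1} transplanted to $\mathbb A^+_n(X)$ and $\dboxplus$: your two directions, the key observation that coordinatewise constant shifts with $\sum_i c_i=0$ preserve membership in $\mathbb A^+_n(X)$, and the treatment of the $\dboxplus=-\infty$ edge case all match that template. One small slip that does not affect the argument: to reduce the reverse cash-additive case to the cash-additive one, the correct reduction uses the functionals $Z\mapsto\rho_{\tilde h_i}(-Z)$ acting on allocations of $-X$ (your outer minus sign in $-\rho_{\tilde h_i}(-\cdot)$ flips the preference direction and turns the minimization into a maximization); alternatively, as in the paper's proof of Theorem \ref{th:1}, one simply reruns the transfer argument with the signs of the transfers reversed.
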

We now show that $\boldsymbol \lambda$-optimality in $\mathbb A^+_n(X)$ pins down Pareto optimality. This result is stated in a stronger form than 
Proposition \ref{pr:1} for the corresponding notions of optimality in $\mathbb A_n(X)$.
\begin{proposition}\label{pr:2}
Suppose that   $h_i(1) = 0$ for all $i\in [n]$. 
For an allocation $(X_1,\dots,X_n)\in \mathbb A^+_n(X)$, it holds that (i)$\Rightarrow$(ii)$\Rightarrow$(iii):
\begin{enumerate}[(i)]
\item   $\sum_{i=1}^n \lambda_i  \rho_{h_i}(X_i) =\dboxplus_{{i=1}}^n(\lambda_i \rho_{h_i})(X)$ for some $(\lambda_1,\dots,\lambda_n)\in (0,\infty)^n$;
\item $(X_1,\dots,X_n)$ 
 is  Pareto optimal in $\mathbb A^+_n(X)$;
 \item  $\sum_{i=1}^n \lambda_i  \rho_{h_i}(X_i) =\dboxplus_{{i=1}}^n(\lambda_i \rho_{h_i})(X)$ for some $(\lambda_1,\dots,\lambda_n)\in [0,\infty)^n\setminus \{\mathbf 0\}$.
\end{enumerate}
\end{proposition}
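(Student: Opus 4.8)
The plan is to prove the two implications (i)$\Rightarrow$(ii) and (ii)$\Rightarrow$(iii) separately. The first is a routine scalarization argument and carries no real content. Suppose $(X_1,\dots,X_n)$ attains $\dboxplus_{i=1}^n(\lambda_i\rho_{h_i})(X)$ for some $\boldsymbol\lambda\in(0,\infty)^n$ but fails to be Pareto optimal in $\mathbb A^+_n(X)$. Then some $(Y_1,\dots,Y_n)\in\mathbb A^+_n(X)$ satisfies $\rho_{h_i}(Y_i)\le\rho_{h_i}(X_i)$ for all $i$, with strict inequality for some $j$. Because every $\lambda_i>0$, summing yields $\sum_i\lambda_i\rho_{h_i}(Y_i)<\sum_i\lambda_i\rho_{h_i}(X_i)=\dboxplus_{i=1}^n(\lambda_i\rho_{h_i})(X)$, contradicting that the right-hand side is an infimum over $\mathbb A^+_n(X)$ (using $\lambda_i\rho_{h_i}=\rho_{\lambda_ih_i}$ by positive homogeneity). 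This step uses neither convexity nor $h_i(1)=0$.

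The implication (ii)$\Rightarrow$(iii) is where the work lies, and the key is that the set of attainable welfare vectors
$$W=\{(\rho_{h_1}(Y_1),\dots,\rho_{h_n}(Y_n)):(Y_1,\dots,Y_n)\in\mathbb A^+_n(X)\}$$
is convex. Using \citet[Proposition 4.5]{D94}, I would write comonotonic allocations as $X_i=f_i(X)$ and $Y_i=g_i(X)$ with $f_i,g_i$ increasing and $\sum_if_i=\sum_ig_i=\mathrm{id}$. For $t\in[0,1]$ the functions $tf_i+(1-t)g_i$ are increasing and sum to the identity, so $((tf_i+(1-t)g_i)(X))_{i\in[n]}\in\mathbb A^+_n(X)$; moreover $tf_i(X)$ and $(1-t)g_i(X)$ are comonotonic, so comonotonic additivity and positive homogeneity give $\rho_{h_i}((tf_i+(1-t)g_i)(X))=t\rho_{h_i}(f_i(X))+(1-t)\rho_{h_i}(g_i(X))$. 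Hence the welfare map is affine along such combinations, $W$ is convex, and therefore $\mathcal V=W+[0,\infty)^n$ is convex as well. This is the ``comes for free'' mechanism announced in Section~\ref{sec:4}.

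With convexity in hand I would separate. Let $\mathbf v=(\rho_{h_1}(X_1),\dots,\rho_{h_n}(X_n))$ be the welfare vector of the Pareto-optimal allocation and $G=\{\mathbf w\in\R^n:\mathbf w<\mathbf v\text{ in every coordinate}\}$, an open convex set. Pareto optimality forces $\mathcal V\cap G=\varnothing$: any $\mathbf u\in\mathcal V$ satisfies $\mathbf v'\le\mathbf u$ for some attainable $\mathbf v'\in W$, so $\mathbf u\in G$ would give $\mathbf v'\le\mathbf u<\mathbf v$, a strict Pareto improvement. By the separating hyperplane theorem there is $\boldsymbol\lambda\ne\mathbf 0$ with $\boldsymbol\lambda\cdot\mathbf w\le\boldsymbol\lambda\cdot\mathbf u$ for all $\mathbf w\in G$ and $\mathbf u\in\mathcal V$. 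Since $\mathcal V$ contains $\mathbf v+[0,\infty)^n$, pushing along each coordinate direction shows $\lambda_i\ge0$ for every $i$, so $\boldsymbol\lambda\in[0,\infty)^n\setminus\{\mathbf 0\}$. Finally, for each attainable $\mathbf v'\in W\subseteq\mathcal V$, taking the supremum over $\mathbf w\in G$ (which equals $\boldsymbol\lambda\cdot\mathbf v$ because $\boldsymbol\lambda\ge\mathbf 0$) gives $\boldsymbol\lambda\cdot\mathbf v\le\boldsymbol\lambda\cdot\mathbf v'$, i.e. $\sum_i\lambda_i\rho_{h_i}(X_i)=\dboxplus_{i=1}^n(\lambda_i\rho_{h_i})(X)$, which is exactly (iii).

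The main obstacle is establishing the convexity of $W$; everything downstream (disjointness from $G$, the separation, and the sign of $\boldsymbol\lambda$) is standard once $W$ is known to be convex. The delicate point is that convex combinations of \emph{comonotonic} allocations remain comonotonic with respect to the same $X$, so that comonotonic additivity applies termwise and the welfare map is genuinely affine rather than merely convex. This is precisely where restricting to $\mathbb A^+_n(X)$ (as opposed to $\mathbb A_n(X)$) is indispensable, and it clarifies why the analogous strengthening to a two-sided equivalence fails in the unconstrained setting of Proposition~\ref{pr:1}.
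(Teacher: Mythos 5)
Your proof is correct and takes essentially the same approach as the paper's: both establish convexity of the comonotonic utility possibility set via comonotonic additivity and positive homogeneity (the paper asserts directly that $\mathbb A_n^+(X)$ is convex, while you justify it through the increasing-function representation of \citet[Proposition 4.5]{D94}), then conclude by a separating hyperplane argument, extracting nonnegativity of the weights coordinate by coordinate. The only cosmetic difference is that you separate the open strict-dominance set $G$ from the free-disposal set $W+[0,\infty)^n$, whereas the paper separates the attainable set $S$ from the closed lower set $V$, which meet exactly at the optimal welfare vector; these are interchangeable variants of the same argument.
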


    Comonotonicity plays an important role in the proof of Proposition \ref{pr:2}. 
    The risk possibility set of the set of comonotonic allocations is defined as  $S=\{(\rho_{h_1}(X_1), \dots, \rho_{h_n}(X_n)): (X_1,\dots,X_n) \in \mathbb{A}_n^+(X)\}$. The comonotonic additivity of distortion riskmetrics guarantees that the risk possibility set $S$ is a convex set when restricted to $\mathbb{A}_n^+(X)$. This needs not be true on $\mathbb{A}_n(X)$. In this case, we cannot use the Hahn-Banach Theorem to obtain the existence of the Negishi weights $(\lambda_1,\dots,\lambda_n)$, which is the reason why we did not state a ``converse statement" in Proposition \ref{pr:1}.   Propositions \ref{pr:equivalence} and \ref{pr:2} together yield that if all agents have concave distortion functions,
then any Pareto-optimal allocation in $\mathbb A_n(X)$, which yields the same welfare for all agents 
as a Pareto-optimal allocation  in $\mathbb A_n^+(X)$,
must satisfy (iii). 
If their distortion functions are strictly concave, then by Proposition \ref{pr:como}, every Pareto-optimal allocation can be found through an inf-convolution.

We now aim to characterize further the set of Pareto-optimal allocations in $\mathbb A^+_n(X)$. The following result generalizes Proposition 5 of \cite{ELW18} for dual utilities.

\begin{theorem}\label{th:3} 
Suppose that $h_1(1)=\dots=h_n(1)$. Then
 $$ 
\dboxplus_{i=1}^n\rho_{h_i}=\rho_{h_\wedge }, 
$$
where $h_{\wedge}(t)=\min\{ h_1(t),\dots, h_n(t)\}$, and $\rho_{h_\wedge }$ is finite on $\X$.
Moreover, a sum-optimal allocation $(X_1,\dots,X_n)$  in $\mathbb A^+_n(X)$ is given by  $X_i=f_i(X)$, $i=1,\dots,n$, where
\begin{align}\label{eq:f}
f_i(x)=\int_0^xg_i(t)\d t, 
\mbox{~~~and~~~}
g_i(x)= \frac{1}{|M_x|} \id_{\{ i\in M_x\}}, ~~~~x\in \R,
\end{align}
and where
 $M_x=\{j\in [n]:  h_j(\p(X>x))=h_{\wedge } (\p(X>x))\}$. 
The sum-optimal allocation is unique up to constant shifts almost surely if and only if $|M_x|=1$ for $\mu_X$-almost every $x$, where $\mu_X$ is the distribution measure of $X$.
\end{theorem}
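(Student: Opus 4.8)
The plan is to parametrise every comonotonic allocation by a vector of ``densities'' and to turn the comonotonic inf-convolution into a pointwise linear program. By \citet[Proposition 4.5]{D94}, any $(X_1,\dots,X_n)\in\mathbb A^+_n(X)$ is of the form $X_i=f_i(X)$ with $f_i$ increasing and $\sum_{i=1}^n f_i=\mathrm{id}$; since the statement already works up to constant shifts, I may normalise $f_i(x)=\int_0^x g_i(t)\d t$ with $g_i\ge0$ and $\sum_{i=1}^n g_i=1$ a.e. Writing $u(x)=\P(X>x)$, the first step is the identity
$$\rho_{h_i}(f_i(X))=\int_{\R}\left(h_i(u(x))-h_i(1)\id_{\{x<0\}}\right)g_i(x)\d x ,$$
which follows from the Choquet definition \eqref{eq:1} of $\rho_{h_i}(f_i(X))$ and the change of variables $y=f_i(x)$, $\d y=g_i(x)\d x$ (the map $f_i$ is continuous and its flat pieces carry $g_i=0$, so the substitution is clean). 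Summing over $i$ and using the hypothesis $h_1(1)=\dots=h_n(1)=:c$, the correction term becomes $\sum_{i=1}^n h_i(1)\id_{\{x<0\}}g_i(x)=c\,\id_{\{x<0\}}$, which no longer depends on the choice of the $g_i$.

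Consequently, minimising $\sum_{i=1}^n\rho_{h_i}(f_i(X))$ reduces to minimising, for Lebesgue-almost every $x$, the linear functional $g\mapsto\sum_{i=1}^n h_i(u(x))g_i$ over the simplex $\{g_i\ge0,\ \sum_i g_i=1\}$. Its value is $\min_i h_i(u(x))=h_\wedge(u(x))$, attained exactly by those $g$ supported on $M_x$. Substituting back and recalling $c=h_\wedge(1)$, the minimal total equals $\int_0^\infty h_\wedge(u(x))\d x+\int_{-\infty}^0\left(h_\wedge(u(x))-h_\wedge(1)\right)\d x=\rho_{h_\wedge}(X)$, which proves $\dboxplus_{i=1}^n\rho_{h_i}=\rho_{h_\wedge}$. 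Since $h_\wedge$ is a minimum of finitely many functions of bounded variation with $h_\wedge(0)=0$, we have $h_\wedge\in\H^{\rm BV}$, so $\rho_{h_\wedge}$ is finite on $L^\infty$. Finally, the explicit choice $g_i(x)=|M_x|^{-1}\id_{\{i\in M_x\}}$ is nonnegative, sums to one, and is supported on $M_x$; hence it attains the pointwise minimum, and the associated $f_i$ in \eqref{eq:f} define a comonotonic sum-optimal allocation.

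For the uniqueness claim, the above shows that the sum-optimal allocations are precisely the $X_i=f_i(X)$ whose densities $g$ are supported on $M_x$ for Lebesgue-a.e.\ $x$ in the active range $\{0<u(x)<1\}$; where $|M_x|=1$ the minimiser is forced, and where $|M_x|\ge2$ there is room to move mass between active indices. The remaining task is to translate ``$g$ is essentially unique'' into ``$f_i(X)$ is a.s.\ unique up to constants,'' which is where $\mu_X$ enters through the relation $\mu_X=-\d u$. Two minimisers $g^{(1)},g^{(2)}$ produce allocations differing by a.s.\ constants if and only if $\int_{x_1}^{x_2}\big(g_i^{(1)}-g_i^{(2)}\big)\d x=0$ for all $x_1,x_2\in\mathrm{supp}(\mu_X)$. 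If $|M_x|=1$ for $\mu_X$-a.e.\ $x$, I would argue that the optimal $g$ is pinned down both on $\mathrm{supp}(\mu_X)$ and on each gap of $\mathrm{supp}(\mu_X)$: on a gap, $u$ is constant and, by right-continuity, equals its value at the gap's lower endpoint, so $M_x$ is constant there and inherits the singleton property from the endpoint, forcing the accumulated integral over the gap to match. Conversely, starting from a set of positive $\mu_X$-measure on which $|M_x|\ge2$, transferring mass between two active indices yields two genuinely different sum-optimal allocations.

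The main obstacle is exactly this last translation between two measures: the objective and the map $f_i=\int_0^{\cdot}g_i$ only see Lebesgue measure in $x$, whereas the stated criterion is phrased in terms of $\mu_X$. Reconciling them requires a careful bookkeeping of the atoms of $X$ and the gaps they induce, together with the one-sided continuity behaviour of the $h_i$ at the relevant levels (recall the $h_i$ need not be continuous); the rest of the argument is routine.
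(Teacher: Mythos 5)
Your treatment of the main formula is correct and is essentially the paper's own route: the density parametrisation via \citet[Proposition 4.5]{D94}, the change-of-variables identity (which is exactly the paper's Lemma \ref{lem:change}), and the reduction to a pointwise minimisation are all sound, with the only cosmetic difference being that the paper obtains the lower bound $\dboxplus_{i=1}^n\rho_{h_i}\ge\rho_{h_\wedge}$ from $h_\wedge\le h_i$ together with comonotonic additivity of $\rho_{h_\wedge}$, whereas your simplex argument yields both inequalities in one step. Finiteness and the optimality of the allocation in \eqref{eq:f} follow as you say.

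The genuine gap is the uniqueness claim, and it is not the ``routine'' bookkeeping you defer at the end. Your bridge from Lebesgue-a.e.\ statements about $g$ to the $\mu_X$-a.e.\ criterion rests on the assertion that, on a gap of $\mathrm{supp}(\mu_X)$, the set $M_x$ ``inherits the singleton property from the endpoint.'' When the gap's endpoints carry no atoms this is false: the level $p=\P(X>a)$ is then attained by $x\mapsto\P(X>x)$ only on the closure of the gap, which is $\mu_X$-null, so the hypothesis ``$|M_x|=1$ for $\mu_X$-a.e.\ $x$'' imposes no constraint there. Concretely, take $n=2$, $h_1=h_{\mathrm{GD}}$ and $h_2=\tfrac12 h_{\mathrm{MMD}}$, so that $h_2\le h_1$ on $[0,1]$ with equality on $(0,1)$ only at $t=1/2$, and let $X$ be uniform on $[-1,0]\cup[2,3]$. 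Then $\P(X>x)=1/2$ exactly for $x\in[0,2]$, a $\mu_X$-null set, so $|M_x|=1$ for $\mu_X$-a.e.\ $x$; yet the comonotonic allocations $(0,\,X)$ and $\bigl(2\id_{\{X\ge2\}},\,X-2\id_{\{X\ge2\}}\bigr)$ both attain $\rho_{h_\wedge}(X)=3/4$ and do not differ by constants. A symmetric problem breaks your ``conversely'' direction: if $X$ has an atom at its essential supremum, then $\P(X>x)=0$ there, so $M_x=[n]$ on a set of positive $\mu_X$-mass, but transferring mass between indices at that single point (Lebesgue-null) or beyond it changes no allocation a.s., so no second optimal allocation is produced. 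Thus neither direction of your uniqueness sketch can be completed as written; the quantity that actually governs uniqueness is the behaviour of $M_x$ for Lebesgue-a.e.\ $x$ in $(\essinf X,\esssup X)$, not for $\mu_X$-a.e.\ $x$. For what it is worth, the paper's own proof is equally cavalier at exactly this step — it passes from ``$w_i$ is uniquely determined $\mu_X$-a.s.'' to ``$k_i$ is unique up to a constant shift'' without addressing $\mu_X$-null gaps of positive Lebesgue measure — so your instinct that the Lebesgue-versus-$\mu_X$ translation is the delicate point is well founded, but your proposal does not close it, and the inheritance claim it relies on is false.
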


A key step in the proof of Theorem \ref{th:3} is the following lemma, which gives a convenient alternative formula for $\rho_h(f(X))$. 
The lemma  generalizes Lemma 2.1 of \cite{CL17} for dual utilities in the context of optimal reinsurance design.

\begin{lemma}\label{lem:change}
For any $h\in \mathcal H^{\rm BV}$, random variable $X$ bounded from below,
and  increasing Lipschitz function $f$  with right-derivative $g $,
we have 
\begin{align}\label{eq:unique-lem}
\rho_{h}(f(X))=\int_0^\infty g(x)  h(\p(X>x))\d x+\int_{-\infty}^0 g(x)  (h(\p(X>x)-h(1))\d x.\end{align}
\end{lemma}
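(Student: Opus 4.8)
The plan is to prove \eqref{eq:unique-lem} by starting from the definition \eqref{eq:1} of $\rho_h(f(X))$ and changing the integration variable from the level $y$ of $f(X)$ to the level $x$ of $X$ via $y=f(x)$. Since $X$ is bounded below, say $X\ge m$ a.s., and $f$ is increasing and Lipschitz, it is absolutely continuous with $f(x)=f(0)+\int_0^x g(t)\d t$ for its right-derivative $g$, and $\|g\|_\infty<\infty$. This makes available, on any finite interval and for any bounded Borel $\phi$, the change-of-variables formula for absolutely continuous increasing functions, $\int_a^b \phi(f(x))g(x)\d x=\int_{f(a)}^{f(b)}\phi(y)\d y$. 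I would normalize $f(0)=0$, which holds in every application of the lemma (in Theorem \ref{th:3} one has $f_i=\int_0^{\cdot}g_i$); this normalization is exactly what aligns the split of \eqref{eq:1} at $y=0$ with the split of \eqref{eq:unique-lem} at $x=0$.

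The crucial observation, which I expect to be the main point requiring care, is that whenever $g(x)>0$ the events $\{f(X)>f(x)\}$ and $\{X>x\}$ coincide. Indeed, $\{f(X)>f(x)\}\subseteq\{X>x\}$ for any increasing $f$; conversely, $g(x)>0$ forces $f(x+s)>f(x)$ for all small $s>0$, and then monotonicity gives $f(X)>f(x)$ on all of $\{X>x\}$. Consequently, taking $\phi(y)=h(\P(f(X)>y))$, one has $\phi(f(x))\,g(x)=h(\P(X>x))\,g(x)$ for a.e.\ $x$: on $\{g>0\}$ the two factors agree by the event identity, while on $\{g=0\}$ both products vanish. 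This is precisely what legitimizes the substitution despite the possible flat pieces of $f$ (where $y=f(x)$ fails to be injective), and it is the step where the flat parts are neutralized.

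With this in hand I would handle the two halves of \eqref{eq:1} separately. For the positive part, applying the change of variables on $[0,b]$ and letting $b\to\infty$ gives $\int_0^{f(\infty)}h(\P(f(X)>y))\d y=\int_0^\infty g(x)h(\P(X>x))\d x$; since $h(0)=0$, the left integrand vanishes for $y\ge\esssup f(X)$, so the left-hand side equals $\int_0^\infty h(\P(f(X)>y))\d y$. For the negative part I apply the change of variables to $\phi(y)=h(\P(f(X)>y))-h(1)$ on $[m\wedge 0,0]$: because $X\ge m$ we have $\P(X>x)=1$ and hence $h(\P(X>x))-h(1)=0$ for $x<m$, and likewise $\P(f(X)>y)=1$ for $y<\essinf f(X)$, so both sides are integrals over the bounded region where the integrand is nonzero, yielding $\int_{-\infty}^0(h(\P(f(X)>y))-h(1))\d y=\int_{-\infty}^0 g(x)(h(\P(X>x))-h(1))\d x$. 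Adding the two identities produces \eqref{eq:unique-lem} exactly.

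Two mechanical points close the argument. First, boundedness of $h$ (it is of bounded variation), together with $X\ge m$ and $\|g\|_\infty<\infty$, guarantees absolute convergence throughout, so no interchange of limits beyond the area formula is needed. Second, the normalization $f(0)=0$ is genuinely used: for a general $f$ the right-hand side of \eqref{eq:unique-lem} differs from $\rho_h(f(X))$ by the constant $f(0)h(1)$, as is seen at once from translation invariance and $\rho_h(c)=c\,h(1)$, and this correction vanishes precisely under $f(0)=0$. An alternative route, which I would only fall back on if the change of variables proves awkward, is to write $f(X)=f(0)+\int_m^\infty g(t)\id_{\{X>t\}}\d t$, approximate by comonotonic Riemann sums, and invoke comonotonic additivity, positive homogeneity, and the $L^\infty$-continuity of $\rho_h$ together with $\rho_h(\id_{\{X>t\}})=h(\P(X>t))$; but this needs the continuity of $\rho_h$ explicitly, whereas the direct substitution does not.
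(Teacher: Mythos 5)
Your proof is correct and follows essentially the same route as the paper's: substitute $y=f(x)$ in \eqref{eq:1} and observe that $\{f(X)>f(x)\}$ and $\{X>x\}$ can only differ in probability at points where $f$ is flat to the right of $x$, i.e., where $g(x)=0$; the paper states this key fact in contrapositive form ($\p(f(X)>f(x))<\p(X>x)$ implies $f(z)=f(x)$ for some $z>x$, hence $g(x)=0$). The one substantive difference is bookkeeping: the paper reduces to $X\ge 0$ and $f(X)\ge 0$ by constant shifts so that only the positive half of \eqref{eq:1} survives, whereas you treat both halves directly under the explicit normalization $f(0)=0$ --- and making that normalization explicit is a genuine improvement, since \eqref{eq:unique-lem} as stated is off by the constant $f(0)h(1)$ for general $f$ (take $f$ constant with $h(1)\ne 0$), a hypothesis that the paper's own change-of-variables step uses silently and that indeed holds in every application of the lemma (e.g.\ $f_i=\int_0^{\cdot}g_i$ in Theorem \ref{th:3}).
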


The results in Theorem \ref{th:3} can be extended to domains like $\{X\in L^p: X_-\in L^\infty\}$ for $p\ge 0$ as long as $\rho_{h_1},\dots,\rho_{h_n}$ are finite on this domain. This is because Lemma \ref{lem:change} only requires boundedness from below.  
The next example illustrates the uniqueness statement in Theorem \ref{th:3}, which gives not only unique sum-optimal allocations in $\mathbb A_n^+(X)$, but also unique Pareto-optimal ones, up to constant shifts.

\begin{example}
Suppose that $\rho_{h_1}=\beta_1 \E + \gamma_1 \mathrm{GD}$,
$\rho_{h_2}=\beta_2 \E + \gamma_2 \mathrm{MMD}$ and 
$\rho_{h_3}=\beta_3 \E + \gamma_3 \mathrm{IQD}_\alpha$
for some $\beta_i,\gamma_i>0$, $i=1,2,3$, and $\alpha \in [0,1/2)$.
For any continuously distributed $X\in \X$, the Pareto-optimal allocation in $\mathbb A^+_3(X)$ is unique up to constant shifts. 
To see this, by 
  Proposition \ref{pr:com_PO_SO},  any Pareto-optimal allocation $(X_1,X_2,X_3)$ in $\mathbb A^+_3(X)$ 
satisfies $\sum_{i=1}^3 \rho_{\tilde h_i}(X_i) =\dboxplus_{{i=1}}^3\rho_{\tilde h_i}(X)$. 
Noting that for each $1\le i< j\le 3$, $\tilde h_i(t) = \tilde h_j(t)$
for at most two points $t\in (0,1)$,  
by   Theorem \ref{th:3}, the allocation $(X_1,X_2,X_3)$
is unique up to constant shifts.
\end{example}

By replacing $h_i$ with $\lambda_i h_i$ for some $\lambda_i\ge 0$, we obtain the following corollary, which helps to identify $\boldsymbol \lambda$-optimal allocations in conjunction with Theorem \ref{th:3}. 

\begin{corollary}\label{cor:th3} Let $\boldsymbol \lambda \in \R_+^n\setminus \{\mathbf{0}\}$ be a vector and  $\dboxplus_{i=1}^n\rho_{\lambda_i h_i}$ be finite.  Then
 $ 
\dboxplus_{i=1}^n\rho_{\lambda_i h_i}=\rho_{h_{\boldsymbol \lambda}}, 
$ 
where $h_{\boldsymbol \lambda}(t)=\min\{\lambda_1 h_1(t),\dots,\lambda_n h_n(t)\}$ for $t\in[0,1]$.
\end{corollary}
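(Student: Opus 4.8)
The plan is to deduce the corollary directly from Theorem~\ref{th:3} by applying that theorem to the rescaled distortion functions $\lambda_1 h_1,\dots,\lambda_n h_n$ in place of $h_1,\dots,h_n$. The first point to check is that this substitution is legitimate: for each $\lambda_i\ge 0$ and $h_i\in\mathcal H^{\rm BV}$, the function $\lambda_i h_i$ is again of bounded variation with $(\lambda_i h_i)(0)=0$, so $\lambda_i h_i\in\mathcal H^{\rm BV}$, and by the definition~\eqref{eq:1} one has $\rho_{\lambda_i h_i}=\lambda_i\rho_{h_i}$. Hence $\dboxplus_{i=1}^n\rho_{\lambda_i h_i}$ is genuinely the comonotonic inf-convolution of the $n$ distortion riskmetrics $\rho_{\lambda_1 h_1},\dots,\rho_{\lambda_n h_n}$, and Theorem~\ref{th:3} is applicable to this family provided its single hypothesis on the boundary values is met.

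The only substantive step is therefore to verify that $(\lambda_1 h_1)(1)=\dots=(\lambda_n h_n)(1)$, and this is exactly where the finiteness assumption enters. Since $\dboxplus_{i=1}^n\rho_{\lambda_i h_i}$ is finite, in particular it is strictly greater than $-\infty$, so Proposition~\ref{prop:1}(ii), applied to the family $\lambda_1 h_1,\dots,\lambda_n h_n$, forces $\lambda_1 h_1(1)=\dots=\lambda_n h_n(1)$. Intuitively, the finiteness hypothesis is present precisely to exclude the degenerate situation in which these boundary values disagree, which by Proposition~\ref{prop:1}(ii) would drive the comonotonic inf-convolution to $-\infty$. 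This also cleanly handles the case where some $\lambda_i=0$: then $(\lambda_i h_i)(1)=0$, and finiteness forces every $\lambda_j h_j(1)$ to vanish as well, so no separate treatment is needed.

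With the equal-boundary-value condition in hand, Theorem~\ref{th:3} applies verbatim to $\lambda_1 h_1,\dots,\lambda_n h_n$ and yields $\dboxplus_{i=1}^n\rho_{\lambda_i h_i}=\rho_{(\lambda_i h_i)_\wedge}$, where $(\lambda_i h_i)_\wedge(t)=\min\{\lambda_1 h_1(t),\dots,\lambda_n h_n(t)\}$; as a byproduct this also confirms the finiteness on $\X$ of the resulting functional. Since this pointwise minimum is by definition $h_{\boldsymbol\lambda}(t)$, the identity $\dboxplus_{i=1}^n\rho_{\lambda_i h_i}=\rho_{h_{\boldsymbol\lambda}}$ follows, completing the argument. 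I do not anticipate any real obstacle beyond the bookkeeping in the second paragraph: the entire content of the corollary is the observation that Theorem~\ref{th:3} is stable under the nonnegative rescalings $h_i\mapsto\lambda_i h_i$, and the finiteness hypothesis is just what is needed to transport the boundary-value constraint through that rescaling.
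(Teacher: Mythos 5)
Your proposal is correct and is essentially the paper's own argument: the paper obtains Corollary \ref{cor:th3} precisely by substituting $\lambda_i h_i$ for $h_i$ in Theorem \ref{th:3}, and it likewise invokes Proposition \ref{prop:1}(ii) (in the discussion immediately following the corollary) to note that finiteness of $\dboxplus_{i=1}^n\rho_{\lambda_i h_i}$ forces $\lambda_1 h_1(1)=\dots=\lambda_n h_n(1)$, which is the hypothesis Theorem \ref{th:3} needs. Your additional bookkeeping (that $\lambda_i h_i\in\mathcal H^{\rm BV}$, that $\rho_{\lambda_i h_i}=\lambda_i\rho_{h_i}$, and the handling of $\lambda_i=0$) is all sound and matches the paper's intent.
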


By Proposition \ref{prop:1}, the inf-convolution $\dboxplus_{i=1}^n\rho_{\lambda_i h_i}$ being finite implies that $\lambda_ih_i(1)$ are equal for  all $i \in [n]$. Corollary \ref{cor:th3} is thus only useful for the case of location-invariant distortion riskmetrics ($h_i(1)=0$, $i\in [n]$), as otherwise we simply normalize $\lambda_i=1$, $i\in[n]$.  Theorem \ref{th:3}'s characterization of  $\boldsymbol \lambda$-optimality in $\mathbb{A}_n^+(X)$ extends to location-invariant distortion riskmetrics by setting $M_x=\{i\in [n]: \lambda_i h_i(\p(X>x))=h_{\boldsymbol \lambda} (\p(X>x))\}$ in   \eqref{eq:f}.

For cash-additive and reverse cash-additive distortion riskmetrics,
Proposition \ref{pr:com_PO_SO} and Theorem \ref{th:3} together yield a full characterization of Pareto-optimal allocations in $\mathbb A_n^+$.  
It remains to  characterize those for location-invariant distortion riskmetrics.
The next proposition gives an answer for a large class of such distortion riskmetrics. 

 \begin{proposition}\label{pr:charac_location_inv}
 Suppose $h_i(1)=0$ and $h_i(t)>0$ for all $i\in [n]$ and all $t \in (0,1)$. Then the allocation
$(X_1,\dots,X_n)\in \mathbb A^+_n(X)$ is Pareto optimal if and only if there exists $K\subseteq  [n]$ and a vector $\boldsymbol \lambda \in  (0,\infty)^{K}$ such that $(X_i)_{i\in K}$ 
solves $\dboxplus_{i\in K}\rho_{\lambda_i h_i}(X)$, and $X_i$, $i\not\in K$ are constants.
 \end{proposition}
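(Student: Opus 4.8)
The plan rests on one preliminary fact that makes the two hypotheses $h_i(1)=0$ and $h_i(t)>0$ on $(0,1)$ work together, namely that on $\mathbb A^+_n(X)$ each $\rho_{h_i}$ is bounded below by $0$ and the constants are exactly its minimizers. To see this I would write a comonotonic allocation as $X_i=f_i(X)$ with $f_i$ increasing (Proposition 4.5 of \cite{D94}) with right-derivative $g_i\ge 0$; since $h_i(1)=0$, Lemma \ref{lem:change} reduces to $\rho_{h_i}(X_i)=\int_\R g_i(x)\,h_i(\P(X>x))\,\d x$. As $g_i\ge 0$ and $h_i(\P(X>x))>0$ whenever $\P(X>x)\in(0,1)$, this integral is nonnegative and vanishes if and only if $g_i=0$ on the interior of the support of $X$, i.e.\ if and only if $X_i$ is a.s.\ constant. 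I will invoke this repeatedly.

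For the ``if'' direction, assume the stated $K$ and $\boldsymbol\lambda\in(0,\infty)^{\#K}$, and take any $(Y_1,\dots,Y_n)\in\mathbb A^+_n(X)$ with $\rho_{h_i}(Y_i)\le\rho_{h_i}(X_i)$ for all $i$. For $i\notin K$ we have $\rho_{h_i}(X_i)=0$, so the preliminary fact forces $\rho_{h_i}(Y_i)=0$ with $Y_i$ constant; this gives equality for these agents and makes $\sum_{i\in K}Y_i$ differ from $\sum_{i\in K}X_i$ by a constant. Location invariance then lets me add that constant to one coordinate of $(Y_i)_{i\in K}$, producing a comonotonic allocation of $\sum_{i\in K}X_i$ with unchanged $\rho_{h_i}$-values. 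Since $(X_i)_{i\in K}$ is $\boldsymbol\lambda$-optimal with strictly positive weights, Proposition \ref{pr:2}, (i)$\Rightarrow$(ii), applied to the agents in $K$ (using location invariance to pass between $X$ and $\sum_{i\in K}X_i$), shows $(X_i)_{i\in K}$ is Pareto optimal among comonotonic allocations of $\sum_{i\in K}X_i$, whence $\rho_{h_i}(Y_i)=\rho_{h_i}(X_i)$ for $i\in K$ as well.

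For the ``only if'' direction, let $K=\{i\in[n]:X_i\text{ is not a.s.\ constant}\}$, so automatically $X_i$ is constant for $i\notin K$. First I would verify that $(X_i)_{i\in K}$ is Pareto optimal among comonotonic allocations of $\sum_{i\in K}X_i$: a dominating competitor, extended by the constants $X_i$ for $i\notin K$, would give a comonotonic allocation of $X$ dominating $(X_1,\dots,X_n)$, contradicting its Pareto optimality. Proposition \ref{pr:2}, (ii)$\Rightarrow$(iii), then supplies $\boldsymbol\lambda\in[0,\infty)^{\#K}\setminus\{\mathbf 0\}$ with $(X_i)_{i\in K}$ $\boldsymbol\lambda$-optimal, and location invariance rewrites this optimality at $\sum_{i\in K}X_i$ as optimality at $X$. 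The crux is to rule out vanishing weights: if $\lambda_{i_0}=0$ for some $i_0\in K$, then handing agent $i_0$ the entire risk shows $\dboxplus_{i\in K}\rho_{\lambda_i h_i}=0$ (equivalently, by Corollary \ref{cor:th3}, $h_{\boldsymbol\lambda}\equiv 0$ on $(0,1)$), yet the attained value $\sum_{i\in K}\lambda_i\rho_{h_i}(X_i)$ is strictly positive because some $i_1\in K$ has $\lambda_{i_1}>0$ and, lying in $K$, has $\rho_{h_{i_1}}(X_{i_1})>0$ by the preliminary fact — a contradiction. Hence $\boldsymbol\lambda\in(0,\infty)^{\#K}$, as required. (The degenerate case $X$ constant gives $K=\varnothing$ and is immediate.)

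The main obstacle is exactly this last step: Proposition \ref{pr:2} only delivers Pareto weights in $[0,\infty)^{\#K}\setminus\{\mathbf 0\}$, and a single zero weight collapses the comonotonic inf-convolution to $0$, so one cannot naively read off $K$ as the support of $\boldsymbol\lambda$. The strict positivity hypothesis $h_i>0$ on $(0,1)$ is what rescues the argument, since it makes constants the \emph{unique} minimizers of each $\rho_{h_i}$ and thereby forces every non-constant (active) agent to receive a strictly positive weight.
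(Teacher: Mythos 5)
Your proof is correct and follows essentially the same route as the paper's: the same choice of $K$ as the set of non-constant components, the same appeal to Proposition \ref{pr:2} in both directions (with location invariance used to pass between $X$ and $X-\sum_{i\notin K}X_i$), and the same argument ruling out zero Pareto weights by comparing the value $0$ of the collapsed inf-convolution with the strictly positive attained sum. The only difference is cosmetic: you justify the preliminary fact that $\rho_{h_i}\ge 0$ on comonotonic components with constants as the unique minimizers via Lemma \ref{lem:change}, a step the paper simply asserts as clear.
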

 
The assumption that $h_i(t)>0$ for all $i\in [n]$ and all $t \in (0,1)$ is critical for the characterization of Proposition \ref{pr:charac_location_inv}. 
This condition has a natural interpretation, as it  is equivalent to $\rho_{h_i}(X)>0$ for all non-degenerate $X$ 
and it is satisfied by many variability measures; it is part of the definition of deviation measures of \cite{RUZ06}.
 But this assumption rules out IQD, which we study in the next section.

\subsection{IQD agents and positively dependent optimal allocations
}
\label{sec:42}
We start with the comonotonic risk sharing problem among IQD agents. 
The following proposition gives the corresponding statements, parallel to Theorem \ref{th:IQD}, on Pareto optimality and inf-convolution
in this setting. The sum-optimal allocations are given by Theorem \ref{th:3}. 
 \begin{proposition}\label{prop:9}
Consider $X\in \X$ and the IQD risk sharing problem in $\mathbb A_n^+(X)$ with $\alpha_1,\dots,\alpha_n\in [0,1/2)$. 
\begin{enumerate}[(i)]
\item An allocation of $X$
 is Pareto optimal if and only if  
it is sum optimal.
\item 
For  $\lambda_1,\dots,\lambda_n \ge 0$,
$$\dboxplus_{i=1}^n (\lambda_i \mathrm{IQD}_{\alpha_i}) =  \left(\bigwedge_{i=1}^n \lambda_i \right)\mathrm{IQD}_{\bigvee_{i=1}^n \alpha_i}.  $$ 
In particular,
$ \dboxplus_{i=1}^n  \mathrm{IQD}_{\alpha_i}  =  \mathrm{IQD}_{\bigvee_{i=1}^n \alpha_i}.  $ 
\end{enumerate}
\end{proposition}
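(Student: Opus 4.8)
The plan is to establish part (ii) first, since it pins down the sum-optimal value that part (i) will exploit, and then derive part (i) from it together with Proposition \ref{pr:2}. For part (ii) I would feed the IQD distortions directly into Theorem \ref{th:3}. Each $\lambda_i\mathrm{IQD}_{\alpha_i}$ is the distortion riskmetric with distortion function $t\mapsto \lambda_i\id_{\{\alpha_i<t<1-\alpha_i\}}$, and all of these vanish at $t=1$ (because $\alpha_i<1/2$), so the hypothesis $h_1(1)=\dots=h_n(1)=0$ of Theorem \ref{th:3} holds. The only computation is the pointwise minimum: writing $\alpha_*=\bigvee_i\alpha_i$, for $t\in(\alpha_*,1-\alpha_*)$ every indicator equals $1$ so the minimum of $\{\lambda_i\}$ is $\bigwedge_i\lambda_i$, while for $t\notin(\alpha_*,1-\alpha_*)$ the agent attaining $\alpha_*$ contributes a zero term and, all terms being nonnegative, the minimum is $0$. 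Hence $h_\wedge=(\bigwedge_i\lambda_i)\,\id_{\{\alpha_*<t<1-\alpha_*\}}$, which is exactly the distortion of $(\bigwedge_i\lambda_i)\mathrm{IQD}_{\alpha_*}$, and Theorem \ref{th:3} delivers both the inf-convolution formula and its finiteness on $\X$.

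For the easy direction of part (i), sum optimality is $(1,\dots,1)$-optimality with $(1,\dots,1)\in(0,\infty)^n$, so the implication (i)$\Rightarrow$(ii) of Proposition \ref{pr:2} immediately yields Pareto optimality.

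The substantive direction is Pareto optimality $\Rightarrow$ sum optimality. Starting from a Pareto-optimal $(X_1,\dots,X_n)$, Proposition \ref{pr:2} (ii)$\Rightarrow$(iii) produces $\boldsymbol\lambda\in[0,\infty)^n\setminus\{\mathbf 0\}$ with $\sum_i\lambda_i\mathrm{IQD}_{\alpha_i}(X_i)=\dboxplus_i(\lambda_i\mathrm{IQD}_{\alpha_i})(X)=\lambda_*\mathrm{IQD}_{\alpha_*}(X)$, where $\lambda_*=\bigwedge_i\lambda_i$ and the last equality is part (ii). Using $\lambda_i\ge\lambda_*$ and $\mathrm{IQD}_{\alpha_i}(X_i)\ge0$ gives $\lambda_*\sum_i\mathrm{IQD}_{\alpha_i}(X_i)\le\sum_i\lambda_i\mathrm{IQD}_{\alpha_i}(X_i)=\lambda_*\mathrm{IQD}_{\alpha_*}(X)$; combined with the trivial lower bound $\sum_i\mathrm{IQD}_{\alpha_i}(X_i)\ge\dboxplus_i\mathrm{IQD}_{\alpha_i}(X)=\mathrm{IQD}_{\alpha_*}(X)$ (part (ii) with $\lambda_i=1$), cancelling $\lambda_*$ (when $\lambda_*>0$) forces $\sum_i\mathrm{IQD}_{\alpha_i}(X_i)=\mathrm{IQD}_{\alpha_*}(X)$, i.e.\ sum optimality.

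The main obstacle is precisely the degenerate case $\lambda_*=0$, where the squeeze collapses to $0\le0$; this is exactly where Proposition \ref{pr:2} is too weak, since it only guarantees a weight vector with a possibly-zero minimum. To close this gap I would argue the contrapositive directly by an explicit flattening construction. Fix $l=Q^+_{1-\alpha_*}(X)$ and $u=Q^-_{\alpha_*}(X)$. Writing $X_i=f_i(X)$ with $\sum_i f_i=\mathrm{id}$ and each $f_i$ increasing (hence automatically $1$-Lipschitz, since $0\le f_i(y)-f_i(x)\le y-x$), Lemma \ref{lem:change} applied to $h=\id_{\{\alpha_i<t<1-\alpha_i\}}$ gives $\mathrm{IQD}_{\alpha_i}(X_i)=f_i(u_i)-f_i(l_i)\ge f_i(u)-f_i(l)$, where $(l_i,u_i)\supseteq(l,u)$ is agent $i$'s window. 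Define $(Y_i)$ by keeping each $f_i$ on $[l,u]$, clamping every agent with $\alpha_i<\alpha_*$ to be constant outside $[l,u]$, and letting one agent attaining $\alpha_*$ carry the leftover identity slope on the two tails (which lies outside that agent's window and so does not enter its IQD). One checks $\sum_i Y_i=X$, each $Y_i$ is increasing, and $\mathrm{IQD}_{\alpha_i}(Y_i)=f_i(u)-f_i(l)\le\mathrm{IQD}_{\alpha_i}(X_i)$ for every $i$, with strict inequality for some agent exactly when the original allocation is not sum optimal. Thus a non-sum-optimal comonotonic allocation is never Pareto optimal, which completes the hard direction and subsumes the $\lambda_*=0$ case. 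I expect the bookkeeping of the tail slopes, namely ensuring monotonicity of the clamped functions and that the leftover slope sits outside the carrying agent's window, to be the fiddly part, but it is routine given Lemma \ref{lem:change} and the telescoping identity $\sum_i(f_i(u)-f_i(l))=u-l=\mathrm{IQD}_{\alpha_*}(X)$.
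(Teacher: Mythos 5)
Your proposal is correct, and for part (ii) and the ``sum optimal $\Rightarrow$ Pareto optimal'' direction it coincides with the paper's route (Theorem \ref{th:3}/Corollary \ref{cor:th3} for the lower-envelope computation, and Proposition \ref{pr:2}(i)$\Rightarrow$(ii) with weights $(1,\dots,1)$). Where you genuinely diverge is the hard direction. The paper's own proof starts from Pareto optimality and asserts the existence of weights $(\lambda_1,\dots,\lambda_n)$ with $\bigwedge_{i=1}^n\lambda_i>0$ attaining the inf-convolution, and then runs exactly your squeeze; but Proposition \ref{pr:2}(ii)$\Rightarrow$(iii) only guarantees $\boldsymbol\lambda\in[0,\infty)^n\setminus\{\mathbf 0\}$, and the strict positivity of the minimum is never justified there. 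This is precisely the degenerate case you isolate, and it is not vacuous: with $n=2$, $\alpha_1<\alpha_2$, the allocation $(X,0)$ satisfies the $\boldsymbol\lambda$-equation for $\boldsymbol\lambda=(0,1)$ (both sides vanish) yet is neither sum optimal nor Pareto optimal, so weights with a zero component carry no information. Your clamping construction closes this gap: representing $X_i=f_i(X)$ with increasing $1$-Lipschitz $f_i$ summing to the identity, flattening every agent outside $[Q^+_{1-\alpha_*}(X),Q^-_{\alpha_*}(X)]$ and routing the tail slope to an agent attaining $\alpha_*$ produces a comonotonic allocation with $\mathrm{IQD}_{\alpha_i}(Y_i)=f_i(u)-f_i(l)\le \mathrm{IQD}_{\alpha_i}(X_i)$ for all $i$ and total $u-l=\mathrm{IQD}_{\alpha_*}(X)$, so any non-sum-optimal comonotonic allocation is strictly Pareto dominated. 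I checked the details (the windows nest, $\mathrm{IQD}_{\alpha_i}(f_i(X))=f_i(u_i)-f_i(l_i)$ via Lemma \ref{lem:change} or directly via quantile equivariance under continuous increasing maps, and monotonicity of the clamped functions), and they all go through, including when several agents attain $\alpha_*$ or when $\alpha_*=0$. What each approach buys: the paper's argument is two lines but rests on an unproved positivity claim; yours is self-contained, constructive, makes the squeeze argument redundant (the clamping handles all cases uniformly), and in fact supplies the missing justification for the paper's assertion — since sum optimality with weights $(1,\dots,1)$ is what ultimately certifies that strictly positive weights exist.
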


Comparing Theorem \ref{th:IQD} with Proposition \ref{prop:9}, 
we note that for $\alpha_1,\dots,\alpha_n\in (0,1/2)$,
we have $\sum_{i=1}^n \alpha_i>\bigvee _{i=1}^n \alpha_i$, which implies that
 \begin{align}
 \dboxplus_{i=1}^n (\lambda_i \mathrm{IQD}_{\alpha_i}) (X)-\dsquare_{i=1}^n (\lambda_i \mathrm{IQD}_{\alpha_i}) (X) >0
 \label{eq:compare-IQD}
 \end{align}
for any continuously distributed $X$.
 This further implies that the Pareto-optimal allocations in $\mathbb A_n(X)$ 
 are disjoint from those in $\mathbb A_n^+(X)$. 
The difference  in \eqref{eq:compare-IQD}
can be interpreted as the welfare gain of allowing agents to share risks in non-comonotonic arrangements.
For IQD agents, 
comonotonic allocations are not Pareto optimal in general, and therefore, some form of ``gambling behaviour" in the allocation is beneficial to all agents, although the agents are neither risk averse nor risk seeking in the sense of \cite{RS70}. This is similar to the case of quantile agents, discussed by \cite{ELW18}.

 \section{Several IQD and risk-averse agents}\label{sec:new5}

Combining results established in Sections \ref{sec:3} and \ref{sec:4}, we are now able to tackle the unconstrained risk sharing problem for IQD and risk-averse agents.
We consider agents from the following two sets: the IQD agents, modelled by distortion functions in
$$\mathcal{H}^{\mathrm{IQD}}=\{t\mapsto \id_{\left\{\alpha< t < 1-\alpha\right\}}: \alpha \in [0, 1/2)\}$$
and  the risk-averse agents, modelled by distortion functions in
  $$\mathcal{H}^{\mathrm{C}}=\{ h \in \mathcal{H}^{\mathrm{BV}} \vert ~h(1)=0,~ h \text{ is concave} \}.$$
That is, $\mathcal{H}^{\mathrm{IQD}}$ is the set of all distortion functions for $\mathrm{IQD}$ variability measures and $\mathcal{H}^{\mathrm{C}}$ is the set of location-invariant concave 
distortion functions $h\in \mathcal{H}^{\mathrm{BV}}$.
Notice that each $h\in \mathcal{H}^{\mathrm{C}}$ is increasing in $[0,s]$ and decreasing in $[s,1]$ for some $s\in (0,1)$. Define the mapping 
$G^\alpha_\lambda: \mathcal{H}^{\mathrm{C}} \to \mathcal{H}^{\mathrm{BV}}$ for $\alpha \in [0,1/2)$ and $\lambda \ge 0$ as
 $$ G^\alpha_\lambda(h)(t) = \left(h(t-\alpha) \wedge h(t+\alpha)\wedge \lambda\right)\id_{\{\alpha<t < 1-\alpha\}}~~~\mbox{for} ~t \in [0,1].$$ 
The mapping $G^\alpha_\lambda$ transforms a concave distortion function to another distortion function with value $0$ on  $ [0, \alpha]\cup[1-\alpha,1]$.
See Figure \ref{fig:G} for an illustration of this transform. For $\alpha\ge 1/2$, we define $G^\alpha_\lambda(h)=0$.

\begin{figure}[t]
\caption{An illustration of the transform $G_\lambda^\alpha$}
\label{fig:G}
\begin{subfigure}[b]{0.45\textwidth}
\centering
\begin{tikzpicture}[scale=4.5]
\draw[->] (-0.2,0) --(1.1,0) node[right] {$t$};
\draw[->] (0,-0.2) --(0,0.6) node[above] {$h(t)$};
\draw [black] (0,0) .. controls  (0.2,0.5) .. (1,0);
\node[below] at (-0.03,0) {\scriptsize $0$};
\node[below] at (1,0) {\scriptsize $1$};
\end{tikzpicture}
\caption{$h$}
\end{subfigure}
\begin{subfigure}[b]{0.45\textwidth}
\centering
\begin{tikzpicture}[scale=4.5]
\draw[->] (-0.2,0) --(1.2,0) node[right] {$t$};
\draw[->] (0,-0.2) --(0,0.6) node[above]  {$G_{{\lambda}}^\alpha(h)(t)$};
\draw [black, dashed] (0.1,0) .. controls  (0.3,0.5) .. (1.1,0);
\draw [black, dashed] (-0.1,0) .. controls  (0.1,0.5) .. (0.9,0);
\draw[gray,dashed] (0,0.28) --(0.63,0.28);
\draw [black] (0.43,0.28) .. controls  (0.55,0.225) and (0.7,0.12) .. (0.9,0);
\draw [black]  (0.1,0).. controls (0.12, 0.051) and  (0.18,0.215)  .. (0.23,0.28);
\draw[black] (0.23,0.28) --(0.43,0.28);
\node[below] at (0.1,0) {\scriptsize $\alpha$};
\node[below] at (0.9,0) {\scriptsize $1-\alpha$};
\node[below] at (-0.1,0) {\scriptsize $-\alpha$};
\node[below] at (1.08,0) {\scriptsize $1+\alpha$};
\node[left]  at (0,0.28) {$\lambda$};
\end{tikzpicture}
\caption{$G_\lambda^\alpha(h)$}
\end{subfigure}\\
\end{figure}
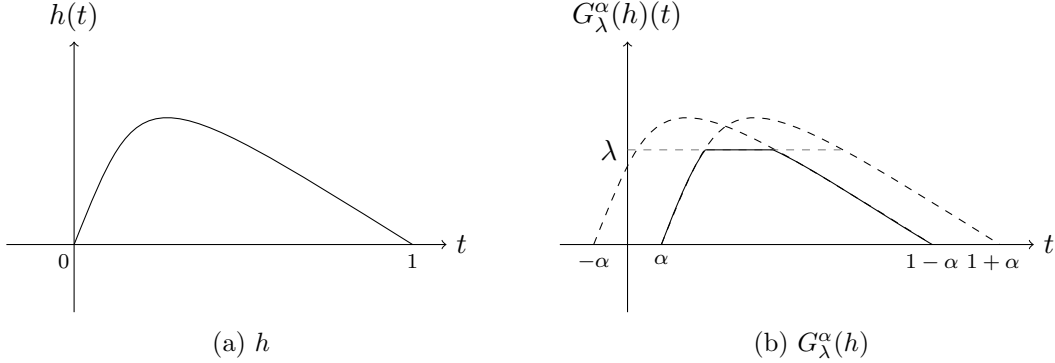

We will see in the next proposition that the function $G^\alpha_\lambda$ 
plays an important role because  of the inf-convolution  of $\lambda \mathrm{IQD}_\alpha$ and $\rho_h$  for $h\in \mathcal{H}^{\mathrm{C}} $ satisfies
$$ (\lambda \mathrm{IQD}_{\alpha} ) \square \rho_h =  \rho_{G^\alpha_\lambda(h)}.$$
This formula is a special case of \eqref{eq:inf_IQD} in Theorem \ref{pr:IQD_h} below. 
\begin{theorem}\label{pr:IQD_h}
Let $N\subseteq [n]$ and $I=[n]\setminus N$. 
Suppose that $h_i \in \mathcal{H}^{\mathrm{C}} $ for $i\in N$ and $h_i\in \mathcal{H}^{\mathrm{IQD}}$ for $i\in I$ with IQD parameter $\alpha_i$.   Denote by  $\alpha=\sum_{i\in I} \alpha_i$.
\begin{enumerate}[(i)]
\item
For   $\lambda_1, \dots, \lambda_n \ge 0$,  denoting by $\lambda=\bigwedge_{i \in I}\lambda_i$ and 
 $h={\bigwedge_{i\in N}({\lambda_i h_i}})$, we have
\begin{equation}\label{eq:inf_IQD}
\dsquare_{i=1}^n (\lambda_i \rho_{h_i})=\rho_{G^\alpha_\lambda(h)}. \end{equation}
\item A Pareto-optimal  allocation is given by
\begin{equation}\label{eq:IQD_allocation2}X_i=(X-c)\id_{A_i\cup B_i}+ Y_i+c_i,
\end{equation}
where, by denoting by $\beta=\alpha\wedge (1/2)$,
\begin{enumerate}[(a)]
\item  $\left \{A_i\right\}_{i=1}^n$ and $\left \{B_i\right\}_{i=1}^n$  are partitions of  a right $\beta$-tail event $A$  and a left $\beta$-tail event $B$ of $X$ with $A,B$ disjoint, respectively,  satisfying $\P(A_i)=\P(B_i)=\alpha_i\beta/\alpha$ for $i \in I$ and $A_i=B_i=\varnothing$ for $i\in N$; 
\item $(Y_1,\dots,Y_n)$ is a Pareto-optimal allocation of $(X-c)\id_{(A\cup B)^c}$ for preferences with distortion functions $ \hat h_1,\dots,  \hat h_n$ 
where $\hat h_i=h_i$ if $i\in N$ and $\hat h_i(t)=\id_{\{t\in (0,1)\}}$ for $i\in I$. 
\item $c\in [Q^-_{1/2}(X),Q^+_{1/2}(X)]$ and $\sum_{i=1}^n c_i=c$.
 \end{enumerate}
 \end{enumerate} 
\end{theorem}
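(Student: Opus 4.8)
The plan is to prove part (i) first and then derive part (ii) as a constructive witness that the explicit allocation in \eqref{eq:IQD_allocation2} achieves the inf-convolution value computed in (i). For part (i), I would proceed by iterated inf-convolution, exploiting the associativity of $\square$ together with the two ``atomic'' formulas already established in the excerpt. Specifically, I would first collapse the IQD agents among themselves using Theorem \ref{th:IQD}(ii), which gives $\square_{i\in I}(\lambda_i \mathrm{IQD}_{\alpha_i}) = \lambda\,\mathrm{IQD}_\alpha$ with $\lambda=\bigwedge_{i\in I}\lambda_i$ and $\alpha=\sum_{i\in I}\alpha_i$. Next I would collapse the concave (risk-averse) agents among themselves using Theorem \ref{th:3} (valid since every $h_i$ with $i\in C$ has $h_i(1)=0$, so they share a common $h(1)$), obtaining $\dboxplus_{i\in C}\rho_{\lambda_i h_i}=\rho_{\bigwedge_{i\in C}\lambda_i h_i}$; and here I would invoke Proposition \ref{pr:equivalence} to replace $\dboxplus$ by $\square$, since the $h_i$, $i\in C$, are concave. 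Writing $h=\bigwedge_{i\in C}(\lambda_i h_i)$, the problem then reduces to the single ``mixed'' inf-convolution $(\lambda\,\mathrm{IQD}_\alpha)\,\square\,\rho_h$, and the crux is the identity
\begin{equation}\label{eq:atomic}
(\lambda\,\mathrm{IQD}_\alpha)\,\square\,\rho_h = \rho_{G^\alpha_\lambda(h)}.
\end{equation}

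\textbf{The key identity \eqref{eq:atomic}.}
I expect this to be the main obstacle, so I would devote most effort here. The plan is to prove both inequalities. For the upper bound ($\le$), I would exhibit an explicit splitting of $X$ into an IQD part and a concave part and bound the aggregate welfare. Because $\mathrm{IQD}_\alpha$ only ``sees'' the quantile levels in $(\alpha,1-\alpha)$ (its distortion is $\id_{\{\alpha<t<1-\alpha\}}$), I would route the extreme $\alpha$-tails of $X$ entirely through the concave agent so they contribute nothing to the IQD agent, and optimize the middle portion; this motivates the truncation $h(t-\alpha)\wedge h(t+\alpha)$ appearing in the definition of $G^\alpha_\lambda$, which is exactly the concave agent's distortion evaluated after an $\alpha$-shift reflecting that a mass $\alpha$ has been peeled off each tail. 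The cap $\wedge\lambda$ encodes that the IQD agent's marginal rate is at most $\lambda$ on the middle block. For the matching lower bound ($\ge$), I would use the quantile representation of $\rho_h$ from Lemma \ref{lem:qr} together with formula \eqref{eq:IQD-1}, $\mathrm{IQD}_\alpha(X)=Q_\alpha^-(X)+Q_\alpha^-(-X)$, to show that no allocation can beat $\rho_{G^\alpha_\lambda(h)}(X)$; the translation-invariance/location-invariance bookkeeping (all relevant $h_i(1)=0$) guarantees the constants $c,c_i$ drop out correctly.

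\textbf{Part (ii): verifying the explicit allocation.}
Given \eqref{eq:atomic}, part (ii) becomes a verification that \eqref{eq:IQD_allocation2} is sum optimal, hence Pareto optimal. The structure mirrors the decomposition in Theorem \ref{th:IQD}(iii): on the disjoint tail events $A,B$ (each of probability $\beta=\alpha\wedge(1/2)$) the allocation is routed to the IQD agents in the pairwise counter-monotonic pattern $(X-c)\id_{A_i\cup B_i}$, so that each IQD agent $i\in I$ receives a block on which its $\mathrm{IQD}_{\alpha_i}$ is exactly captured, while the concave agents receive nothing on $A\cup B$ (consistent with $A_i=B_i=\varnothing$ for $i\in C$). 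On the central region $(A\cup B)^c$, where $X-c$ lies between its $\alpha$- and $(1-\alpha)$-quantiles, the IQD agents become indifferent (their distortion is flat there), so I would hand this residual risk $(X-c)\id_{(A\cup B)^c}$ to a comonotonic sub-problem and apply Theorem \ref{th:3} to the reduced distortion functions $h'_1,\dots,h'_n$, where IQD agents are replaced by the location-invariant $\id_{\{t\in(0,1)\}}$. The plan is to add up the three contributions (the two tails and the middle), confirm the total equals $\rho_{G^\alpha_\lambda(h)}(X)$ from \eqref{eq:atomic}, and then invoke Theorem \ref{th:1} (or its location-invariant analogue via Propositions \ref{pr:1}/\ref{pr:2}) to upgrade sum optimality to Pareto optimality. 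The only delicate points I anticipate are (a) checking the probability-mass accounting $\P(A_i)=\P(B_i)=\alpha_i\beta/\alpha$ correctly splits the IQD parameters so that the local counter-monotonic blocks realize each agent's $\alpha_i$, and (b) handling the boundary case $\alpha\ge 1/2$ (where $\beta=1/2$, $G^\alpha_\lambda(h)=0$, and the middle region degenerates) separately.
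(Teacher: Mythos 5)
Your high-level architecture for part (i) is exactly the paper's: collapse the IQD agents via Theorem \ref{th:IQD}(ii), collapse the concave agents via Theorem \ref{th:3} and Proposition \ref{pr:equivalence}, and use associativity of $\square$ to reduce everything to the single identity $(\lambda\,\mathrm{IQD}_\alpha)\,\square\,\rho_h=\rho_{G^\alpha_\lambda(h)}$ (the paper's Lemma \ref{lem:IQD}). However, your plan for proving that identity contains a genuine error: you propose to route the extreme $\alpha$-tails of $X$ to the \emph{concave} agent ``so they contribute nothing to the IQD agent.'' This is backwards. The point is that $\mathrm{IQD}_\alpha$ is blind to the tails \emph{of its own allocation}: if the IQD agent holds the tail blocks (each of probability at most $\alpha$), its quantiles $Q^-_\alpha$ and $Q^+_{1-\alpha}$ are unaffected and the tails cost nothing. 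If instead the concave agent holds the tails, its cost grows with the tail magnitude and is not reflected in $G^\alpha_\lambda(h)$, which vanishes outside $(\alpha,1-\alpha)$. Concretely, take $h$ strictly concave and continuous and $X=M\id_E$ with $\p(E)=\alpha/2$: then $\rho_{G^\alpha_\lambda(h)}(X)=0$ (and indeed giving everything to the IQD agent costs $\mathrm{IQD}_\alpha(X)=0$), while under your routing the aggregate welfare is at least $Mh(\alpha/2)\to\infty$, so your construction cannot attain the claimed value and the upper bound fails. The correct construction — consistent with part (ii) of the very statement you are proving, where $A_i=B_i=\varnothing$ for $i\in C$ — gives the tails to the IQD agent and shares the trimmed middle comonotonically, with the IQD agent acting on it as a range agent capped at marginal rate $\lambda$ (distortion $h\wedge\lambda$). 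Your lower-bound sketch is also too thin: quantile representations alone cannot bound an infimum over arbitrary non-comonotonic allocations; the paper needs the trimming identity $\mathrm{IQD}_\alpha(X_1)=\mathrm{IQD}_0(X_1^*)$, the dominations $\lambda\,\mathrm{IQD}_0\ge\rho_{h\wedge\lambda}$ and $\rho_h\ge\rho_{h\wedge\lambda}$, subadditivity of $\rho_{h\wedge\lambda}$, and the survival-function bounds $\p(X>x)-\alpha\le\p(X_1^*+X_2>x)\le\p(X>x)+\alpha$.

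There is a second gap in part (ii). You reduce it to checking sum optimality and then invoke Proposition \ref{pr:1}. That works only when the middle allocation $(Y_1,\dots,Y_n)$ is itself sum optimal (e.g., the unit-weight allocation from Theorem \ref{th:3}). But condition (b) allows $(Y_1,\dots,Y_n)$ to be an \emph{arbitrary} Pareto-optimal allocation of $(X-c)\id_{(A\cup B)^c}$, and since all $h'_i$ are location invariant, the Pareto optima form a strictly larger class than the sum optima: they are $\boldsymbol\lambda$-optimal for various weight vectors, and different weights give genuinely different allocations (this is exactly the point the paper makes after Theorem \ref{th:IQD} and illustrates in Section \ref{sec:5}). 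For such $(Y_1,\dots,Y_n)$ the combined allocation is generally not sum optimal, so unit-weight sum optimality cannot certify it. The paper instead takes a comonotonic improvement of $(Y_1,\dots,Y_n)$, applies Proposition \ref{pr:charac_location_inv} to extract a set $K$ and strictly positive weights on $K$, verifies $\rho_{h_i}(X_i)\le\rho_{h'_i}(Y_i)$ for all $i$ (in particular $\mathrm{IQD}_{\alpha_i}(X_i)\le\mathrm{IQD}_0(Y_i)$, which is where the mass accounting $\P(A_i)=\P(B_i)=\alpha_i\beta/\alpha$ that you flagged actually enters), and then uses part (i) to show the full allocation is $\boldsymbol\lambda$-optimal for suitably extended weights; it also needs a separate argument for the degenerate case where $h_i\equiv 0$ for some $i\in C$. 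None of this machinery is present in your plan.
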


 Similarly to the allocation in Theorem \ref{th:IQD}, the allocation is counter-monotonic when conditioning on tail events. On the tails, the risk is allocated to only one IQD agent at a time. Once again, this is because IQD agents do not care about tail risks. What is left of the risk is then distributed optimally among agents using the techniques introduced in Section \ref{sec:4}. The resulting allocation is quite unusual, but it can be implemented in financial markets through derivatives such as call options and digital options. Section \ref{subsec:IQD_and_RA} further analyzes the allocations found in Theorem \ref{pr:IQD_h} and gives explicit examples of the underlying financial contracts.

\begin{remark}
Let $N\subseteq [n]$ and $I=[n]\setminus N$. 
Suppose that $h_i \in \mathcal{H}^{\mathrm{C}} $ for $i\in N$ and $h_i\in \mathcal{H}^{\mathrm{IQD}}$ for $i\in I$ with IQD parameter $\alpha_i$.   For any $\lambda_1, \dots, \lambda_n \ge 0$ it is
 $ 
\dboxplus_{i=1}^n (\lambda_i \rho_{h_i})=\rho_{h_{\boldsymbol{\lambda}}}, $
where $h_{\boldsymbol{\lambda}}=\bigwedge_{i\in [n]}\lambda_i h_i$. The distortion function $h_{\boldsymbol{\lambda}}$ takes value 0 on $[0, \bigvee_{i\in I} \alpha_i] \cup[\bigvee_{i\in I} \alpha_i,1]$; on the other hand, the   distortion function  $G^\alpha_\lambda(h)$ from Theorem \ref{pr:IQD_h} takes value $0$ on $[0, \sum_{i\in I} \alpha_i] \cup[\sum_{i\in I} \alpha_i,1]$.

\end{remark}

\section{GD, MMD and IQD agents}\label{sec:5}

We now provide examples of the results obtained in Section \ref{sec:3} and \ref{sec:4}. 
Some calculation details are put in Appendix \ref{app:C}. 
The following two subsections come back to the risk sharing problem with several $\mathrm{IQD}$ agents and explains further the allocations found in Section \ref{subsec:Unc_IQD}. The last two subsections analyze the risk sharing problem when agents consider the Gini and mean-median deviations as the relevant statistical measures of risk.

\subsection{Several IQD agents} \label{subsec:IQD_agents} 
The difference between the two sum-optimal allocations found in Theorem \ref{th:IQD} and Proposition \ref{prop:9} is important.

In contrast, Figure \ref{Fig:5} illustrates some comonotonic allocations that are $\boldsymbol \lambda$-optimal  (and also Pareto optimal and sum optimal; see Proposition \ref{prop:9}) when restricted to the subset $\mathbb{A}_n^+(X)$. The solution for $\dboxplus_{i=1}^n (\lambda_i \mathrm{IQD}_{\alpha_i})$ is not unique as $|M_x|$ can be larger than 1. The figure depicts a particular case when simultaneously $\alpha_1<\alpha_2<\alpha_3$ and $\lambda_1<\lambda_2<\lambda_3$. The left panel shows the distortion function of each agent multiplied by the corresponding $\lambda$, and the lower envelope $h_{\boldsymbol \lambda}(t)$. 
Figure \ref{fig:all_com_IQD} presents a sum-optimal allocation where all three agents take non-zero risks.
Comonotonic sum-optimal allocations are not unique, because the allocation where agent 3 takes all risks in the $\alpha_3$-tails and agent 1 takes the rest is also sum optimal.  
As discussed before, comonotonic sum-optimal allocations are generally not sum optimal in $\mathbb A_n(X)$. 

\begin{figure}[ht]
\centering
\caption{Distortion functions and the sum-optimal allocation for $\dboxplus_{i=1}^n \lambda_i \mathrm{IQD}_{\alpha_i}$ } \label{Fig:5}
\begin{subfigure}[b]{0.45\textwidth}
  \centering
\begin{tikzpicture}[scale=0.75]
\draw[->] (-0.2,0) --(9.2,0) node[right] {\small $t$};
\draw[->] (0,-0.2) --(0,6.2) node[above] {\small$\lambda_i h_i(t)$};
\node at (-0.3,-0.3) {\small$0$};
\node at (9,-0.3)  {\small$1$};
\draw (1,2)--(8,2);
\draw[gray,dashed] (1,0) -- (1,2);
\draw[gray,dashed] (8,0) -- (8,2);
\draw[gray,dashed] (0,2) -- (1,2);
\node at (-0.3,2) {\small$\lambda_1$};
\node at (1,-0.3) {\small$\alpha_1$};
\node at (8,-0.3)  {\small$1-\alpha_1$};
\draw (2.5,4)--(6.5,4);
\draw[gray,dashed] (2.5,0) -- (2.5,4);
\draw[gray,dashed] (6.5,0) -- (6.5,4);
\draw[gray,dashed] (0,4) -- (2.5,4);
\node at (-0.3,4) {\small$\lambda_2$};
\node at (2.5,-0.3) {\small$\alpha_2$};
\node at (6.5,-0.3)  {\small$1-\alpha_2$};
\draw (4,6)--(5,6);
\draw[gray,dashed] (4,0) -- (4,6);
\draw[gray,dashed] (5,0) -- (5,6);
\draw[gray,dashed] (0,6) -- (4,6);
\node at (-0.3,6) {\small$\lambda_3$};
\node at (4,-0.3) {\small$\alpha_3$};
\node at (5,-0.3)  {\small$1-\alpha_3$};

\draw [red,very thick] (0,0)--(4,0);
\draw [red, very thick] (4,2)--(5,2);
\draw [red, very thick] (5,0)--(9,0);
\end{tikzpicture}
\caption{Distortion functions for $\lambda_i \mathrm{IQD}_{\alpha_i}$, $i=1,2,3$}\label{fig:one_com_IQD}
 \end{subfigure}
~
\begin{subfigure}[b]{0.45\textwidth}
\centering
\begin{tikzpicture}[scale=0.7]
\draw[->] (-0.2,0) --(9.2,0) node[right] {\footnotesize$x$};
\draw[->] (0,-0.2) --(0,6) node[above] {\footnotesize$f_i(x)$};
\node at (-0.3,-0.3) {\footnotesize$0$};

\draw[blue,domain =0:2] plot (\x ,{\x});
\draw[blue,domain =2:7] plot (\x ,{2});
\draw[blue,domain =7:9] plot (\x ,{\x-5}) node[right] {\footnotesize$X_2$};

\draw[red,domain =2:3.5] plot (\x ,{\x-2});
\draw[red,domain =3.5:5.5] plot (\x ,{1.5});
\draw[red,domain =5.5:7] plot (\x ,{\x-4});
\draw[red,domain =7:9] plot (\x ,{3}) node[right] {\footnotesize$X_3$};

\draw[black,domain =3.5:5.5] plot (\x ,{\x-3.5});
\draw[black,domain =5.5:9] plot (\x ,{2}) node[right] {\footnotesize$X_1$};

\draw[gray,dashed] (2,0) -- (2,2);
\node at (1.5,-0.3) {\footnotesize$Q^-_{1-\alpha_2}(X)$};
\draw[gray,dashed] (3.5,0) -- (3.5,1.5);
\node at (3.7,-0.3) {\footnotesize$Q^-_{1-\alpha_3}(X)$};
\draw[gray,dashed] (5.5,0) -- (5.5,2);
\node at (6,-0.3) {\footnotesize$Q^-_{\alpha_3}(X)$};
\draw[gray,dashed] (7,0) -- (7,3);
\node at (7.8,-0.3) {\footnotesize$Q^-_{\alpha_2}(X)$};
\end{tikzpicture}
\caption{Allocation for $\dboxplus_{i=1}^n \lambda_i \mathrm{IQD}_{\alpha_i}$ }\label{fig:all_com_IQD}
\end{subfigure}
\end{figure}
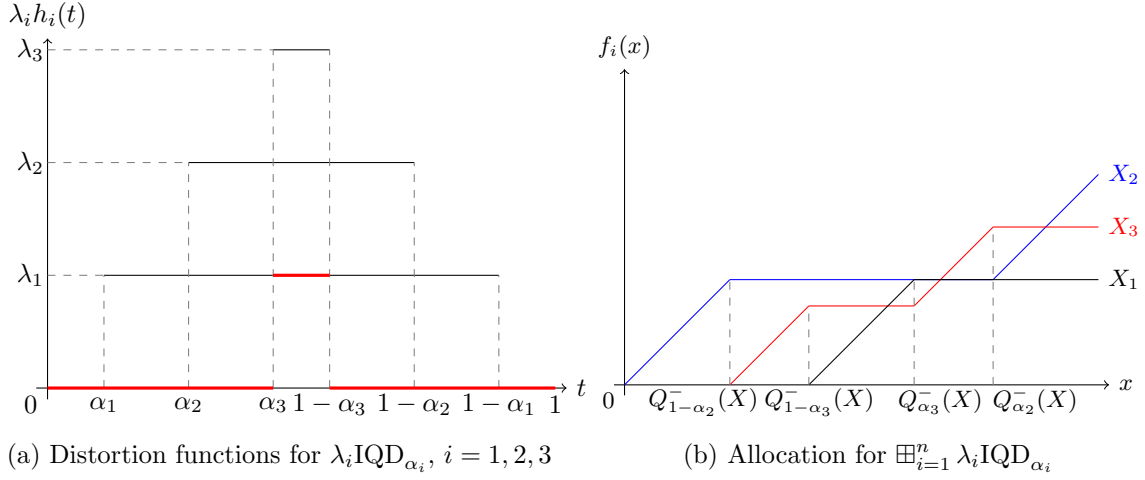

\subsection{The GD, MMD, and IQD problem}\label{subsec:IQD_and_RA} 

We now turn to the allocations characterized by Theorem \ref{pr:IQD_h}. Consider the problem of sharing risk between Anne, Bob and Carole, i.e., the case when there is only one $\mathrm{GD}$ agent, one $\mathrm{MMD}$ and $\mathrm{IQD}$ agent. Let $\alpha<1/2$ and $\lambda_1,\lambda_2, \lambda_3>0$ and consider the inf-convolution $$\inf_{(X_1,X_2, X_3)\in \mathbb{A}(X)} \left\{ \lambda_1\mathrm{GD}(X_1)+\lambda_2\mathrm{MMD}(X_2)+\lambda_3\mathrm{IQD}_\alpha(X_3)\right\}.$$
Without loss of generality we assume $Q_{1/2}^-(X)=0$ for the convenience of presentation,  so that $c$ in Theorem \ref{pr:IQD_h} is taken as $0$.

Let $A$ be a right $\alpha$-tail event and $B\subsetneq A^c$ be a left $\alpha$-tail event of $X$, where $A$ and $B$ are disjoint sets. All the $\alpha$-tail risks must go to the $\mathrm{IQD}$ agent. That is, every sum-optimal allocation requires that the IQD agent takes the whole risk on $A\cup B$.  

It remains to share risk ``in the middle", that is, on the event $(A\cup B)^c$. 
We denote by $Y=X\id_{(A\cup B)^c}$, which has an optimal allocation $(Y_1,\dots,Y_n)$ in Theorem \ref{pr:IQD_h} which is comonotonic on $(A\cup B)^c$. 
This is done in the same fashion as we do later for comonotonic risk sharing, with the caveat that the $\mathrm{IQD}$ agent might take on some risk depending on the weights $\lambda_1,\, \lambda_2$ and $\lambda_3$. Define
 $c_1=1/2-\sqrt{1/4-\lambda_3/\lambda_1}+\alpha$,  $c_2=\lambda_3/\lambda_2+\alpha$ and $c_3=1-\lambda_2/\lambda_1+\alpha$.
 If $c_1\in (\alpha,1/2)$, then $\lambda_1 h_{\mathrm{GD}}(t)$  and $\lambda_3 h_{\mathrm{IQD}}(t)$ cross twice on $(0,1)$, once at $c_1-\alpha$ and then once again at $1-c_1+\alpha$. If $c_2\in (\alpha,1/2)$, then $\lambda_2 h_{\mathrm{MMD}}(t)$ and $\lambda_3 h_{\mathrm{IQD}}(t)$ cross twice on $(0,1)$, once at $c_2-\alpha$ and then once again at $1-c_2+\alpha$.
 Similarly, if $c_3\in (\alpha,1/2)$ then $\lambda_1 h_{\mathrm{GD}}(t)$ and $\lambda_2 h_{\mathrm{MMD}}(t)$ cross at $c_3-\alpha$ and $1-c_3+\alpha$.  Note that $c_2>\alpha$ and $\alpha \le c_1\le 1/2+\alpha$ whenever $1/4\ge \lambda_3/\lambda_1$.

 We have six cases to handle; the details can be found in Appendix \ref{app:C}. Figure \ref{Fig:H_six_cases} plots the function $G_{{\lambda}}^\alpha(h)$ for $h=\min\{ \lambda_1 h_{\mathrm{GD}}, \lambda_2 h_{\mathrm{MMD}}\}$. The  red, blue, and black parts represent the risk that goes to  the $\mathrm{GD}$ agent, the $\mathrm{MMD}$ agent, and the $\mathrm{IQD}$ agent, respectively.

\begin{figure}[h!]
\caption{The function $G_{{\lambda}}^\alpha(h)$}
\label{Fig:H_six_cases}
\begin{subfigure}[b]{0.3\textwidth}
\centering
\begin{tikzpicture}[scale=4.5]
\draw[->] (0.15,0) --(0.9,0) node[below] {$t$};
\draw[->] (0.2,-0.1) --(0.2,0.3) node[above] {$G_{{\lambda}}^\alpha(h)(t)$};
\draw[red,domain =0.25:0.5] plot (\x ,{(\x-0.25)-(\x-0.25)*(\x-0.25});
\draw[red,domain =0.5:0.75] plot (\x ,{(\x+0.25)-(\x+0.25)*(\x+0.25});
\node[below] at (0.25,0) {$\alpha$};
\node[below] at (0.75,0) {$1-\alpha$};
\end{tikzpicture}
\caption{Case 1}
\end{subfigure}~~~
\begin{subfigure}[b]{0.3\textwidth}
\centering
\begin{tikzpicture}[scale=4.5]
\draw[->] (0.15,0) --(0.9,0) node[below] {$t$};
\draw[->] (0.2,-0.1) --(0.2,0.3) node[above] {$G_{{\lambda}}^\alpha(h)(t)$};
\draw[blue,domain =0.25:0.5] plot (\x ,{(\x-0.25)});
\draw[blue,domain =0.5:0.75] plot (\x ,{(0.75-\x)});
\node[below] at (0.25,0) {$\alpha$};
\node[below] at (0.75,0) {$1-\alpha$};
\end{tikzpicture}
\caption{Case 2}
\end{subfigure}~~~
\begin{subfigure}[b]{0.3\textwidth}
\centering
\begin{tikzpicture}[scale=4.5]
\draw[->] (0.15,0) --(0.9,0) node[below] {$t$};
\draw[->] (0.2,-0.1) --(0.2,0.3) node[above] {$G_{{\lambda}}^\alpha(h)(t)$};
\draw[red,domain =0.25:0.37] plot (\x ,{(\x-0.25)-(\x-0.25)*(\x-0.25)});
\draw[gray,dashed] (0.37,0) --(0.37,0.1);
\draw[red,domain =0.63:0.75] plot (\x ,{(\x+0.25)-(\x+0.25)*(\x+0.25)});
\draw[black,domain =0.37:0.63] plot (\x ,{0.1});
\node[below] at (0.25,0) {$\alpha$};
\node[below] at (0.75,0) {$1-\alpha$};
\node[below] at (0.37,0) {$c_1$};
\end{tikzpicture}
\caption{Case 3}
\end{subfigure}\\
\begin{subfigure}[b]{0.3\textwidth}
\centering
\begin{tikzpicture}[scale=4.5]
\draw[->] (0.15,0) --(0.9,0) node[below] {$t$};
\draw[->] (0.2,-0.1) --(0.2,0.3) node[above] {$G_{{\lambda}}^\alpha(h)(t)$};
\draw[blue,domain =0.25:0.35] plot (\x ,{(\x-0.25)});
\draw[blue,domain =0.65:0.75] plot (\x ,{(0.75-\x)});
\draw[black,domain =0.35:0.65] plot (\x ,{0.1});
\node[below] at (0.25,0) {$\alpha$};
\node[below] at (0.75,0) {$1-\alpha$};
\draw[gray,dashed] (0.35,0) --(0.35,0.1);
\node[below] at (0.35,0) {$c_2$};
\end{tikzpicture}
\caption{Case 4}
\end{subfigure}~~~
\begin{subfigure}[b]{0.3\textwidth}
\centering
\begin{tikzpicture}[scale=4.5]
\draw[->] (0.15,0) --(0.9,0) node[below] {$t$};
\draw[->] (0.2,-0.1) --(0.2,0.3) node[above] {$G_{{\lambda}}^\alpha(h)(t)$};
\draw[blue,domain =0.25:1/11+0.25] plot (\x ,{(\x-0.25)});
\draw[blue,domain =0.75-1/11:0.75] plot (\x ,{(0.75-\x)});
\draw[red,domain =0.25+1/11:0.5] plot (\x ,{1.1*(\x-0.25)-1.1*(\x-0.25)*(\x-0.25)});
\draw[red,domain =0.5:0.75-1/11] plot (\x ,{1.1*(\x+0.25)-1.1*(\x+0.25)*(\x+0.25)});
\node[below] at (1/11+0.25,0) {$c_3$};
\draw[gray,dashed] (1/11+0.25,0) --(1/11+0.25,0.1);
;
\node[below] at (0.25,0) {$\alpha$};
\node[below] at (0.75,0) {$1-\alpha$};
\end{tikzpicture}
\caption{Case 5}
\end{subfigure}~~~
\begin{subfigure}[b]{0.3\textwidth}
\centering
\begin{tikzpicture}[scale=4.5]
\draw[->] (0.15,0) --(0.9,0) node[below] {$t$};
\draw[->] (0.2,-0.1) --(0.2,0.3) node[above]  {$G_{{\lambda}}^\alpha(h)(t)$};
\draw[blue,domain =0.25:1/11+0.25] plot (\x ,{(\x-0.25)});
\draw[blue,domain =0.75-1/11:0.75] plot (\x ,{(0.75-\x)});
\draw[red,domain =0.25+1/11:0.41] plot (\x ,{1.1*(\x-0.25)-1.1*(\x-0.25)*(\x-0.25)});
\draw[red,domain =0.59:0.75-1/11] plot (\x ,{1.1*(\x+0.25)-1.1*(\x+0.25)*(\x+0.25)});
\draw[black,domain =0.41:0.59] plot (\x ,{0.15});
\node[below] at (0.25,0) {$\alpha$};
\node[below] at (0.75,0) {$1-\alpha$};
\end{tikzpicture}
\caption{Case 6}
\end{subfigure}
\end{figure}
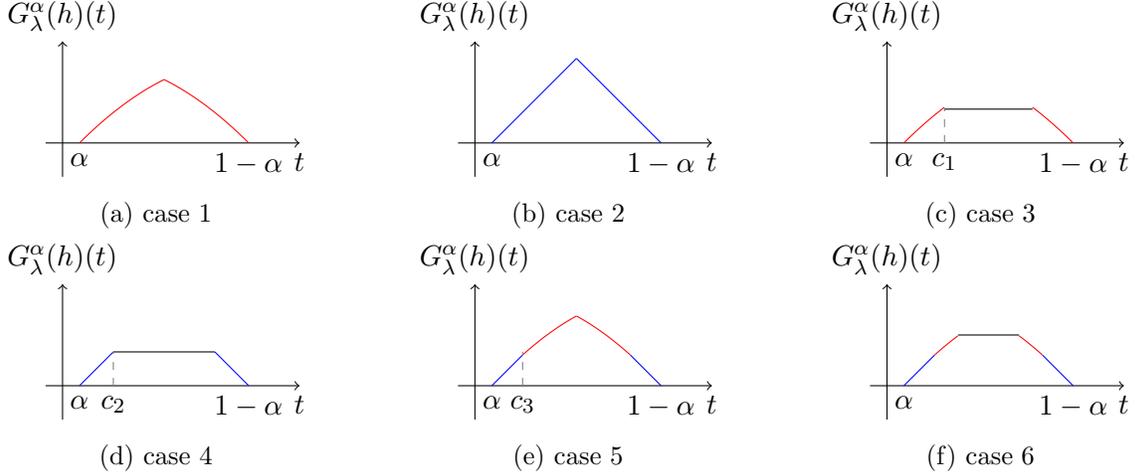

We present the Pareto-optimal allocations $(X_1,X_2,X_3)$ in the six cases below. 
These allocations are generally not comonotonic, 
but they are comonotonic on  the event $(A\cup B)^c$.
 Recall that $Y$ stands for $X\id_{(A\cup B)^c}$. 

\noindent\underline{\textbf{Case 1},
 $c_1\ge 1/2$ and $c_3\le \alpha$}:
$ X_1=Y,~X_2=0~\mathrm{and}~X_3=X\id_{A\cup B}.$

\noindent\underline{\textbf{Case 2}, $c_2\ge 1/2$ and $c_3\ge 1/2$}: 
$X_1=0, ~X_2=Y ~\mathrm{and}~X_3=X\id_{A\cup B}.$

\noindent \underline{\textbf{Case 3}, 
 $c_3\le \alpha<c_1<1/2$}:   
$X_1=X -X_3, ~X_2=0 $ and 
$X_3=X\id_{A\cup B}+Y \wedge Q^-_{c_1}(X)- Y \wedge Q^-_{1-c_1}(X).$
\noindent \underline{\textbf{Case 4}, 
either 
$\alpha<c_1<c_3<1/2$ or
$\alpha<c_2<1/2<c_3$}: 
$X_1=0,~ X_2=X-X_3 $ {and}
$$X_3=X\id_{A\cup B}+ Y \wedge Q^-_{c_2}(X)- Y \wedge Q^-_{1-c_2}(X).$$ 
\noindent \underline{\textbf{Case 5},  $\alpha<c_3<1/2< c_1$}: 
$X_2=X-X_1-X_3,~X_3=X\id_{A\cup B}$ 
and 
$X_1= Y \wedge Q^-_{c_3}(X)- Y \wedge Q^-_{1-c_3}(X).$
\noindent \underline{\textbf{Case 6}, $\alpha<c_3\le c_1<1/2$}:
$X_1= Y \wedge Q^-_{c_3}(X)- Y \wedge Q^-_{c_1}(X)+Y \wedge Q^-_{1-c_1}(X)- Y \wedge Q^-_{1-c_3}(X),$
$$X_2=Y-Y \wedge Q^-_{c_3}(X)+ Y \wedge Q^-_{1-c_3}(X)
~~\mathrm{and}~~X_3=X\id_{A\cup B} + Y \wedge Q^-_{c_1}(X)- Y \wedge Q^-_{1-c_1}(X).$$

The allocation in Case 6 shows a particularly rich feature, and we depict it in Figure \ref{fig:4}. 

\begin{figure}[t]
\caption{Interpreting the Pareto-optimal allocation in Case 6}
\begin{subfigure}[b]{0.99\textwidth}
\begin{center}
\caption{A Pareto-optimal allocation for Anne, Bob and Carole in Case 6, where the red, blue and gray areas represent  the allocations to Anne (GD), Bob (MMD) and Carole (IQD) respectively, up to constant shifts} \label{fig:4}
\vspace{-0cm}
\begin{tikzpicture}[scale=1]

\fill [top color=gray!20, bottom color=gray!20] (0,0)--(5,0)--(3.65,-0.8)-- (1,-0.8)--(1,-2.55)-- (0.9,-2.65) -- (0.8,-2.75)  -- (0.7,-2.86) -- (0,-4);

\fill [top color=gray!20, bottom color=gray!20] (10,0)--(5,0)--(6.33,0.8)-- (9,0.8) --(9,2.9)-- (9.1,3.02) -- (9.2,3.20)  -- (9.3,3.36) -- (9.6,3.98)  -- (10,5);

\fill [top color=red!20, bottom color=red!20] (9,0.8)--(6.33,0.8)--(7,1.24)--(7.5,1.57)--(7.76,1.75)--(9,1.75)--(9,0.8);

\fill [top color=red!20, bottom color=red!20] (1,-0.8)--(3.65,-0.8)--(2.25,-1.6)--(1,-1.6)--(1,-0.8);

\fill [top color=blue!20, bottom color=blue!20] (1,-1.6)--(2.25,-1.6)--(2.1,-1.7)--(2,-1.76)--(1.8,-1.9)--(1.6,-2.02)--(1.4,-2.17)--(1.2,-2.34)--(1.1,-2.43)--(1,-2.55)--(1,-1.6);

\fill [top color=blue!20, bottom color=blue!20] (9,1.75)--(7.76,1.75)--(7.8,1.76)--(7.9,1.85)--(8,1.92)--(8.2,2.06)--(8.4,2.23)--(8.6,2.43)--(8.8,2.63)--(8.9,2.75)--(9,2.9);
 
\draw[gray,dashed] (1,0) -- (1,-2.55);
\draw[gray,dashed] (9,0) -- (9,2.9);  

\draw[loosely dashed] (0,-2.55) -- (10,-2.55); 
\node[left] at (0,-2.55)  {\footnotesize $Q^-_{1-\alpha}(X)$}; 
\draw[loosely dashed] (0,2.9) -- (10,2.9); 
\node[left] at (0,2.9)  {\footnotesize $Q^-_{\alpha}(X)$}; 
\node[left] at (0,0)  {\footnotesize $Q^-_{1/2}(X)$}; 

\draw [black] (0,0) -- (10,0);
   \node at (0.5,0.45)   { \footnotesize $B$}  ;              
      \node at (9.5,-0.45)    {\footnotesize $A$};

\node at (10.2,-0.3) { $\omega$};
\node[left] at (9.7,4.3) {  $X(\omega)$}; 
\draw [black] (0,-4) .. controls (1,-2.3) .. (4,-0.6);
\draw [black] (6,0.6) .. controls (9,2.5) .. (10,5);
\draw [black] (4,-0.6) to (6,0.6); 

\draw [decorate,  decoration = {brace,    raise=1pt,    amplitude=5pt}] (0,0) --  (1,0);
\draw [decorate,  decoration = {brace,    raise=1pt,    amplitude=5pt}] (10,0) --  (9,0); 

\draw [dashed] (1,-0.8) to (3.6,-0.8); 
\draw [dashed] (1,-1.6) to (2.2,-1.6); 
\draw [dashed] (9,0.8) to (6.4,0.8); 
\draw [dashed] (9,1.75) to (7.8,1.75);

\end{tikzpicture}
\end{center}

\end{subfigure}\\
\begin{subfigure}[b]{0.99\textwidth}
\begin{center}
\caption{Payoff of Anne from the derivative contracts }
\label{fig:anna}
\begin{tikzpicture}[scale=1.5]
\draw[->] (-0.2,0) --(6.5,0) node[right] {$P$};
\draw[->] (0,-0.2) --(0,2.2) node[above] {Payoff};

\node[below] at (0.5,0) {\footnotesize{$Q^-_{1-\alpha}(P)$}};
\node[below] at (1.5,0) {\footnotesize{$Q^-_{1-c_3}(P)$}};
\node[below] at (2.5,0) {\footnotesize{$Q^-_{1-c_1}(P)$}};
\node[below] at (3.5,0) {\footnotesize{$Q^-_{c_1}(P)$}};
\node[below] at (4.5,0) {\footnotesize{$Q^-_{c_3}(P)$}};
\node[below] at (5.5,0) {\footnotesize{$Q^-_{\alpha}(P)$}};

\draw[red, domain =0:0.5] plot (\x ,{1});
\draw[red, domain =0.5:1.5] plot (\x ,{0});
\draw[red,domain =1.5:2.5] plot (\x ,{\x-1.5});
\draw[red,domain =2.5:3.5] plot (\x ,{1}); 
\draw[red,domain =3.5:4.5] plot (\x ,{\x-2.5});
\draw[red,domain =4.5:5.5] plot (\x ,{2});
\draw[red,domain =5.5:6] plot (\x ,{1});

 \draw[dash pattern={on 0.84pt off 2.51pt}] (0.5,1)--(0.5,0);
  \draw[dash pattern={on 0.84pt off 2.51pt}] (3.5,1)--(3.5,0);
 \draw[dash pattern={on 0.84pt off 2.51pt}] (5.5,2)--(5.5,0);
\end{tikzpicture}
\end{center}
\end{subfigure}
\end{figure}

\begin{remark}[Interpreting the Pareto-optimal allocation by derivative contracts]
We consider Case 6, the most sophisticated case.
Let $P=-X>0$ be the future (after one period) price of an asset to be shared by the three agents. 
Due to symmetry of the preferences, $(-X_1,-X_2,-X_3)$ in Case 6 is a Pareto-optimal allocation for $P$.
The corresponding sharing strategy can be obtained by trading bull call spreads (differences of call options) and digital options.
We will describe the position for Anne first. 
Anne purchases two bull call spreads (a and b) and  a digital put option (c) from Carole, and sells a digital call option (d) to Carole. 
\begin{enumerate}[a:]
\item A bull call spread  with  long call strike price $Q_{c_1}^{-}(P)$ and short call strike  $Q_{c_3}^{-}(P)$.
\item A bull call spread with long call strike price $Q_{1-c_3}^{-}(P)$ and short call strike  $Q_{1-c_1}^{-}(P)$.
\item $Q_{1-c_1}^-(P)-Q_{1-c_3}^-(P)$ units of a digital put with strike price $Q_{1-\alpha}^-(P)$. 
\item $Q_{c_3}^{-}(P)-Q_{c_1}^{-}(P)$ units of a digital put with strike price $Q_{\alpha}^{-}(P)$.
\end{enumerate}
We can check that the above derivative portfolio precisely gives the payoff $-X_1+y$ in Case 6 with $P=-X$, where $y$ is a constant (see Figure \ref{fig:anna}).  
We can construct Bob's strategy using bull call spreads and digital options similarly. Carole writes the above derivative contracts to Anne and Bob and keeps the asset to herself. 
The prices of these contracts can be negotiated among the agents, and they do not affect the optimality of the allocation because of the translation invariance of variability measures.

\end{remark}

\subsection{Insurance between two GD and MMD agents}
We next solve the insurance example (Example \ref{ex:intro}) presented in the introduction.  Consider two individuals, Anne and Bob, who evaluate their risk with GD and MMD, respectively.  That is, set $h_1=h_{\mathrm{GD}}$ and $h_2=h_{\mathrm{MMD}}$.  (Or, they could use  $\E+\lambda_1 \mathrm{GD}$ and $\E+\lambda_2\mathrm{MMD}$, which would not change our analysis.)  This setting is simpler than the three-agent problem in Section \ref{subsec:IQD_and_RA}, and it offers a clearer visualization of the Pareo-optimal allocation.

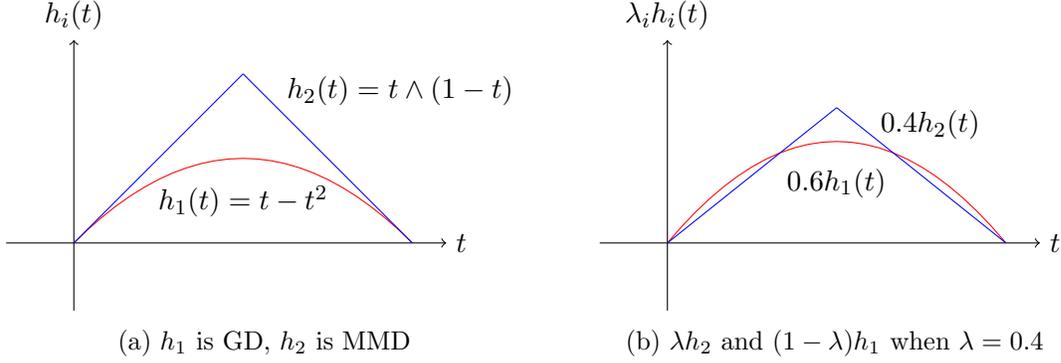
\begin{figure}[t]
\caption{Distortion functions of  GD  and MMD agents}
\label{fig:distortion_MMD_vs_GD}
\begin{subfigure}[b]{0.45\textwidth}
\centering
\begin{tikzpicture}[scale=4.5]
\draw[->] (-0.2,0) --(1.1,0) node[right] {$t$};
\draw[->] (0,-0.2) --(0,0.6) node[above] {$h_i(t)$};
\draw[red,domain =0:1] plot (\x ,{\x-\x*\x});
\draw[blue,domain =0:0.5,] plot (\x ,{\x});
\draw[blue,domain =0.5:1] plot (\x ,{1-\x});
\node[below] at (0.5,0.2) {$h_1(t)=t-t^2$};
\node[right] at (0.6,0.45) {$h_2(t)=t\wedge (1-t)$};
\end{tikzpicture}
\caption{ $h_1$ is GD, $h_2$ is MMD}
\end{subfigure}
\begin{subfigure}[b]{0.45\textwidth}
\centering
\begin{tikzpicture}[scale=4.5]
\draw[->] (-0.2,0) --(1.1,0) node[right] {$t$};
\draw[->] (0,-0.2) --(0,0.6) node[above] {$\lambda_i h_i(t)$};
\draw[red,domain =0:1] plot (\x ,{1.2*\x-1.2*\x*\x});
\draw[blue,domain =0:0.5,] plot (\x ,{0.8*\x});
\draw[blue,domain =0.5:1] plot (\x ,{0.8-0.8*\x});
\node[below] at (0.5,0.25) {$0.6h_1(t)$};
\node[right] at (0.6,0.35) {$0.4h_2(t)$};
\end{tikzpicture}
\caption{$\lambda h_2$ and $(1-\lambda) h_1$ when $\lambda =0.4$}
\end{subfigure}
\end{figure}

Both $h_1$ and $h_2$ are strictly concave, and, by Proposition \ref{pr:como}, any Pareto-optimal allocation in $\mathbb A_n(X)$ is comonotonic. By Proposition \ref{pr:2}, each Pareto-optimal allocation can be found by solving the inf-convolution $\dboxplus_{i=1}^2 (\lambda_i\rho_{h_i})$ for some Negishi weights $(\lambda_1,\lambda_2)\in [0,\infty)^2\setminus \{\mathbf 0\}$. Consider the normalized ones $\lambda_1=\lambda \in [0,1]$ and $\lambda_2=1-\lambda$. Figure \ref{fig:distortion_MMD_vs_GD} depicts the functions $h_i(t)$ and $\lambda_i h_i(t)$.

By positive homogeneity it is $\lambda \rho_{h_1}(X_1)= \rho_{\lambda h_1}(X_1)$ for Anne,
and similarly for Bob. By Corollary \ref{cor:th3}, we have 
$ \dboxplus_{i=1}^2\rho_{\lambda_i h_i}=\rho_{h_{\boldsymbol \lambda}}, $
where $h_{\boldsymbol \lambda}(t)=\min\{\lambda h_1(t), (1-\lambda) h_2(t)\}$. That is, the sum-optimal allocation gives all the marginal $t$-quantile risk to the individual with the lowest $\lambda_i h_i(t)$. 

The condition of Theorem \ref{th:3}   is satisfied, and so the $(\lambda_1,\lambda_2)$-optimal allocation is unique up to constant shifts. Any Pareto-optimal allocation takes the form
$$X_1=X\wedge Q^{-}_{c}(X) - X\wedge Q^{-}_{1-c}(X)+k ~~\mathrm{and}~~ X_2=X-X_1,$$
where $c \in [0,1/2]$ and $k\in \R$ is a constant. We can interpret this as a situation where the GD individual insures the potential losses $X$ of the MMD one. The ceded risk is the random variable $X_1$, while its price is $k$, the latter which needs to be negotiated between the two agents. Next, we argue that the mapping $\lambda \mapsto c$ is surjective.

(i) If $\lambda<1/2$ we have $\lambda h_1<(1-\lambda) h_2$ everywhere and so $c=0$. That is, the ceded risk is $X$ and the GD agent provides full insurance. (ii) Similarly, if $\lambda>2/3$ we have $\lambda h_1>(1-\lambda) h_2$ everywhere and so $c=1/2$. It is the MMD agent that retains all the risk and no insurance is provided. Finally, (iii) if $1/2 < \lambda< 2/3$ then $\lambda h_1>(1-\lambda) h_2$ on both $(0, (2\lambda-1)/\lambda)$ and $((1-\lambda)/\lambda,1)$ and $\lambda h_1<(1-\lambda) h_2$ on $((2\lambda-1)/\lambda, (1-\lambda)/\lambda)$.  Hence, $c=(2\lambda-1)/\lambda$ and the ceded risk has a simple deductible $Q^{-}_{1-c}(X)$ and an upper limit $Q^{-}_{c}(X)$. This type of allocation is depicted in Figure \ref{fig:allocation_MMD_vs_GD}.

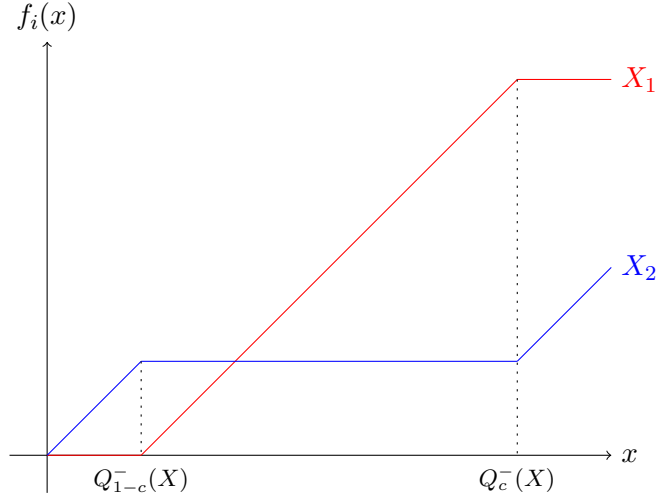
\begin{figure}[h!]
\begin{center}
\caption{A Pareto-optimal allocation for the MMD and GD pair}
\label{fig:allocation_MMD_vs_GD}
\begin{tikzpicture}[scale=2.5]
\draw[->] (-0.2,0) --(3,0) node[right] {$x$};
\draw[->] (0,-0.2) --(0,2.2) node[above] {$f_i(x)$};
\draw[red,domain =0:0.5] plot (\x ,{0});
\draw[red,domain =0.5:2.5] plot (\x ,{\x-0.5});
\draw[red,domain =2.5:3] plot (\x ,{2}) node[right] {$X_1$};

\draw[blue,domain =0:0.5] plot (\x ,{\x});
\draw[blue,domain =0.5:2.5] plot (\x ,{0.5});
\draw[blue,domain =2.5:3] plot (\x ,{\x-2}) node[right] {$X_2$};

\node[below] at (0.5,0) {\footnotesize{$Q^-_{1-c}(X)$}};
\node[below] at (2.5,0) {\footnotesize{$Q^-_{c}(X)$}};

\draw[dash pattern={on 0.84pt off 2.51pt}] (0.5,0.5)--(0.5,0);
\draw[dash pattern={on 0.84pt off 2.51pt}] (2.5,2)--(2.5,0);
\end{tikzpicture}
\end{center}
\end{figure}

The constant $k$ can take any value because by location invariance, for any $k\in \R$, we have  $\rho_{h_i}(X_i+k)=\rho_{h_i}(X_i)+h_i(1)k=\rho_{h_i}(X_i)$ 
and the price of the insurance does not affect Pareto optimality.  
This observation   remains true if agents use $\E+\lambda_i\rho_{h_i}$ instead of $\rho_{h_i}$.

\subsection{Risk sharing with several mixed GD-MMD agents}
We conclude with the problem of sharing risk among many agents $i\in[n]$ evaluating their risks with the variability measure
$$\rho_{h_i}(X_i)=\int X_i \d \left(\left(a_i h_{\mathrm{GD}} + (1-a_i)h_{\mathrm{MMD}}\right )\circ \P\right)=a_i \mathrm{GD}(X_i) +(1-a_i)\mathrm{MMD}(X_i), $$
$a_i\in [0,1]$. It is easily verified that for every $i\in[n]$ the distortion function $h_i= a_i h_{\mathrm{GD}} + (1-a_i)h_{\mathrm{MMD}}$ is strictly concave and satisfies $h_i(1)=0$.
We can therefore invoke Theorem \ref{th:3}, Corollary \ref{cor:th3} and Proposition \ref{pr:charac_location_inv} to characterize the set of Pareto-optimal allocations. Consider the usual normalization of the Negishi weights $\sum_{i=1}^n\lambda_i =1$ with $\lambda_i>0$ and notice that
$$ 
\dboxplus_{i=1}^n\rho_{\lambda_i h_i}=\rho_{h_{\boldsymbol \lambda}},
$$
where $h_{\boldsymbol \lambda}(t)=\min\{\lambda_1 h_1(t),\dots,\lambda_n h_n(t)\}$.

Deriving every agent's allocation (contract) in a closed-form solution is a bit more cumbersome. Yet, Theorem \ref{th:3} and Corollary \ref{cor:th3} still fully pin down the shape of the optimal allocation, and we can visualize it easily. Consider the case when $0<\lambda_1 a_1<\lambda_2 a_2 <\dots< \lambda_n a_n$ and set $M_x=\{i\in [n]: \lambda_i h_i(\p(X>x))=h_{\boldsymbol \lambda} (\p(X>x))\}$ as before. We have that $\vert M_x\vert = 1$ $\mu_X$-almost surely, so the sum-optimal allocation is unique up to constant shifts for any $\boldsymbol \lambda$. Figure \ref{Fig:3} shows an example with three agents.
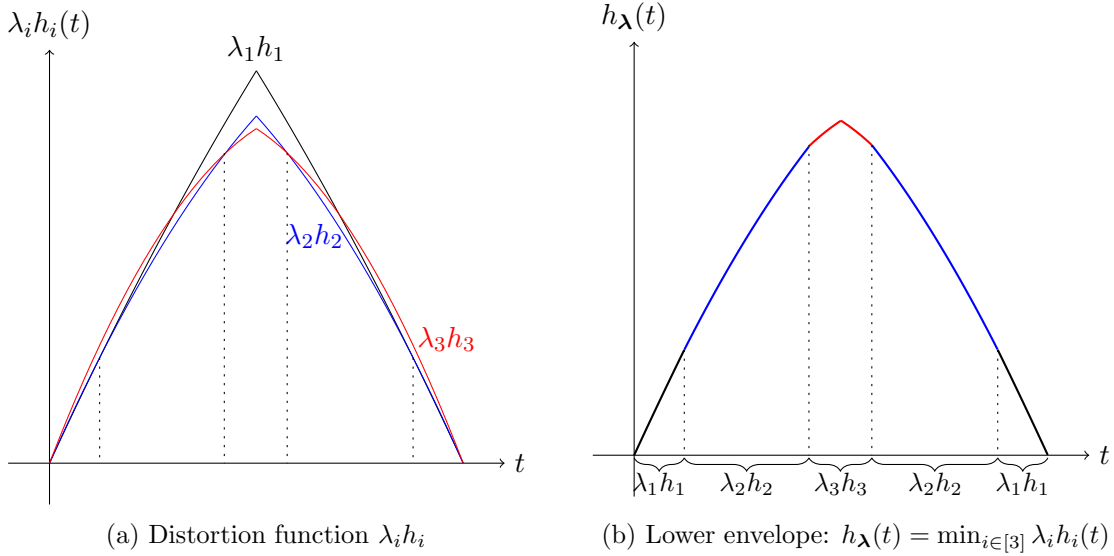
\begin{figure}[h!]
\caption{Distortion functions for mixed GD-MMD agents, where $a_1=1/4, ~a_2=1/2,~a_3=3/4$ and $\boldsymbol \lambda=(0.31,0.32,0.37)$}\label{Fig:3}
\centering
\begin{subfigure}[b]{0.45\textwidth}
\centering
\begin{tikzpicture}[scale=5.5]
\draw[->] (-0.1,0) --(1.1,0) node[right] {$t$};
\draw[->] (0,-0.1) --(0,1) node[above] {$\lambda_ih_i(t)$};

\draw[black,domain =0:0.5] plot (\x ,{2.17*(0.25*(\x-\x*\x)+0.75*(\x))}) node[above] {$\lambda_1h_1$};
\draw[black,domain =0.5:1] plot (\x ,{2.17*(0.25*(\x-\x*\x)+0.75*(1-\x))});
\draw[blue,domain =0:0.5] plot (\x ,{2.24*(0.5*(\x-\x*\x)+0.5*(\x))});
\node[blue] at (0.64,0.55) {$\lambda_2h_2$};
\draw[blue,domain =0.5:1] plot (\x ,{2.24*(0.5*(\x-\x*\x)+0.5*(1-\x))});
\draw[red,domain =0:0.5] plot (\x ,{2.59*(0.75*(\x-\x*\x)+0.25*(\x))});
\draw[red,domain =0.5:1] plot (\x ,{2.59*(0.75*(\x-\x*\x)+0.25*(1-\x))});
\node[red] at (0.96,0.3) {$\lambda_3h_3$};

\draw[dash pattern={on 0.84pt off 2.51pt}] (0.1212,0.2551)--(0.1212,0);
\draw[dash pattern={on 0.84pt off 2.51pt}] (0.4225,0.7504)--(0.4225,0);
\draw[dash pattern={on 0.84pt off 2.51pt}] (0.5745,0.7504)--(0.5745,0);
\draw[dash pattern={on 0.84pt off 2.51pt}] (0.8788,0.2551)--(0.8788,0);

\end{tikzpicture}
\caption{Distortion function $\lambda_ih_i$}
\end{subfigure}
~
\begin{subfigure}[b]{0.45\textwidth}
\centering
\begin{tikzpicture}[scale=5.5]
\draw[->] (-0.1,0) --(1.1,0) node[right] {$t$};
\draw[->] (0,-0.1) --(0,1) node[above] {$h_{\boldsymbol \lambda }(t)$};
\draw[black,domain =0:0.1212, thick] plot (\x ,{2.17*(0.25*(\x-\x*\x)+0.75*(\x))});
\draw[black,domain =0.8788:1, thick] plot (\x ,{2.17*(0.25*(\x-\x*\x)+0.75*(1-\x))});
\draw[blue,domain =0.1212:0.4225, thick] plot (\x ,{2.24*(0.5*(\x-\x*\x)+0.5*(\x))});
\draw[blue,domain =0.5745:0.8788, thick] plot (\x ,{2.24*(0.5*(\x-\x*\x)+0.5*(1-\x))});
\draw[red,domain =0.4225:0.5, thick] plot (\x ,{2.59*(0.75*(\x-\x*\x)+0.25*(\x))});
\draw[red,domain =0.5:0.5745, thick] plot (\x ,{2.59*(0.75*(\x-\x*\x)+0.25*(1-\x))});

\draw[dash pattern={on 0.84pt off 2.51pt}] (0.1212,0.2551)--(0.1212,0);
\draw[dash pattern={on 0.84pt off 2.51pt}] (0.4225,0.7504)--(0.4225,0);
\draw[dash pattern={on 0.84pt off 2.51pt}] (0.5745,0.7504)--(0.5745,0);
\draw[dash pattern={on 0.84pt off 2.51pt}] (0.8788,0.2551)--(0.8788,0);

\draw [decorate,  decoration = {brace,    raise=1pt,    amplitude=5pt}] (0.1212,0) --  (0,0);
\draw [decorate,  decoration = {brace,    raise=1pt,    amplitude=5pt}] (0.4225,0) --  (0.1212,0);
\draw [decorate,  decoration = {brace,    raise=1pt,    amplitude=5pt}] (0.5745,0) --  (0.4225,0);
\draw [decorate,  decoration = {brace,    raise=1pt,    amplitude=5pt}] (0.8788,0) --  (0.5745,0);
\draw [decorate,  decoration = {brace,    raise=1pt,    amplitude=5pt}] (1,0) --  (0.8788,0);

\node at (0.0606,-0.07)   {\small $\lambda_1h_1$}; 
\node at (0.27185,-0.07)   {\small $\lambda_2h_2$}; 
\node at (0.5,-0.07)   {\small $\lambda_3h_3$}; 
\node at (0.72815,-0.07)   {\small $\lambda_2h_2$}; 
\node at (0.9394,-0.07)   { \small$\lambda_1h_1$}; 
\end{tikzpicture}
\caption{Lower envelope:  $h_{\boldsymbol \lambda }(t)= \min_{i \in [3]}\lambda_ih_i (t)$}
\end{subfigure}
\end{figure}

As we obtained in the previous application, $h_{\boldsymbol \lambda}$ induces a partition of the state space on which only one agent takes the full marginal risk. That is, the Pareto-optimal allocation's shape is similar to the payoff obtained with a collection of straight deductibles insurance contracts with upper limits. For instance, the part of the risk that goes to agent 2 is 
$$X_2 = X\wedge b-X\wedge a + X \wedge d -X\wedge c$$
for $0<a<b<c<d<\infty$ implicitly defined through the lower envelope $h_{\boldsymbol \lambda}(t)$.

\section{Heterogeneous beliefs in comonotonic risk sharing}
  \label{sec:7}
   
We considered throughout an atomless probability space $(\Omega, \mathcal F,\p)$. This assumption entails that every individual $i\in [n]$ agrees on the fundamentals of the risk to be shared. 
We now show that all our results on comonotonic risk sharing can be extended to incorporate heterogeneous beliefs with almost no extra effort; this is not true for the unconstrained setting of risk sharing in Section \ref{sec:3}.  
Our characterization of comonotonic risk sharing extends the main results of \cite{L20}, which focus on dual utilities. 
  See also  
 \cite{ELMW20}, \cite{BG20}  and  \cite{Lieb24} for risk sharing with risk measures and heterogeneous beliefs. 

Let $(\Omega,\mathcal F)$ be a measurable space that allows for atomless probability measures and denote by $\P_i$ the atomless probability measure that agent $i\in[n]$ considers. That is, every individual $i\in [n]$ believes the probability space $(\Omega,\mathcal F,\P_i)$ is the true one.
Let $\mathcal{P}$ be the set of atomless probability measures on the measurable space $(\Omega, \mathcal F)$ and let $\ll$ denote absolute continuity.
As before, every individual evaluates their risk with the distortion riskmetric $$\rho_{h_i}^{\P_i}(X)=\int X \d \left ( h_i\circ \P_i\right).$$
For a probability measure $\p$, we define the corresponding  left quantile as $Q^{\p}_t(X)=\inf\{x\in \R: \p(X\le x)\ge 1-t \}$.

The next lemma is instrumental in proving our last result:
\begin{lemma}\label{lemma:absolute_continuity}
Let $\P_0,\p \in \mathcal{P}$ be such that $\P_0 \ll \p$ and $h\in \mathcal{H}^{\mathrm{BV}}$, and suppose that $X \in \mathcal X$ is continuously distributed under $\p$. The function  $g(t)=h(\P_0(X>Q^{\p}_{t}(X)))$, $t \in [0,1]$,  satisfies  $\rho_h^{\P_0}(f(X))= \rho_g^\p(f(X))$ for any increasing functions $f:\R \to \R$.
\end{lemma}

Lemma \ref{lemma:absolute_continuity} states that if a belief $\P_0$ is absolutely continuous with respect to a probability measure $\P$ and if a random variable $X$ is continuously distributed under $\P$, then we can always find a distortion function $g$ such that the two distortion riskmetrics $\rho_h^{\P_0}$ and $\rho_g^\p$ are exactly the same for every random variable $Y=f(X)$ comonotonic with $X$. 

Our last proposition states that when every belief is sufficiently ``well-behaved", then the comonotonic risk sharing problem with heterogeneous beliefs is equivalent to a comonotonic risk sharing problem with homogeneous belief $\P$.
\begin{proposition}\label{pr:heterogeneous_beliefs}
Let $\P_1, \dots, \P_n \in \mathcal{P}$, $h_1, \dots, h_n \in \mathcal{H}^{\mathrm{BV}}$ be given and let $X\in \mathcal X$ admit a density under all $\P_1, \dots, \P_n$. There exist a probability measure $\P \in \mathcal{P}$ and a collection of distortion functions $g_1, \dots, g_n \in \mathcal{H}^{\mathrm{BV}}$ such that the allocation $(X_1, \dots, X_n) \in \mathbb{A}_n^+(X)$ is Pareto optimal for $(\rho_{h_1}^{\P_1}, \dots, \rho_{h_n}^{\P_n})$ if and only if it is Pareto optimal for $(\rho_{g_1}^{\P}, \dots, \rho_{g_n}^{\P})$.
\end{proposition}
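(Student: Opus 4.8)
The plan is to reduce the heterogeneous-belief problem to the homogeneous-belief one by exhibiting a single dominating reference measure, and then invoking Lemma \ref{lemma:absolute_continuity} agent by agent. First I would fix the reference measure to be the equal mixture $\P=\frac{1}{n}\sum_{i=1}^n \P_i$ (any strictly convex combination works). Since a mixture of atomless measures is atomless, $\P\in\mathcal P$. Each belief satisfies $\P_i\le n\P$, so $\P_i\ll\P$ for all $i\in[n]$. Moreover, because $X$ admits a density under each $\P_i$, the law of $X$ under $\P$ is the corresponding mixture of densities and hence $X$ admits a density under $\P$ as well. This verifies all the hypotheses needed to run Lemma \ref{lemma:absolute_continuity} with $\P_0=\P_i$ and $\p=\P$.

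Next I would define, for each $i\in[n]$, the distortion function $g_i(t)=h_i(\P_i(X>Q^{\P}_{t}(X)))$ for $t\in[0,1]$. Lemma \ref{lemma:absolute_continuity} then yields $\rho_{h_i}^{\P_i}(f(X))=\rho_{g_i}^{\P}(f(X))$ for every increasing $f:\R\to\R$. Before using this I must check $g_i\in\mathcal H^{\mathrm{BV}}$. The inner map $t\mapsto \P_i(X>Q^{\P}_{t}(X))$ is increasing: our quantiles are counted from large to small, so $t\mapsto Q^{\P}_t(X)$ is decreasing, the event $\{X>Q^{\P}_t(X)\}$ grows with $t$, and applying the monotone $\P_i$ preserves monotonicity. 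A bounded-variation function composed with a monotone increasing map remains of bounded variation (the total variation cannot increase), so $g_i$ is of bounded variation. Finally $g_i(0)=h_i(\P_i(X>Q^{\P}_0(X)))=h_i(\P_i(X>\esssup_\P X))=h_i(0)=0$, where $\P_i(X>\esssup_\P X)=0$ follows from $\P_i\ll\P$. Hence $g_i\in\mathcal H^{\mathrm{BV}}$.

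To conclude, I would use the structure of comonotonic allocations. By \citet[Proposition 4.5]{D94}, any $(X_1,\dots,X_n)\in\mathbb A^+_n(X)$ can be written as $X_i=f_i(X)$ with $f_i$ increasing and $\sum_{i=1}^n f_i(x)=x$. For such an allocation the matching identity applies coordinatewise,
$$\rho_{h_i}^{\P_i}(X_i)=\rho_{h_i}^{\P_i}(f_i(X))=\rho_{g_i}^{\P}(f_i(X))=\rho_{g_i}^{\P}(X_i),\qquad i\in[n],$$
so the entire welfare vector of every comonotonic allocation is identical under $(\rho_{h_1}^{\P_1},\dots,\rho_{h_n}^{\P_n})$ and under $(\rho_{g_1}^{\P},\dots,\rho_{g_n}^{\P})$. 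Since the Pareto order on $\mathbb A^+_n(X)$ depends only on these welfare vectors, an allocation is Pareto optimal for one family of functionals if and only if it is Pareto optimal for the other, which is the claim.

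The main obstacle I anticipate is not the reduction itself but the bookkeeping in the two verifications: confirming that the constructed $g_i$ genuinely lands in $\mathcal H^{\mathrm{BV}}$ (the monotonicity-of-the-inner-map and composition argument, plus the normalization $g_i(0)=0$), and confirming that the mixture $\P$ inherits both atomlessness and the density property for $X$. It is also worth emphasizing \emph{why} the argument is confined to $\mathbb A^+_n(X)$: Lemma \ref{lemma:absolute_continuity} only equates $\rho_{h_i}^{\P_i}$ and $\rho_{g_i}^{\P}$ on random variables of the form $f(X)$ with $f$ increasing, which is precisely the comonotonic case; for general allocations the two functionals need not agree, so the same device does not transfer to the unconstrained problem of Section \ref{sec:3}.
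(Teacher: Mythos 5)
Your proposal is correct and follows essentially the same route as the paper: the equal mixture $\P=\frac{1}{n}\sum_{i=1}^n\P_i$, the same distortions $g_i(t)=h_i(\P_i(X>Q^{\P}_t(X)))$, and Lemma \ref{lemma:absolute_continuity} applied coordinatewise via the representation $X_i=f_i(X)$ for comonotonic allocations. Your explicit verification that $g_i\in\mathcal H^{\rm BV}$ (monotonicity of the inner map, preservation of bounded variation under increasing composition, and $g_i(0)=0$) is a detail the paper leaves implicit, and is a welcome addition.
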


The proof follows immediately by noticing that we can find a probability measure $\P$ such that $X$ admits a density under $\P$ and for which $\P_i\ll \P$, $i\in [n]$, and then invoking Lemma \ref{lemma:absolute_continuity}. The key insight is that restricting our allocation to be (globally) comonotonic eliminates, by construction, any potential ``side-bets" originating from the heterogeneity of the beliefs of the agents, as seen in \cite{Lieb24}. Characterizing the set of Pareto-optimal allocations then simply boils down to making suitable changes of measure and/or distortion functions and solving the simplified problem with homogeneous beliefs.

\section{Conclusion}\label{sec:8}
We summarize the paper with a few remarks on the results that we obtained. {While we focused on the case $\X=L^\infty$, all the results of this article generalize to larger spaces provided all the decision criteria $\rho_{h_i}$ and the inf-convolution $\dsquare_{i=1}^n\rho_{h_i}(X)$ are finite on the larger space. We emphasized when key results can be readily generalized, but this finiteness property must often be verified case-by-case. For example, the results on risk sharing with IQD agents can be extended to $L^0$ because the IQD is bounded from below. This property does not generalize to other functionals.}

The unconstrained risk sharing problem for non-concave distortion functions 
typically leads to non-comonotonic sum-optimal allocations without  explicit forms, and they can be difficult to analyze. 
Although we obtained several results on  necessary or sufficient conditions for Pareto and sum optimality (Theorem \ref{th:1} and Propositions \ref{prop:1}-\ref{pr:como}), 
 a full characterization of the Pareto-optimal or sum-optimal allocations for arbitrary distortion riskmetrics
 is beyond the current techniques. 
 
The case of IQD agents is, nevertheless, special, although they do not have concave distortion functions. For this setting, we can fully characterize all Pareto-optimal allocations via sum-optimal ones, and the inf-convolution  for such distortion riskmetrics admit concise formulas (Theorem \ref{th:IQD} and Proposition \ref{prop:9}): 
$$
  \dsquare_{i=1}^n (\lambda_i \mathrm{IQD}_{\alpha_i}) =  \left(\bigwedge_{i=1}^n \lambda_i \right) \mathrm{IQD}_{\sum_{i=1}^n \alpha_i} 
~~
 \mbox{and}~~\dboxplus_{i=1}^n (\lambda_i \mathrm{IQD}_{\alpha_i}) =  \left(\bigwedge_{i=1}^n \lambda_i \right)\mathrm{IQD}_{\bigvee_{i=1}^n \alpha_i}, $$
 and their particular instances  
$$
  \dsquare_{i=1}^n  \mathrm{IQD}_{\alpha_i} =  \mathrm{IQD}_{\sum_{i=1}^n \alpha_i} 
~~
 \mbox{and}~~\dboxplus_{i=1}^n  \mathrm{IQD}_{\alpha_i} =  \mathrm{IQD}_{\bigvee_{i=1}^n \alpha_i}. $$
 These formulas may be compared with the quantile inf-convolutions formulas  obtained by \cite{ELW18} and \cite{LMWW22} 
$$
  \dsquare_{i=1}^n  Q_{\alpha_i}^{-}=  Q^{-}_{\sum_{i=1}^n \alpha_i} ,~~  \dsquare_{i=1}^n  Q_{\alpha_i}^{+}=  Q^{+}_{\sum_{i=1}^n \alpha_i},~~\dboxplus_{i=1}^n Q_{\alpha_i}^{-} = Q^{-}_{\bigvee_{i=1}^n \alpha_i}~~ \mbox{and}~~\dboxplus_{i=1}^n Q_{\alpha_i}^{+} = Q^{+}_{\bigvee_{i=1}^n \alpha_i}. 
 $$

 These results show that 
the representative agent  (using the inf-convolution as its reference) for risk sharing among several IQD agents  is  again an IQD agent, and similarly, the representative agent among several quantile agents is again a quantile agent.

When the distortion functions are concave, or, when we constrain ourselves to the set of comonotonic allocations,
the risk sharing problem becomes much more tractable, and we obtain explicit allocations which are Pareto optimal or sum optimal (Theorem \ref{th:3}). This builds on the comonotonic improvement \textit{\`a la} \cite{LM94}, when the distortion riskmetrics are convex order consistent. 
A high-level summary is that all results that were established for increasing distortion riskmetrics, in particular, \cite{Y87}'s dual utilities, can be extended in parallel to non-increasing ones without extra efforts 
(these results are summarized in Propositions \ref{pr:com_PO_SO}-\ref{pr:charac_location_inv}). 
This opens up various application areas where risks are traditionally studied with only increasing distortion riskmetrics.

Combining the results for IQD agents and for risk-averse agents, we are able to solve risk sharing problems among these agents, whose Pareto-optimal allocations are found explicitly (Theorem \ref{pr:IQD_h}).  Various examples of risk sharing among these agents are presented in Section \ref{sec:5}. 

It remains unclear to us whether our analysis can be generalized to 
 distortion riskmetrics other than IQD, which are not convex (i.e., with non-concave distortion functions),
 and how large   the class of such tractable risk functionals is. 
 As far as we are aware, the unconstrained risk sharing problems for non-convex risk measures and variability measures have very limited explicit results (e.g., \cite{ELW18}, \cite{W18} and \cite{LMWW22}), and further investigation is needed for a better understanding of the challenges and their solutions.

\subsection*{Acknowledgments}
We thank the Editor, an Associate Editor, and two anonymous referees for helpful comments on the paper. 
Ruodu Wang is supported by the Natural Sciences and Engineering Research Council of Canada (CRC-2022-00141, RGPIN-2024-03728).

{
\small

}



\appendix

\section{Proofs of results in   Section \ref{sec:3}}
\begin{proof}[Proof of Proposition \ref{prop:1}] 
(i)  Let $(X_1,\dots, X_n)$ be a Pareto-optimal allocation in $\mathbb A_n(X)$.  We will show, without loss of generality, that any of the three following hypotheses leads to a contradiction of the Pareto optimality of $(X_1, \dots, X_n)$: (1) if simultaneously $h_1(1)=0$ and $h_2(1)>0$;  (2) if simultaneously $h_1(1)<0$ and $h_2(1)>0$ and (3) if simultaneously $h_1(1)=0$ and $h_2(1)<0$. 

Consider the allocation $(X_1+c, X_2-c, X_3, \dots, X_n)$. Clearly, the allocation belongs to  $\mathbb A_n(X)$. 
Recall that by translation invariance it is $\rho_{h_1}(X_1+c)=\rho_{h_1}(X_1)+c h_1(1)$ and $\rho_{h_2}(X_2-c)=\rho_{h_2}(X_2)-c h_2(1)$.

Suppose (1) first so that $h_1(1)=0$ and $h_2(1)>0$. Setting $c>0$ we have that $\rho_{h_1}(X_1+c)=\rho_{h_1}(X_1)$ and $\rho_{h_2}(X_2-c)<\rho_{h_2}(X_2)$  contradicting the Pareto optimality of $(X_1,\dots,X_n)$. 
For (2), we have $\rho_{h_1}(X_1+c)<\rho_{h_1}(X_1)$ and $\rho_{h_2}(X_2-c)<\rho_{h_2}(X_2)$ as $h_1(1)<0$ and $h_2(1)>0$.
For (3), the case when $h_1(1)=0$ and $h_2(1)<0$, we can choose  $c<0$, which leads to a similar contradiction of the Pareto optimality of $(X_1,\dots,X_n)$.

The case when $(X_1,\dots, X_n)$ is Pareto optimal in $\mathbb{A}^+_n(X)$ is identical, and we conclude that $h_i(1)$ are either all zero, all positive, or all negative.

(ii) We show that if there exist $i,j \in [n]$ such that $h_i(1)\neq h_j(1)$, then $\dboxplus_{i=1}^n \rho_{h_i}(X)=-\infty$ for any $X \in \mathcal{X}$.
Without loss of generality,  let $h_1(1)<h_2(1)$ and consider a $c>0$. Given $X \in \X$, for any allocation $(X_1, \dots, X_n) \in \mathbb{A}_n^+(X)$ we have that  
$$\rho_{h_1}(X_1+c)+\rho_{h_2}(X_2-c)= \rho_{h_1}(X_1)+\rho_{h_2}(X_2)+c( h_1(1)-h_2(1)).$$
Consider now the allocation $(X_1+c, X_2-c, X_3, \dots, X_n)$. Taking the limit $c \to \infty$ we have $\sum_{i=1}^n \rho_{h_i}(X_i)=-\infty$ and so $\dboxplus_{i=1}^n \rho_{h_i}(X)=-\infty$.
\end{proof}

\begin{proof}[Proof of Theorem \ref{th:1}] 
For the ``if" part, it is  clear that $(X_1,\dots, X_n)$ is Pareto optimal for the normalized preferences $\rho_{\tilde h_1},\dots,  \rho_{\tilde h_n}$. Hence, $(X_1,\dots, X_n)$ is also Pareto optimal for agents using $\rho_{ h_1}, \dots, \rho_{ h_n}$ as their preferences, as the normalization does not change the preferences.

Next, we show the ``only if" part. 
Let $(X_1,\dots,X_n)\in \mathbb A_n(X)$ be a Pareto-optimal allocation in $\mathbb{A}_n(X)$. 
By Proposition \ref{prop:1}, we have $h_i(1)$, $i \in [n]$, are either all positive or all negative; that is $\tilde h_i(1)$, $i\in [n]$, are all $1$ or $-1$.  We first consider the case where $\tilde h_i(1)=1$ for $i\in [n]$. Assume by contradiction that $\sum_{i=1}^n \rho_{\tilde h_i}(X_i) >\dsquare_{{i=1}}^n\rho_{\tilde h_i}(X)$. There exists an allocation $(Y_1,\dots,Y_n)\in \mathbb A_n(X)$ such that $\sum_{i=1}^n\rho_{\tilde h_i}(Y_i)<\sum_{i=1}^n\rho_{\tilde h_i}(X_i)$. Set $c_i= \rho_{\tilde h_i}(X_i) -\rho_{\tilde h_i}(Y_i)$, $i=1,\dots,n$ and notice that $c=\sum_{i=1}^n c_i>0$. Hence, $$(Y_1+c_1-c/n,\dots, Y_n +c_n -c/n) \in  \mathbb A_n(X)$$ and by translation invariance for every $i\in [n]$ it is $$\rho_{\tilde h_i}(Y_i +c_i -c/n)=\rho_{\tilde h_i}(Y_i +c_i)-c/n<\rho_{\tilde h_i}(Y_i+c_i)=\rho_{\tilde h_i}(X_i),$$
contradicting the Pareto optimality of $(X_1,\dots,X_n)$. The case $\tilde h_i(1)=-1$, $i\in [n]$, is analogous.
\end{proof}

\begin{proof}[Proof of Lemma \ref{lem:rw1}]
 The implications (i)$\Rightarrow$(ii)$\Rightarrow$(iii)$\Rightarrow$(iv)  are all straightforward, where (iii)$\Rightarrow$(iv) follows from the fact that $X\le_{\rm cx} Y$ is equivalent to   $\rho_h(X)\le \rho_h(Y)$ holding for all concave $h\in \H^{\rm BV}$   by Theorem 2 of \cite{WWW20b}.

 We next show (iv)$\Rightarrow$(i). 
 Define
$$
L_X(t) =\int_0^t Q^-_s(X) \d s\mbox{~~~for $X\in \X$ and $t \in [0,1]$}.
$$ 
Let $h \in \mathcal H^{\rm BV}$ be strictly concave, $X\le_{\rm cx} Y$, and $\rho_h(X)=\rho_h(Y)$.
The convex order relation 
implies 
$ L_X(t) \le L_Y(t) 
$ for all $t\in [0,1]$  with equality at $t=0$ and $t=1$ (Theorem 3.A.5 of \cite{SS07}).
Note that $h$ 
is continuous  and almost everywhere differentiable on $(0,1)$ with possible discontinuity at $0$ and $1$.
Let $h'$ be the left derivative of $h$ on $(0,1)$, and let $a=\lim_{x\downarrow 0} h(x)-h(0)$ and $b=\lim_{x\uparrow 1}h(x)-h(1)$; both $a$ and $b$ are nonnegative due to concavity of $h$.

By decomposing $h$ into two possible jumps at $0,1$ and a continuous part, we  can write, 
using Lemma \ref{lem:qr}, 
\begin{align*}
\rho_h(X)= \int_{(0,1)}  Q_t^{-}(X) \d h(t) - b Q_1^+(X) + a Q_0^-(X) = \int_0^1 h'(t) Q_t^{-}(X) \d t - b Q_1^+(X) + a Q_0^-(X) .
\end{align*}
Since $h$ is concave, the convex order condition implies 
$
 \int_{(0,1)}  Q_t^{-}(X) \d h(t) 
 \le 
 \int_{(0,1)}  Q_t^{-}(Y) \d h(t) 
  $ (Theorem 2 of \cite{WWW20b}).
  Moreover, $Q_1^+(X)\ge Q_1^+(Y)$ and $Q_0^-(X)\le Q_0^-(Y)$.
  Therefore, in order for $\rho_h(X)=\rho_h(Y)$,   
 all three inequalities above are equalities. 
 In particular we have 
 $
  \int_0^1 h'(t) Q_t^{-}(X) \d t =  \int_0^1 h'(t) Q_t^{-}(Y) \d t.
 $
 It follows that
 \begin{align} \label{eq:R1-L2-1} 
  \int_0^1 h'(t) Q_t^{-}(X) \d t 
  =   \int_0^1 h'(t)  \d L_X(t)  =   \int_0^1 h'(t)  \d L_Y(t).
\end{align} 
Let us show $(L_X(t) -L_Y(t)) h'(t)\to 0$ as $t\downarrow 0$ or $t\uparrow 1$. 
Since $X$ and $Y$ are bounded, $L_X(t)-L_Y(t)$ is Lipschitz as $t\downarrow 0$ and $t\uparrow 1$.
Since $L_X(0)=L_Y(0)$ and $L_X(1)=L_Y(1)$, the Lipschitz property gives 
$
|L_X(t)-L_Y(t)| \le c ((1-t)\wedge t) 
$
for some $c>0$.  Note that $h'$ is integrable and strictly decreasing. Since $h'$ is decreasing, as $t\downarrow 0$, $h'(t)$  
        either has a finite limit or has the limit $\infty$.
        In the first case, clearly $  th'(t) \to 0.$
        In the second case, for $t>0$ small enough, we have $h'(t)>0$ and $\int_0^t h'(t) \d t \ge t h'(t)$. Integrability of $h'$ implies $\int_0^t h'(t) \d t\to 0$, which implies  $  th'(t)\to 0.$
       Similarly, $(1-t)h' (t) \to 0$ as $t\uparrow 1$.
Hence, we have the desired limits. 

Using integration by parts,  
\eqref{eq:R1-L2-1} and the above limits yield
$
\int_0^1   (L_X(t)- L_Y(t)  ) \d (-h')(t) = 0. $
Since $L_X(t)\le L_Y(t)$ for all $t\in (0,1)$ and $-h'$ is strictly increasing on $(0,1)$, we conclude that 
$L_X(t)=L_Y(t)$ for almost every $t\in (0,1)$. 
Since $L_X$ and $L_Y$ are continuous, this further implies that $L_X=L_Y$.
This is sufficient for $X\laweq Y$.
 \end{proof}

 \begin{proof}[Proof of Proposition \ref{pr:como}]
(i) follows from Corollary \ref{cor:rw1} observing that comonotonic improvements strictly improve welfare. 
For (ii), the ``only if" part is directly shown by (i). We only show the ``if" part. As the normalization of $h_i$, $i\in [n]$, will not change the preferences, we only consider the case when $a_i=a_j=a$ for all $i,j\in[n]$.
Let $(X_1,\dots,X_n)\in \mathbb A_n^+(X)$. By comonotonic additivity and positive homogeneity it is $\sum_{i=1}^n \rho_{a_i h_1}(X_i)= a\rho_{h_1}(X)$. 
Let $(Y_1,\dots, Y_n) \in \mathbb A_n(X)$. By subadditivity we have $
     \sum_{i=1}^n \rho_{a_i h_1}(Y_i)\geq a\rho_{h_1}\left(\sum_{i=1}^n Y_i\right)= a\rho_{h_1}(X).$
Hence, a comonotonic allocation $(X_1, \dots, X_n)$ always solves $\dsquare_{{i=1}}^n\rho_{a_ih}(X)$, and thus it is Pareto optimal.
\end{proof}

\begin{proof} [Proof of Theorem \ref{th:IQD}]
We first prove part (ii)  and then use it to prove part (i). 
Let us first verify $\dsquare_{i=1}^n (\lambda_i \mathrm{IQD}_{\alpha_i})  \ge \lambda \mathrm{IQD}_{\alpha}$.
Using \eqref{eq:IQD-1} and the fact that an IQD is nonnegative  if $\alpha <1/2$, then for $X\in \X$, 
\begin{align*}
\dsquare_{i=1}^n (\lambda_i \mathrm{IQD}_{\alpha_i}) &\ge 
\lambda \dsquare_{i=1}^n \mathrm{IQD}_{\alpha_i}(X) \\ &= 
\lambda  \inf\left\{\sum_{i=1}^n Q_{\alpha_i}^-(X_i)  + \sum_{i=1}^n Q_{\alpha_i}^- (-X_i): (X_1,\dots,X_n)\in \mathbb{A}_n(X) \right\}  
\\&\ge \lambda  \dsquare_{i=1}^n   Q_{\alpha_i}^- (X) + \lambda  \dsquare_{i=1}^n    Q_{{\alpha}_i}^- 
(-X) 
\\& =   \lambda  Q_{\sum_{i=1}^n \alpha_i }^- (X) +  \lambda     Q_{\sum_{i=1}^n \alpha_i }^- 
(-X) =  \lambda \mathrm{IQD}_{\alpha}(X),
  \end{align*}
  where   the second-last equality is due to Corollary 2 of \cite{ELW18}.
  If $\alpha \ge 1/2$, then $\dsquare_{i=1}^n (\lambda_i \mathrm{IQD}_{\alpha_i})  \ge 0 = \lambda \mathrm{IQD}_{\alpha}$ holds automatically. 
  
  Next, we verify  $\dsquare_{i=1}^n (\lambda_i \mathrm{IQD}_{\alpha_i})  \le \lambda \mathrm{IQD}_{\alpha}$ by showing that the construction of the allocation $(X_1,\dots,X_n)$ of $X\in \X$ in \eqref{eq:IQD_allocation} satisfies $\sum_{i=1}^n \lambda_i \mathrm{IQD}_{\alpha_i}(X_i) = \lambda \mathrm{IQD}_{\alpha}(X)$. 
  This will prove part (ii) as well as Remark \ref{rem:IQD}.
First, it is straightforward to verify $(X_1,\dots,X_n)\in \mathbb A_n(X)$.
Since IQD is location invariant, we can, without loss of generality, assume $c=c_1=\dots=c_n=0$; i.e., $0$ is a median of $X$. 
Note that this leads to the simplified form
$$ 
    X_i=  X\id_{A_i\cup B_i}+ a_i  X \left(1-\id_{A\cup B}\right),~~~~i\in [n].
$$
  
 If $\alpha\ge 1/2$, then $\beta = 1/2$ and $\p(A^c \cup B^c) = 0$. Thus, $\p(X_i > 0) = \p(A_i) \leq \alpha_i$ and symmetrically, $\p(X_i<0)\le \p(B_i)\le \alpha_i$. Hence, $\mathrm{IQD}_{\alpha_i}(X_i)=0$.
  
Next, assume $\alpha<1/2$. 
We have 
$ \p( \{ X > Q^{-}_\alpha(X)  \}\cap A ^c  )=0 $ by Lemma A.3 of \cite{WZ21}. 
For $i\in [n]$, if $a_i >0$, we can compute 
 \begin{align*}  
\p( X_i   >    a_i  Q^-_{\alpha}(X)   )  & \le   \p(A_i) + \p( \{ X_i >  a_i Q^-_{\alpha}(X) \} \setminus A_i)
=  \alpha_i + \p( \{ X > Q^{-}_\alpha(X)  \}\cap A ^c  )
=   \alpha_i.
\end{align*}
 If $a_i=0$, $X_i=X\id_{A_i\cup B_i}$. Hence,
$
\p( X_i   >    a_i  Q^-_{\alpha}(X)   )  =\p(X_i>0)\le \p(A_i)=  \alpha_i$. In conclusion, $\p( X_i   >    a_i  Q^-_{\alpha}(X)   ) \le \alpha_i$ for $i \in [n]$, which implies    
 $ 
 Q^-_{\alpha_i}  ( X_i)  \le  a_i Q^-_{\alpha}(X).
 $ 
Using a symmetric argument, we get 
 $
 Q^+_{1-\alpha_i}  ( X_i) \ge  a_i Q^+_{1-\alpha}(X).
 $
It follows that 
 $$
 \mathrm{IQD}_{\alpha_i}(X_i) \le a_i Q^-_{\alpha}(X)- a_i Q^+_{1-\alpha}(X)= a_i \mathrm{IQD}_{\alpha}(X).
 $$ 
 Therefore, $\sum_{i=1}^n \lambda_i \mathrm{IQD}_{\alpha_i}(X_i) \le  \sum_{i=1}^n \lambda_i a_i \mathrm{IQD}_{\alpha}(X)$.
 Taking $a_i=0$ for all $i\in [n]$ with $\lambda_i>\lambda$ gives the desired 
 inequality $\sum_{i=1}^n \lambda_i \mathrm{IQD}_{\alpha_i}(X_i) \le    \lambda  \mathrm{IQD}_{\alpha}(X)$. 
 
 Putting the above arguments together, 
 we prove (ii), that is, $\dsquare_{i=1}^n \lambda_i \mathrm{IQD}_{\alpha_i}(X) =   \lambda  \mathrm{IQD}_{\alpha}(X)$.
In particular, 
  \begin{align}\label{eq:IQD-check}
 \mathrm{IQD}_{\alpha_i}(X_i) = a_i \mathrm{IQD}_{\alpha}(X) ~~~\mbox{and} ~~~
 \sum_{i=1}^n  \mathrm{IQD}_{\alpha_i}(X_i) =  \mathrm{IQD}_{\alpha}(X)
 =   \dsquare_{i=1}^n  \mathrm{IQD}_{\alpha_i}(X),    \end{align}
 and thus $(X_1,\dots,X_n)$ is sum optimal.

Next, we show part (i).
The ``if" statement follows from Proposition \ref{pr:1}, and we will show the ``only if" statement.
Take any Pareto-optimal allocation $(Y_1,\dots,Y_n)$ of $X$. 
Write $x=\mathrm{IQD}_\alpha(X)$, $y_i= \mathrm{IQD}_{\alpha_i}(Y_i)$ for $i\in [n]$,
and $y=\sum_{i=1}^n y_i$. 
It suffices to show $y=x$.   By \eqref{eq:IQD-check}, we have that $x \le y$ always holds.
If $y=0$, there is nothing to show;
next we assume $y >0$.
 Let  $(X_1,\dots,X_n)$ be an allocation in \eqref{eq:IQD_allocation} with $a_i=y_i/y$  for $i\in [n]$, which sums up to $1$. By \eqref{eq:IQD-check},
we have $\mathrm{IQD}_{\alpha_i}(X_i) = a_i \mathrm{IQD}_\alpha(X) =a_i x$.
If $x<y$, then $\mathrm{IQD}_{\alpha_i}(X_i)= x y_i/y \le y_i=\mathrm{IQD}_{\alpha_i}(Y_i) $ for $i\in [n]$, and strict inequality holds 
as soon as $y_i>0$.  In this case, $(X_1, \dots, X_n)$ Pareto dominates $(Y_1, \dots, Y_n)$,  conflicting Pareto optimality of $(Y_1,\dots,Y_n)$.
Hence, we obtain $x=y$.

Finally, part (iii) on Pareto optimality of $(X_1,\dots,X_n)$ follows by combining (i) and \eqref{eq:IQD-check}.
\end{proof}

\section{Proofs of results in Section \ref{sec:4}}
\begin{proof}[Proof of Proposition \ref{pr:2}]
(i) $\Rightarrow$ (ii) is analogous to Theorem \ref{th:1}.

(ii) $\Rightarrow$ (iii) 
Let $S=\{(\rho_{h_1}(X_1), \dots, \rho_{h_n}(X_n)): (X_1,\dots,X_n) \in \mathbb{A}_n^+(X)\}$ be the risk possibility set of the set of comonotonic allocations.
By comonotonic additivity and positive homogeneity, $S$ is a convex set. Let $(X_1, \dots, X_n)$ be a Pareto-optimal allocation and $\mathbf x=(\rho_{h_1}(X_1),\dots,\rho_{h_n}(X_n))$.
 Notice now that $\mathbf x$ always belongs to the boundary of $S$.
Let $V=\{(v_1, \dots, v_n): v_i \le \rho_{h_i}(X_i) \text{ for } i \in [n]\}$ where $(X_1, \dots, X_n)$ is Pareto optimal. It is clear that  $V \cap S=\{\mathbf x\}$.

Therefore, by the Separating Hyperplane Theorem, there exists $(\lambda_1,\dots, \lambda_n) \in \R^n \setminus \mathbf{0}$ such that $\sum_{i=1}^n \lambda_i\rho_{h_i}(X_i)=\inf_{\mathbf u \in S} \sum_{i=1}^n\lambda_iu_i=\dboxplus_{i=1}^n {\lambda_i\rho_{h_i}(X)}$.

We are left to show that $\lambda_i\ge 0$ for every $i \in [n]$. Let $\mathbf v=\mathbf x-(1,0,\dots,0)$. We have $\mathbf v \in V$. Hence,
we have $\lambda_1\ge 0$ as $\sum_{i=1}^n \lambda_i v_i\le\sum_{i=1}^n \lambda_i\rho_{h_i}(X_i)$. Similarly, we obtain $\lambda_i\ge 0$ for all $i \in [n]$.
\end{proof}

\begin{proof}[Proof of Theorem \ref{th:3}]

Let $h_\wedge(t)=\min\{h_1(t),\dots,h_n(t)\}$. We first show that 
$
\dboxplus_{i=1}^n\rho_{h_i}= \rho_{h_\wedge}.
$  Comonotonic additivity of $\rho_{h_\wedge}$ implies that 
$\dboxplus_{i=1}^n\rho_{h_i}\ge \rho_{h_\wedge}.$
Conversely, notice that for every $i\in [n]$ the function $f_i$ in \eqref{eq:f} is Lipschitz continuous and non-decreasing because $g_i$ is nonnegative and bounded.
Using Lemma \ref{lem:change}, we get 
\begin{align}\label{eq:como-unique}
\rho_{h_i}(f_i(X))=\int_0^\infty g_i(s)  h_i(\p(X>s))\d s+\int_{-\infty}^0 g_i(s)  (h_i(\p(X>s)-h_i(1))\d s.\end{align}
It follows that
    \begin{align*}\sum_{i=1}^n \rho_{h_i}(f_i(X))&=\sum_{i=1}^n \int_0^\infty g_i(s)  h_i(\p(X>s))\d s+\int_{-\infty}^0 g_i(s)  (h_i(\p(X>s)-h_i(1))\d s\\
    &= \int_0^\infty \sum_{i=1}^n g_i(s)  h_i(\p(X>s))\d s+\int_{-\infty}^0 \sum_{i=1}^n g_i(s)  (h_i(\p(X>s)-h_i(1))\d s\\
    &= \int_0^\infty  {h_\wedge}(\p(X>s))\d s+\int_{-\infty}^0   ({h_\wedge}(\p(X>s)-{h_\wedge}(1))\d s\\
    &=\rho_{h_\wedge}(X) \ge \dboxplus_{i=1}^n\rho_{h_i} (X).
    \end{align*}
Hence, $\dboxplus_{i=1}^n\rho_{h_i}= \rho_{h_\wedge}$.

Next, we show that the solution is unique up to constant shifts almost surely if and only if $|M_x|=1$ for $\mu_X$-almost every $x$, where $\mu_X$ is the distribution measure of $X$.

 Since the above argument of $\sum_{i=1}^n \rho_{h_i}(f_i(X))= \dboxplus_{i=1}^n\rho_{h_i} (X) $  only requires $\sum_{i\in M_x} g_i(x)=1$ for almost every $x$, 
any allocation $(f_1(X),\dots,f_n(X))$ in \eqref{eq:f}  with $g_i$ replaced by
$$
g_i(x)=   \id_{\{ i = \min M_x\}} \mbox{ or } g_i(x)=   \id_{\{ i = \max M_x\}}, ~~~~x\in \R, 
 $$
also satisfies  sum optimality. 
Therefore, if $|M_x|=1$ does not hold $\mu_X$-almost surely, there are multiple optimal allocations that are not constant shifts from each other.

Conversely, we show that if $|M_x|=1$ for $\mu_X$-almost every $x$ then every sum-optimal allocation is almost surely equal to the one in \eqref{eq:f}. 

For any increasing and Lipschitz function $k$ with right-derivative $w$ and two distortion functions with $h\ge g$ and $h(1)=g(1)$, we have, by Lemma \ref{lem:change}, 
 $$\rho_h(k(X)) - \rho_g (k(X)) 
= \int_{-\infty}^ {\infty} w(s) (h(\p(X>s)) -g(\p(X>s)) )\d s.$$ 
This means $\rho_h(k(X)) = \rho_g (k(X))$  if and only if $w(s)=0$ almost surely for $s$ such that $h(\p(X>s))>g(\p(X>s))$.
Note that  if $(k_1(X),\dots,k_n(X))\in \mathbb A^+_n(X)$ is sum optimal, then 
$$\sum_{i=1}^n \rho_{h_i}(k_i(X))=\rho_{h_\wedge}(X)=\sum_{i=1}^m \rho_{{h_\wedge}}(k_i(X)).$$
This implies that $w_i(x)=0$ as soon as $h_i(\p(X>x)) >h_\wedge(\p(X>x)) $, where $w_i$ is the right-derivative of $k_i$.
Moreover, $w_i(x)=1$ if $h_i(\p(X>x)) = h_\wedge(\p(X>x)) $ since $\sum_{j=1}^n w_j(x)=1$ for almost every $x$. 
Thus, $w_i$ is uniquely determined $\mu_X$-a.s., implying that $k_i$ is unique $\mu_X$-a.s.~up to a constant shift.
\end{proof}

\begin{proof}[Proof of Lemma \ref{lem:change}]
Without loss of generality we assume $X\ge 0$ and $f(X)\ge 0$.
Denote by $\nu=h\circ \p$.
We have 
\begin{align*} 
\rho_{h}(f(X)) - \int_0^\infty g(x)  h(\p(X>x))\d x
&= \int_0^\infty  \nu  (f(X)>y) \d y  - \int_0^\infty g(x) \nu(X>x)\d x
\\
&= \int_0^\infty g(x)  \nu (f(X)>f(x) )\d x  - \int_0^\infty g(x)   \nu (X>x)\d x 
\\
&= \int_0^\infty g(x)   ( \nu (f(X)>f(x) ) - \nu(X>x))\d x.
\end{align*}
Note that  $\p(f(X)>f(x))\le  \p(X>x)$ for all $x$.
If  $\p(f(X)>f(x))< \p(X>x)$, then there exists $z>x$ such that $f(z)=f(x)$.
This implies that $g(x)=0$ for any point $x$ with  $\nu (f(X)>f(x) ) - \nu(X>x) \ne 0$. Therefore, 
$$
\rho_{h}(f(X)) - \int_0^\infty g(x)  h(\p(X>x))\d x=0.$$
The case of general $X$  bounded from below can be obtained by constant shifts on both $X$ and $f$.
\end{proof}

 \begin{proof}[Proof of Proposition \ref{pr:charac_location_inv}]
It is clear that since $h_i(1)=0$ and $h_i(t)>0$ for all $i\in [n]$ and all $t \in (0,1)$, we have that $\rho_{h_i}(X)\ge 0$ for all $i\in [n]$, with equality only if $X$ is a constant. 
We first show the ``if" statement. Suppose, by contradiction, that $(X_1,\dots,X_n)\in \mathbb A^+_n(X)$ is not Pareto optimal but that it solves $\dboxplus_{i\in K}\rho_{\lambda_i h_i}(X-\sum_{i\notin K} X_i)$ for $K=\{i\in[n]: X_i\notin \mathbb R\}$ and $\boldsymbol \lambda \in  (0,\infty)^{K}$. Our contradiction hypothesis implies that there exists a $(Y_1, \dots, Y_n) \in  \mathbb A^+_n(X)$ such that simultaneously $\rho_{h_i}(Y_i)\le \rho_{h_i}(X_i)$ for every  $i \in [n]$ and $\rho_{h_j}(Y_j)< \rho_{h_j}(X_j)$ for some $j \in [n]$. Notice that if $i\notin K$ it is $$0\leq\rho_{h_i}(Y_i)\leq \rho_{h_i}(X_i)=0$$ and so it must be the case that $\rho_{h_i}(Y_i)<\rho_{h_i}(X_i)$ for some $i\in K$, a contradiction with the hypothesis that $(X_i)_{i\in K}$ solves $\dboxplus_{i\in K}\rho_{\lambda_i h_i}(X-\sum_{i\notin K} X_i)=\dboxplus_{i\in K}\rho_{\lambda_i h_i}(X)$, where the equality follows because of location invariance of $\dboxplus_{i\in K}\rho_{\lambda_i h_i}$.

Conversely, let $(X_1,\dots,X_n)\in \mathbb A^+_n(X)$ be Pareto optimal and define $K=\{i\in [n]: X_i\notin \R\}$; this gives that $\sum_{i\not \in K} X_i$ is a constant. Recall that $\rho_{h_i}(X_i)=0$ for every $i\notin K$, and $\rho_{h_i}(X_i)>0$ for every $i \in K$. 
It is clear that $(X_i)_{i \in K}$ is a Pareto-optimal allocation of $X-\sum_{i\notin K} X_i$ for the collection $(\rho_{h_i})_{i\in K}$. By Proposition \ref{pr:2}, there exists a $\boldsymbol \lambda \in [0, \infty)^{K}\setminus \{\mathbf 0\}$ such that $\sum_{i\in K} \lambda_i  \rho_{h_i}(X_i) =\dboxplus_{{i\in K}}(\lambda_i \rho_{h_i})(X-\sum_{j\notin K} X_j)=\dboxplus_{i\in K}\rho_{\lambda_i h_i}(X)$. As $\rho_{h_i}(X_i)>0$ for $i\in K$, we have $\dboxplus_{{i\in K}}(\lambda_i \rho_{h_i})(X)>0$.
 It must be the case that $\lambda_i>0$ for all $i \in K$, as otherwise, we have $\dboxplus_{{i\in K}}(\lambda_i \rho_{h_i})(X)=0$, a contradiction. 
\end{proof}

\begin{proof}[Proof of Proposition \ref{prop:9}]
Part (ii) follows directly from Corollary \ref{cor:th3}, so it remains to show part (i).
Suppose that $(X_1,\dots,X_n)\in \mathbb{A}^+_n(X)$ is Pareto optimal. Then there exists $(\lambda_1,\dots,\lambda_n)\in [0,\infty)^n$, 
with $\lambda= \bigwedge_{i=1}^n \lambda_i>0$, such that 
$$\sum_{i=1}^n (\lambda_i \mathrm{IQD}_{\alpha_i}) (X_i) =  \lambda \mathrm{IQD}_{\bigvee_{i=1}^n \alpha_i} (X).  $$ 
Using the fact that an IQD is nonnegative and part (ii), we get 
$$\lambda \dboxplus_{i=1}^n  \mathrm{IQD}_{\alpha_i} (X)\le \sum_{i=1}^n (\lambda \mathrm{IQD}_{\alpha_i}) (X_i)\le \sum_{i=1}^n (\lambda_i \mathrm{IQD}_{\alpha_i}) (X_i) =  \lambda \mathrm{IQD}_{\bigvee_{i=1}^n \alpha_i} (X)=   \lambda \dboxplus_{i=1}^n  \mathrm{IQD}_{\alpha_i} (X), $$
and so $(X_1,\dots,X_n)$ is sum optimal.   
\end{proof}

\section{Proofs of results in Section \ref{sec:new5}}
 
We first present a lemma that we will use in the proof of  Theorem \ref{pr:IQD_h}.
\begin{lemma}\label{lem:IQD}
    For $\alpha \in [0, 1/2)$, $\lambda>0$ and $h\in \mathcal{H}^{\mathrm{C}}$ it is
  \begin{equation}\label{eq:IQDMMD}
  (\lambda \mathrm{IQD}_{\alpha} ) \square \rho_h =  \rho_{G^\alpha_\lambda(h)}.
\end{equation}
\end{lemma}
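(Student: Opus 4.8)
The plan is to prove the identity by showing the two inequalities $(\lambda\mathrm{IQD}_\alpha)\square\rho_h\le\rho_{G^\alpha_\lambda(h)}$ and $(\lambda\mathrm{IQD}_\alpha)\square\rho_h\ge\rho_{G^\alpha_\lambda(h)}$ separately. First I would record the reductions that make both sides tractable. Since $h\in\mathcal H^{\mathrm C}$ is concave with $h(0)=h(1)=0$, it rises to a peak at some $s\in(0,1)$, then falls, and $h\ge0$ on $[0,1]$; moreover $G^\alpha_\lambda(h)$ is continuous on $[0,1]$ and vanishes on $[0,\alpha]\cup[1-\alpha,1]$, so by Lemma \ref{lem:qr} one has the quantile representation $\rho_{G^\alpha_\lambda(h)}(X)=\int_\alpha^{1-\alpha}Q^-_t(X)\,\mathrm d G^\alpha_\lambda(h)(t)$. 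Both $\lambda\mathrm{IQD}_\alpha$ and $\rho_{G^\alpha_\lambda(h)}$ are location invariant (their distortions vanish at $1$), so I may assume $Q^-_{1/2}(X)=0$. The case $\alpha\ge1/2$ is degenerate ($\mathrm{IQD}_\alpha=0$ and $G^\alpha_\lambda(h)=0$), so only $\alpha\in[0,1/2)$ needs work.

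For the upper bound I would exhibit an allocation attaining the value. Let $A$ and $B$ be disjoint right and left $\alpha$-tail events of $X$. Giving the whole tail risk $X\id_{A\cup B}$ to the IQD agent is free: exactly as in the proof of Theorem \ref{th:IQD}, a direct quantile computation gives $\mathrm{IQD}_\alpha(X\id_{A\cup B})=0$, because $\P(X\id_{A\cup B}>0)\le\alpha$ and $\P(X\id_{A\cup B}<0)\le\alpha$. On the complement the two agents share the middle variable $Y:=X\id_{(A\cup B)^c}$ comonotonically: following the min-rule $t\mapsto\min\{\lambda,h(t)\}$ of Theorem \ref{th:3}, the IQD agent additionally absorbs the central band of quantile levels on which $h$ exceeds $\lambda$ (where its marginal rate $\lambda$ is cheaper), and $\rho_h$ takes the rest. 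Writing each share as $f_i(X)$ for an increasing Lipschitz $f_i$ and applying Lemma \ref{lem:change} termwise, the total cost telescopes to $\int_\alpha^{1-\alpha}Q^-_t(X)\,\mathrm d G^\alpha_\lambda(h)(t)=\rho_{G^\alpha_\lambda(h)}(X)$; the two shifts $h(t-\alpha)$ and $h(t+\alpha)$ arise from removing the top and bottom $\alpha$-tails, while the flat cap corresponds to the central band carried by the IQD agent.

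The lower bound is the main obstacle. I would start from $\mathrm{IQD}_\alpha(X_1)=Q^-_\alpha(X_1)+Q^-_\alpha(-X_1)$ (equation \eqref{eq:IQD-1}) and split $\rho_h$ additively, using the linearity of $h\mapsto\rho_h$, as $\rho_h=\rho_{g_1}+\rho_{g_2}$ with $g_1+g_2=h$, where the split point is chosen at the cap levels $u_0\le s\le v_0$ (the two solutions of $h=\lambda$) so that $g_1(1)=\lambda$ and $g_2(1)=-\lambda$; this is forced by Proposition \ref{prop:1}(ii), since each decoupled sub-inf-convolution can be finite only if the paired distortions share the same value at $1$, and the two IQD quantile tails carry values $+\lambda$ and $-\lambda$. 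For a fixed allocation $X=X_1+X_2$ I would then pair $\lambda Q^-_\alpha(X_1)$ with $\rho_{g_1}(X_2)$ and $\lambda Q^-_\alpha(-X_1)$ with $\rho_{g_2}(X_2)$, and use the elementary bound $\inf(\,\cdot+\cdot\,)\ge\inf(\cdot)+\inf(\cdot)$ to decouple the infimum over allocations into two sub-inf-convolutions. Writing $\rho_{g_i}$ through its quantile (mixture) representation and invoking the quantile inf-convolution identity $\square_iQ^-_{\alpha_i}=Q^-_{\sum\alpha_i}$ of \cite{ELW18}, each sub-problem evaluates to a copy of $\rho_h$ with its quantile levels shifted by $\pm\alpha$ and truncated at height $\lambda$; re-adding the two pieces reproduces $\min\{h(t-\alpha),h(t+\alpha),\lambda\}$ on $(\alpha,1-\alpha)$, i.e.\ $\rho_{G^\alpha_\lambda(h)}(X)$.

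I expect the delicate points to be the following. First, the additive split of $h$ must be arranged so that the two decoupled sub-inf-convolutions are finite (the $\pm\lambda$ values at $1$ matching across paired terms), and the pieces $g_1,g_2$ are not individually monotone, so their inf-convolutions with the quantiles require the mixture-of-quantiles argument rather than a one-line appeal to monotone duality. Second, and most importantly, the cap $\wedge\lambda$ is invisible at the level of a single quantile: it emerges only from how the contributions over the central band $(u_0,v_0)$ — which drop out of $\rho_{G^\alpha_\lambda(h)}$ because $G^\alpha_\lambda(h)$ is flat there — are exactly absorbed by the IQD agent's range cost $\lambda\big(Q^-_{u_0}(Y)-Q^-_{v_0}(Y)\big)$. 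Verifying that no slack is lost in the decoupling inequality is precisely where the matching upper-bound construction is used to close the gap and force equality.
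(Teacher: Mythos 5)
There is a genuine gap in your attainment (upper-bound) step, and it is precisely the ingredient the paper's proof is built around. Your allocation cuts the tails of $X$ at the median: the IQD agent takes $X\id_{A\cup B}$ plus the central band of $Y=X\id_{(A\cup B)^c}$, and, tracking the min rule through Theorem \ref{th:3}, the total cost of this allocation is $\rho_{h^*}(X\id_{(A\cup B)^c})$ with $h^*=h\wedge\lambda$, which equals $\int_{Q^-_{1-\alpha}(X)}^{c} h^*(\p(X>x)+\alpha)\d x+\int_{c}^{Q^+_{\alpha}(X)} h^*(\p(X>x)-\alpha)\d x$ with $c$ the median: the branch $h^*(\cdot+\alpha)$ is used below the median and $h^*(\cdot-\alpha)$ above it. This equals $\rho_{G^\alpha_\lambda(h)}(X)=\int \min\{h^*(\p(X>x)+\alpha),h^*(\p(X>x)-\alpha)\}\d x$ only when the two branches cross at the median, which holds for symmetric $h$ (GD, MMD) but fails for general $h\in\H^{\mathrm{C}}$. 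Concretely, for $h(t)=(2t)\wedge(1-t)$, $\lambda\ge 2/3$ and $X$ uniform, the branches cross at the quantile level $t^*=(1+\alpha)/3\neq 1/2$, and on the levels between $t^*$ and $1/2$ your allocation pays $h(t-\alpha)>h(t+\alpha)$, so its cost strictly exceeds the target. The missing object is the paper's crossing point $s^*$, defined as the sign change of $x\mapsto h^*(\p(X>x)+\alpha)-h^*(\p(X>x)-\alpha)$: the IQD agent must receive $(X-s^*)\id_{A\cup B}$, i.e.\ the middle must be flattened at $s^*$, not at the median.

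The lower-bound strategy also fails, and not merely for lack of detail: the decoupling genuinely loses value, so no upper bound can ``close the gap''. Take $h=h_{\mathrm{MMD}}$ and $\lambda=1/2$, so your split is $g_1(t)=t\wedge\tfrac12$ and $g_2(t)=-(t-\tfrac12)^+$. The first sub-problem is $(\tfrac12 Q^-_\alpha)\square(\tfrac12 \ES_{1/2})$, which by the RVaR inf-convolution formula of \cite{ELW18} equals $\tfrac12\RVaR_{\alpha,1/2}(X)$; since $\rho_{g_2}(Z)=\rho_{g_1}(-Z)$, the second sub-problem is the reflection of the first and equals $\tfrac12\RVaR_{\alpha,1/2}(-X)$. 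Their sum falls short of the target: $\rho_{G^\alpha_\lambda(h)}(X)-\tfrac12\RVaR_{\alpha,1/2}(X)-\tfrac12\RVaR_{\alpha,1/2}(-X)=\int_{1/2-\alpha}^{1/2}Q^-_t(X)\d t-\int_{1/2}^{1/2+\alpha}Q^-_t(X)\d t$, which equals $\alpha^2>0$ for uniform $X$. Equivalently, at $t=1/2$ your ``re-added'' distortion is $\tfrac12-2\alpha$ while $G^\alpha_\lambda(h)(1/2)=\tfrac12-\alpha$: the two one-sided shifts interfere in the central band, so $\inf(\cdot+\cdot)\ge\inf+\inf$ is strict here. (Your claim that each sub-inf-convolution is a shifted, truncated copy of $\rho_h$ is also unsupported for the non-monotone piece $g_2$; one-sided shift formulas hold for quantiles against monotone distortions, and the paper's own Theorem \ref{th:IQD} shows non-monotone distortions shift on \emph{both} sides.) The paper's proof avoids decoupling altogether: it bounds $\lambda\,\mathrm{IQD}_\alpha(X_1)\ge\lambda\,\mathrm{IQD}_0(X_1\id_{(A\cup B)^c})\ge\rho_{h^*}(X_1\id_{(A\cup B)^c})$ using tail events of $X_1$ itself, aggregates via subadditivity of the concave $h^*$, and then combines $|\p(X_1^*+X_2>x)-\p(X>x)|\le\alpha$ with a split of the integral at the mode of $x\mapsto h^*(\p(X_1^*+X_2>x))$ to produce the minimum defining $G^\alpha_\lambda(h)$; the attainment then hinges on $s^*$ as above.
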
 
\begin{proof}[Proof of Lemma \ref{lem:IQD}]

We first verify that $\lambda\mathrm {IQD}_\alpha(X_1)+\rho_h(X_2)\ge \rho_{G^\alpha_\lambda(h)}(X)$ for any $(X_1, X_2) \in \mathbb A_2(X)$.
As both $\mathrm{IQD}_\alpha$ and $\rho_h$ are location invariant, we can, without loss of generality, assume the allocation $(X_1, X_2)$ satisfies $Q_{1/2}^-(X_1)=0$.
Let $A$ be a right $\alpha$-tail event of $X_1$  and $B\subseteq A^c$ be a left $\alpha$-tail event of $X_1$. 
Hence, $\p(A)=\p(B)=\alpha$ and $X_1(\omega_B)\le X_1(\omega)\le X_1(\omega_A)$ a.s.~for $\omega_A \in A$, $\omega_B \in B$ and $\omega \in (A \cup B)^c$. Let $X^*_1=X_1\id_{(A \cup B)^c}$ and $h^*=h\wedge \lambda$.
Recall that $\mathrm{IQD}_0=Q_0^--Q_1^+$ is the range functional.
It is straightforward to verify that $\mathrm{IQD}_\alpha(X_1)=\mathrm{IQD}_0(X^*_1)$ and that $h^*$ is concave. 
Further, notice that $\lambda\mathrm{IQD}_0\ge \rho_{h^*}$, $\rho_{h}\ge \rho_{h^*}$ and $\rho_{h^*}$ is subadditive.
Therefore,
\begin{align*}
\lambda\mathrm{IQD}_{\alpha}(X_1)+\rho_h(X_2)= \lambda\mathrm{IQD}_0(X^*_1)+\rho_h(X_2)\ge \rho_{h^*}(X^*_1)+\rho_{h^*}( X_2)\ge \rho_{h^*}\left( X^*_1+X_2\right).
\end{align*}

 As $ Q_{1/2}^-(X_1)=0$, we   have, in the a.s.~sense, $X_1 \ge 0$ on $A$ and $X_1 \le 0$  on $B$; that is,  $X^*_1 +X_2 =X $  on $(A\cup B)^c $,  $X^*_1 +X_2 \ge X $  on $B$, and  $X^*_1  +X_2 \le  X $  on $A $. For any $x \in \R$, we have
\begin{align*}\p\left( X^*_1+X_2>x\right)&\ge \p\left(X>x,\, (A\cup B)^c\right)+\p\left(X>x,\, B\right)\ge \p\left(X>x\right)-\p\left(A\right)=\p\left(X>x\right)-\alpha,
\end{align*}
and similarly,
$\p\left(X^*_1+X_2\le x\right)\ge \p\left(X\le x\right)-\alpha.
$
Therefore, 
 $$ \p\left(X> x\right)-\alpha\le \p\left( X^*_1+X_2>x\right)\le \p\left(X> x\right)+\alpha. $$
Let $s\in \R$ be such that 
$x\mapsto h^*(\p(X^*_1+X_2>x ))$
is increasing on $(-\infty,s]$
and decreasing on $[s,\infty)$. Such $s $ exists since $h^*$ is first increasing and then decreasing. By treating $h^*(t)=0$ if $t$ is outside $[0,1]$, we have 
\begin{align*}
\rho_{h^*} (X_1^*+X_2)
&=
\int_{-\infty}^s  h^*(\p(X^*_1+X_2>x )  )\d x
+ \int_{s}^\infty  h^*(\p(X^*_1+X_2>x  ) )\d x
\\ &\ge \int_{-\infty}^s  h^*(\p(X>x ) -\alpha )\d x
+ \int_{s}^\infty  h^*(\p( X >x  )+\alpha, 1)\d x
\\& \ge \int_{-\infty}^\infty \min\left \{  h^*\left(\p(X >  x)  +\alpha\right),
   h^*(\p(X > x) -\alpha) \right\} \d x
  = \rho_{G_{\lambda}^\alpha(h) }(X),
\end{align*} 
where the last equality follows because 
$$
G_{\lambda}^\alpha(h)(t)
= (h(t-\alpha)\wedge h(t+\alpha) \wedge \lambda)
\id_{\{\alpha<t<1-\alpha\}}
=h^*(t-\alpha)\wedge h^*(t+\alpha)
$$
and $G_{\lambda}^\alpha(h)(1)=0$. 
Therefore, we have $\lambda\mathrm {IQD}_\alpha(X_1)+\rho_h(X_2)\ge \rho_{G^\alpha_\lambda(h)}(X)$.

Next, we give an allocation $(X_1, X_2)\in\mathbb A_2(X)$ that   attains the lower bound  $\rho_{G^\alpha_\lambda(h)}(X)$. 
Define the function $f(s) = h^*(\p(X>x)+\alpha) - h^*(\p(X>x)-\alpha)$ where $h^*(t)=0$ if $t$ is outside $[0,1]$.
Since $h^*$ is concave,
the function $s\mapsto f(s)$
is increasing on the set of $s$ with $\p(X>s)\in [\alpha,1-\alpha]$.
Moreover, $f(s)\le 0$ for $s\le  Q_{1-\alpha}^-(X) $  and $f(s)\ge 0$ for $s\ge Q_{\alpha}^+(X).$ 
Hence, 
there exists $s^* \in [  Q_{1-\alpha}^-(X),  Q_{\alpha}^+(X)]$ such that $f(s)\ge 0$ for $s < s^*$
and $f(s)\le 0$ for $s>s^*$.

Let $A$ be a right $\alpha$-tail event of $X$  and $B\subseteq A^c$ be a left $\alpha$-tail event of $X$.
Write $T=A\cup B$.
Let $(Y_1, Y_2) \in \mathbb{A}^+_2(X\id_{ T^c}+s^*\id_{T})$ be a $(\lambda,1)$-optimal allocation for $(\mathrm{IQD}_{\alpha}, \rho_h)$.
Define $X_1=(X-s^*)\id_{ T }+Y_1$ and $X_2=Y_2$; clearly $(X_1,X_2)\in \mathbb A_2(X)$.  By Theorem \ref{th:3}, we have
$$\lambda\mathrm{IQD}_{\alpha}(X_1)+\rho_h(X_2)= \lambda\mathrm{IQD}_0(Y_1)+\rho_h(Y_2)= \rho_{h^*}(X\id_{T^c}+s^*\id_{T}).$$
Note that 
\begin{align*}
\rho_{h^*}(X\id_{T^c}+s^*\id_{T}) 
&=
\int_{Q_{1-\alpha}^-(X)}^{Q_{\alpha}^+(X)}
h^*(\p(X\id_{T^c}+s^*\id_{T}>x) ) \d x
\\&= \int_{Q_{1-\alpha}^-(X)}^{s^*}
h^*(\p(X>x,~T^c) + 2\alpha ) \d x + \int^{Q_{\alpha}^+(X)}_{s^*}
h^*(\p(X>x,~T^c)) \d x
\\&= \int_{Q_{1-\alpha}^-(X)}^{s^*}
h^*(\p(X>x ) + \alpha ) \d x + \int^{Q_{\alpha}^+(X)}_{s^*}
h^*(\p(X>x ) - \alpha ) \d x 
\\&= \int_{Q_{1-\alpha}^-(X)}^{Q_{\alpha}^+(X)}
\min\{ h^*(\p(X>x ) + \alpha ), 
h^*(\p(X>x ) - \alpha )\} \d x  
=\rho_{G^\alpha_\lambda(h)}(X),
\end{align*} 
where the second-last equality is due to the definition of $s^*$.
Therefore, the lower bound $\rho_{G^\alpha_\lambda(h)}(X)$ can be attained. 
Thus, $(\lambda \mathrm{IQD}_{\alpha} ) \square \rho_h =\rho_{G^\alpha_\lambda(h)}(X)$. 
\end{proof}

\begin{proof}[Proof of Theorem \ref{pr:IQD_h}]
As the cases $I=[n]$ and $N=[n]$ follow from  Theorems \ref{th:IQD} and    \ref{th:3} respectively, we assume that the sets $I$ and $N$ are non-empty.

(i) The equality  $\dsquare_{i=1}^n (\lambda_i \rho_{h_i}) = \rho_{G^\alpha_\lambda(h)}$ follows from Lemma \ref{lem:IQD}, Theorems \ref{th:IQD} and \ref{th:3}, and
the fact that the inf-convolution is associative (Lemma 2 of \cite{LWW20}), which together yield
$$
\dsquare_{i=1}^n (\lambda_i \rho_{h_i})
= \left(\dsquare_{i\in I}  (\lambda_i \rho_{h_i})\right) \square 
\left(\dsquare_{i\in N} (\lambda_i \rho_{h_i}) \right)
= (\lambda \mathrm{IQD}_\alpha) \square \rho_{h} = \rho_{G^\alpha_\lambda(h)}.
$$

(ii)   Without loss of generality, we assume $c=c_1=\dots=c_n=0$ and let $Y=X\id_{(A\cup B)^c}$. If $\alpha\ge 1/2$, it is straightforward to check that $(X_1, \dots, X_n)$ is Pareto optimal as $\rho_{h_i}(X_i)=0$ for $i\in [n]$. Now, we assume $\alpha<1/2$.

We first show that $\rho_{h_i}(X_i)\le \rho_{\hat h_i}(Y_i)$ for all $i\in [n]$. Note that  $\rho_{h_i}(X_i)=\rho_{h_i}(Y_i)=\rho_{\hat h_i}(Y_i)$ for  all $i\in N$. We are left to show
$\mathrm{IQD}_{\alpha_i}(X_i)\le \mathrm{IQD}_0(Y_i)$ 
for all $i\in I$. As $X(\omega)\le 0$ a.s.~for $\omega \in B_i$,  
$$\p(X_i \le Q^-_0(Y_i))=\p(X\id_{A_i\cup B_i}+Y_i\le Q^-_0(Y_i))\ge \p(B_i)+\p((A_i\cup B_i)^c)=\alpha_i+1-2\alpha_i=1-\alpha_i.$$ That is, $Q^-_{\alpha_i}(X_i) \le Q^-_0(Y_i)$. Similarly,  $Q^+_{1-\alpha_i}(X_i) \ge Q^+_1(Y_i)$. Hence,  $\rho_{h_i}(X)=\mathrm{IQD}_{\alpha_i}(X_i)\le \mathrm{IQD}_0(Y_i)=\rho_{\hat h_i}(Y_i)$ 
for all $i\in I$.

 Let $(Y_1',\dots,Y_n')$ be a 
  comonotonic improvement of $(Y_1,\dots,Y_n)$.
  The definition of comonotonic improvement and 
  Pareto optimality of $(Y_1,\dots,Y_n)$   imply that $\rho_{\hat h_i}(Y_i)=\rho_{\hat h_i}(Y'_i)$ for all $i\in [n]$. 
  First, if there exist some $i\in N$ such that 
  $h_i(t)=0$ on $[0,1]$, then 
  Pareto optimality of $(Y'_1,\dots,Y'_n)$ implies 
  that $\rho_{\hat h_i}(Y'_i)=0$ for each $i\in [n]$.
  This in turn implies that $\rho_{h_i}(X_i)=0$ for each $i\in [n]$, and hence $(X_1,\dots,X_n)$ is Pareto optimal. Below, we assume for each $i\in N$, $h_i(t)>0$
  for some $t\in (0,1)$, which gives that $h_i(t)>0$ for all $t\in (0,1)$ due to concavity.

As $\hat h_i(1)=0$ and $\hat h_i(t)>0$ for all $i\in [n]$ and $t\in (0,1)$, by Proposition \ref{pr:charac_location_inv}, Pareto optimality  of $(Y_1',\dots,Y_n
')$    implies that there exist $K\subseteq  [n]$ and a vector $\boldsymbol \lambda \in  (0,\infty)^{K}$ such that $(Y'_i)_{i\in K}$ 
solves $\dboxplus_{i\in K}\rho_{\lambda_i \hat h_i}(Y)$, and $Y'_i$, $i\not\in K$ are constants.
Denote by $h^*= \bigwedge_{i\in N \cap K} (\lambda_i h_i)$ and $\lambda^*= \bigwedge_{i\in I\cap K}  \lambda_i>0$; here, we set $\inf \varnothing =\infty$. 
Putting together several observations above, we get 
\begin{align}
\sum_{i\in K} \lambda_i\rho_{h_i}(X_i)\le \sum_{i\in K} \rho_{\lambda_i  \hat h_i}(Y_i)=\sum_{i\in K} \rho_{\lambda_i \hat h_i}(Y'_i)=\dboxplus_{i\in K}\rho_{\lambda_i \hat h_i}(Y)=\rho_{h^* \wedge \lambda^*}(Y),
\label{eq:pf-th4}
\end{align}
where the first inequality holds because $\rho_{h_i}(X_i)\le \rho_{\hat h_i}(Y_i)$ for all $i\in [n]$, the first equality holds because $\rho_{\hat h_i}(Y_i)=\rho_{\hat h_i}(Y'_i)$ for all $i\in [n]$,
the second equality is due to $\boldsymbol \lambda$-optimality of $(Y_i')_{i\in K}$ whose component-wise sum is $Y$ plus a constant, and the last equality is due to Theorem \ref{th:3}.
Furthermore, for $i\notin K$, we have $ 0\le \rho_{h_i}(X_i)\le \rho_{\hat h_i}(c_i)=0$; that is $\rho_{h_i}(X_i)=0$. 
Note that 
\begin{align}
\rho_{h^* \wedge \lambda^*}(Y)= 
\rho_{h^* \wedge \lambda^*}(X\id_{(A\cup B)^c})= \rho_{G_{\lambda^*}^\alpha(h^*)}(X).\label{eq:pf-th4-2}
\end{align}
Take $\beta\ge \lambda^*$.
If $i\in N\setminus K$, then 
$X_i=Y_i'$ is a constant.
Write $Z=\sum_{i\in I\cup K} X_i$.
Using \eqref{eq:pf-th4} and \eqref{eq:pf-th4-2}, we get 
\begin{align}
\sum_{i\in K} \lambda_i\rho_{h_i}(X_i) + 
\sum_{i\in I\setminus K} \beta\rho_{h_i}(X_i)&\le \rho_{G_{\lambda^*}^\alpha(h^*)}(X)  =\rho_{G_{\lambda^*}^\alpha(h^*)}\left(X-\sum_{i\in N\setminus K} X_i\right)
=\rho_{G_{\lambda^*}^\alpha(h^*)}(Z).
\label{eq:pf-th4-3}
\end{align}
Using part (i), we have 
$$\left(\dsquare_{i\in K} ( \lambda_i\rho_{h_i} )\right) \square \left(\dsquare_{i\in I\setminus K} ( \beta \rho_{h_i} )\right) =\rho_{G_{\lambda^*}^\alpha (h^*)}.
$$
Therefore, \eqref{eq:pf-th4-3} implies  that $(X_i)_{i\in I\cup K}\in \mathbb A_n(Z) $ minimizes  $\sum_{i\in K} \lambda_i\rho_{h_i}(X_i) + 
\sum_{i\in I\setminus K} \beta\rho_{h_i}(X_i)$. 
Since also $\rho(X_i)=0$ for $i\notin K$, 
we conclude that $(X_1,\dots,X_n)$ is Pareto optimal.
\end{proof}

\section{Proof of of results in Section \ref{sec:7}}

\begin{proof}[Proof of Lemma \ref{lemma:absolute_continuity}]
Let $g(t)=h(\P_0(X>Q^{\p}_{t}(X)))$ for $t \in [0,1]$, where $Q^{\p}_{t}(X)$ is the  left quantile under the measure $\p$. We first show that $g(\p(X>x))=h(\P_0(X>x))$ for all $x\in \R$. 
It is clear that $
    g(\p(X>x))=h(\P_0(X>Q^{\p}_{\p(X> x)}(X))).
$
By the definition of $Q_{t}^\p$, we have 
$Q^{\p}_{\p(X> x)}(X)\le x$. For $x\in \R$, if $Q^{\p}_{\p(X> x)}(X)=x$, then it is clear that  $g(\p(X>x))=h(\P_0(X>x))$. If $Q^{\p}_{\p(X> x)}(X)<x$, we have $\p(Q^{\p}_{\p(X> x)}(X)<X\le x)=0$.
As $\P_0 \ll \p$,  we have $\P_0(Q^{\p}_{\p(X> x)}(X)<X\le x)=0$.
Hence, $$h(\P_0(X>Q^{\p}_{\p(X> x)}(X)))=h\left(\P_0(x \ge X>Q^{\p}_{\p(X> x)}(X))+\P_0(X>x)\right)=h(\P_0(X>x)).$$
Taking $t \uparrow 1$, we obtain $g(1)=h(1)$.

Next, we will show $\rho_h^{\P_0}(f(X))= \rho_g^\p(f(X))$ for any increasing function $f:\R \to \R$. Denote by $f^{-1}(x)=\inf\{y: f(y)>x\}$ the pseudo-inverse function of $f$. As $\p(X=x)=0$ for all $x\in \R$ and $\P_0\ll \p$, we have $\p(X=x)=\P_0(X=x)$ for all $x\in \R$. Hence,
$$ 
\begin{aligned}
\rho_h^{\P_0}(f(X))=&\int_0^\infty h(\P_0(f(X) > x))\d x + \int_{-\infty}^{0} (h(\P_0(f(X) > x))-h(1) )\d x \\
&=\int_0^\infty h\left(\P_0(X > f^{-1}(x))+\P_0(X = f^{-1}(x))\id_{\{f(f^{-1}(x))>x\}}\right)\d x \\
&\quad + \int_{-\infty}^{0} \left(h\left(\P_0(X > f^{-1}(x))+\P_0(X = f^{-1}(x))\id_{\{f(f^{-1}(x))>x\}}\right)-h(1) \right)\d x \\
&=\int_0^\infty h\left(\P_0(X > f^{-1}(x))\right)\d x + \int_{-\infty}^{0} (h(\P_0(X > f^{-1}(x)))-h(1) )\d x \\
&=\int_0^\infty g(\p(X > f^{-1}(x)))\d x + \int_{-\infty}^{0} (g(\p(X > f^{-1}(x)))-h(1) )\d x = \rho_g^\p(f(X)),
\end{aligned}$$
as desired.
\end{proof}

\begin{proof}[Proof of Proposition \ref{pr:heterogeneous_beliefs}]
Let $\P=1/n \sum_{i=1}^n \P_i$ and $g_i(t)=h_i(\P_i(X>Q^{\P}_{t}(X)))$ for $t \in [0,1]$. It is clear that $X$ also has a density function under $\P$ and $\rho_{h_i}^{\P_i}(f(X))=\rho_{g_i}^{\P}(f(X))$ for increasing functions $f$ and $i\in [n]$ by Lemma \ref{lemma:absolute_continuity}. Hence, $(\rho_{h_1}^{\P_1}, \dots, \rho_{h_n}^{\P_n})$ and $(\rho_{g_1}^{\P}, \dots, \rho_{g_n}^{\P})$ have the same class of Pareto-optimal allocations.
\end{proof}

\section{Omitted details  in Section \ref{sec:5}} \label{app:C}
We present the functions  $G_{{\lambda}}^\alpha(h)$ for Cases 1 to 6 in Section \ref{subsec:IQD_and_RA} which yield the allocations that we present in that section.

\noindent\textbf{Case 1:}
When $c_1\ge 1/2$ and $c_3\ge 1/2$ it is
$$G_{{\lambda}}^\alpha(h)(t)=
\lambda_2\left((t-\alpha)\wedge(1-t-\alpha)  \right)\id_{\{\alpha < t < 1-\alpha\}}.$$ 

\noindent\textbf{Case 2:} 
When $c_2\ge 1/2$ and $c_3\le \alpha$ it is
$$G_{{\lambda}}^\alpha(h)(t)=\lambda_3\left([(t-\alpha)(1+\alpha-t)] \wedge[(t+\alpha)(1-\alpha-t)]  \right)\id_{\{\alpha < t< 1-\alpha\}}.$$ 

\noindent \textbf{Case 3:} 
When either 
$\alpha<c_2<c_3<1/2$ or
$\alpha<c_1<1/2<c_3$
it is 
$$G_{{\lambda}}^\alpha(h)(t)=\left(\lambda_2[(t-\alpha)\wedge(1-t-\alpha)] \wedge\lambda_1 \right)\id_{\{\alpha < t< 1-\alpha\}}.$$ 

\noindent \textbf{Case 4:} 
When $c_3\le \alpha<c_2<1/2$  it is
$$G_{{\lambda}}^\alpha(h)(t)=\left(\lambda_3[(t-\alpha)(1+\alpha-t)] \wedge[(t+\alpha)(1-\alpha-t)]\wedge\lambda_1  \right)\id_{\{\alpha < t < 1-\alpha\}}.$$ 

\noindent \textbf{Case 5:} When $\alpha<c_3<1/2< c_2$, it is
$$G_{{\lambda}}^\alpha(h)(t)=\left\{
\begin{aligned}
&0, & t\in [0, \alpha]\cup[1-\alpha,1],\\
&\lambda_2(t-\alpha), & t \in (\alpha, c_3),\\
&\lambda_3(t-\alpha)(1-t+\alpha), & t \in [ c_3, 1/2),\\
&\lambda_3(t+\alpha)(1-t-\alpha), & t \in [ 1/2, 1-c_3),\\
&\lambda_2(1-\alpha-t), & t \in [ 1-c_3, 1-\alpha).
\end{aligned}
\right.$$

\noindent \textbf{Case 6:} 
When $\alpha<c_3\le c_2<1/2$ it is

$$G_{{\lambda}}^\alpha(h)(t)=\left\{
\begin{aligned}
&0, & t\in [0, \alpha]\cup[1-\alpha,1],\\
&\lambda_2(t-\alpha), & t \in (\alpha, c_3),\\
&\lambda_3(t-\alpha)(1-t+\alpha), & t \in [ c_3, c_2),\\
&\lambda_1, & t \in [c_2,1-c_2),\\
&\lambda_3(t+\alpha)(1-t-\alpha), & t \in [1-c_2,1-c_3),\\
&\lambda_2(1-t-\alpha), & t \in [1-c_3, 1-\alpha).\\
\end{aligned}
\right.$$


\begin{thebibliography}{10}


\bibitem[\protect\citeauthoryear{Acerbi}{Acerbi}{2002}]{A02}
{Acerbi, C.} (2002).  Spectral measures of risk: A coherent representation of subjective risk aversion. \emph{Journal of Banking and Finance}, \textbf{26}(7), 1505--1518.

\bibitem[\protect\citeauthoryear{Amarante}{Amarante}{2022}]{A22}
Amarante, M. (2022). A unified framework for Bayesian and non-Bayesian decision making and inference. \emph{Mathematics of Operations Research}, \textbf{47}(4), 2721--2742.


\bibitem[\protect\citeauthoryear{Artzner et al.}{Artzner et al.}{1999}]{ADEH99}
{Artzner, P., Delbaen, F., Eber, J.-M. and Heath, D.} (1999). Coherent measures of risk. \emph{Mathematical Finance}, \textbf{9}(3), 203--228.

\bibitem[\protect\citeauthoryear{Barrieu and El Karoui}{Barrieu and El Karoui}{2005}]{BE05}
Barrieu, P. and El Karoui, N. (2005). Inf-convolution of risk measures and optimal risk transfer.
\emph{Finance and Stochastics},  \textbf{9}, 269--298.


\bibitem[\protect\citeauthoryear{Bellini et al.}{2022}]{BFWW22}
Bellini, F., Fadina, T., Wang, R. and Wei, Y. (2022). Parametric measures of variability induced by risk measures. \emph{Insurance: Mathematics and Economics}, \textbf{106}, 270--284. 


\bibitem[\protect\citeauthoryear{Boonen and Ghossoub}{2020}]{BG20}
Boonen, T. J. and Ghossoub, M. (2020).
Bilateral risk sharing with heterogeneous beliefs and exposure constraints.
\emph{ASTIN Bulletin}, \textbf{50}(1), 
293--323.




\bibitem[\protect\citeauthoryear{Carlier and Dana}{Carlier  and Dana}{2003}]{CD03}
 Carlier, G. and Dana, R.-A.  (2003). Core of convex distortions of a probability. \emph{Journal of Economic Theory}, \textbf{113}, 199--222.



\bibitem[\protect\citeauthoryear{Carlier et al.}{Carlier et al.}{2012}]{CDG12}
 Carlier, G.,  Dana, R.-A. and Galichon, A. (2012). Pareto efficiency for the concave order and multivariate
comonotonicity. \emph{Journal of Economic Theory}, \textbf{147}, 207--229.

\bibitem[\protect\citeauthoryear{Cerreia-Vioglio et al.}{2011}]{CMMM11}
Cerreia-Vioglio, S.,  Maccheroni, F., Marinacci, M. and Montrucchio, L. (2011). Risk measures: Rationality and diversification. \emph{Mathematical Finance},  {\bf 21}(4), 743--774.

\bibitem[\protect\citeauthoryear{Cerreia-Vioglio et al.}{Cerreia-Vioglio et al.}{2014}]{CMMR14}
Cerreia-Vioglio, S., Maccheroni, F., Marinacci, M. and Rustichini, A. (2014). Niveloids and their extensions: Risk measures on small domains. \emph{Journal of Mathematical Analysis and Applications}, \textbf{413(1)}, 343-360.


\bibitem[\protect\citeauthoryear{Cheung and Lo}{Cheung and Lo}{2017}]{CL17}{Cheung, K. C. and Lo, A.} (2017). Characterizations of optimal reinsurance treaties: a cost-benefit approach. \emph{Scandinavian Actuarial Journal}, \textbf{2017}(1), 1--28.

\bibitem[\protect\citeauthoryear{Denneberg}{1990}]{D90}
Denneberg, D. (1990). Premium calculation: Why standard deviation should be replaced by absolute deviation. \emph{ASTIN Bulletin}, \textbf{20}(2), 181--190.

\bibitem[\protect\citeauthoryear{Denneberg}{1994}]{D94}
Denneberg, D. (1994). \emph{Non-additive Measures and Integral}. Kluwer, Dordrecht.

 
\bibitem[\protect\citeauthoryear{Dhaene et al.}{Dhaene et al.}{2012}]{DKLT12}
{Dhaene, J. and Kukush, A., Linders, D. and Tang, Q.} (2012). Remarks on quantiles and distortion risk measures. \emph{European Actuarial Journal}, \textbf{2}(2), 319--328.

\bibitem[\protect\citeauthoryear{Embrechts et al.}{2018}]{ELW18}
Embrechts, P., Liu, H. and Wang, R. (2018). Quantile-based risk sharing. \emph{Operations Research}, \textbf{66}(4), 936--949.

\bibitem[\protect\citeauthoryear{Embrechts et al.}{2020}]{ELMW20}
Embrechts, P., Liu, H., Mao, T. and Wang, R. (2020). Quantile-based risk sharing with heterogeneous beliefs. \emph{Mathematical Programming Series B}, \textbf{181}(2), 319--347.

\bibitem[\protect\citeauthoryear{Embrechts et al.}{Embrechts et al.}{2002}]{EMS02}
Embrechts, P., McNeil, A. and Straumann, D. (2002). Correlation and dependence in risk management: properties and pitfalls. In \emph{Risk Management: Value at Risk and Beyond} (Ed. Dempster, M. A. H.), 176--223, Cambridge University Press. 



\bibitem[\protect\citeauthoryear{Filipovic and Svindland}{Filipovi{\'c} and Svindland}{2008}]{FS08} Filipovic, D and Svindland, G. (2008).  Optimal capital and risk allocations for law- and cash-invariant convex functions. \emph{Finance and Stochastics}, \textbf{12}(3), 423--439.


 \bibitem[\protect\citeauthoryear{F\"ollmer and Schied}{F\"ollmer and Schied}{2016}]{FS16} F\"ollmer, H.~and Schied, A.~(2016). \emph{Stochastic Finance. An Introduction in Discrete Time}. Fourth Edition.  {Walter de Gruyter, Berlin}.




\bibitem[\protect\citeauthoryear{Furman et al.}{Furman et al.}{2017}]{FWZ17}
{Furman, E., Wang, R. and Zitikis, R.} (2017). Gini-type measures of risk and variability: Gini shortfall, capital allocation and heavy-tailed risks. \emph{Journal of Banking and Finance}, \textbf{83}, 70--84.



\bibitem[\protect\citeauthoryear{Grechuk, Molyboha and Zabarankin}{Grechuk et al.}{2009}]{GMZ09}
Grechuk, B., Molyboha, A. and Zabarankin, M. (2009). Maximum entropy principle with general deviation measures. \emph{Mathematics of Operations Research}, \textbf{34}(2), 445--467. 




\bibitem[\protect\citeauthoryear{Jouini et al.}{2008}]{JST08}
Jouini, E., Schachermayer, W. and Touzi, N. (2008). Optimal risk sharing for law invariant monetary utility functions. \emph{Mathematical Finance}, \textbf{18}(2), 269--292.

\bibitem[\protect\citeauthoryear{Konno and Yamazak}{1991}]{KY91}
Konno, H. and Yamazaki, H. (1991). Mean-absolute deviation portfolio optimization model and its applications to Tokyo stock market. \emph{Management Science}, \textbf{37}(5), 519--531.


\bibitem[\protect\citeauthoryear{Landsberger and Meilijson}{Landsberger and Meilijson}{1994}]{LM94}
Landsberger, M. and  Meilijson, I. (1994). Co-monotone allocations, Bickel-Lehmann dispersion and the Arrow-Pratt measure of risk aversion. \textit{Annals of Operations Research}, \textbf{52}(2), 97--106.


\bibitem[\protect\citeauthoryear{Lauzier et al.}{Lauzier et al.}{2023}]{LLW23}
Lauzier, J.G., Lin, L. and Wang, R. (2023). Pairwise counter-monotonicity. \textit{Insurance: Mathematics and Economics}, \textbf{111}, 279--287.


\bibitem[\protect\citeauthoryear{Liu}{2020}]{L20}
Liu, H. (2020). Weighted comonotonic risk sharing under heterogeneous beliefs. \emph{ASTIN Bulletin}, \textbf{50}(2), 647--673.


\bibitem[\protect\citeauthoryear{Liu et al.}{2022}]{LMWW22}
Liu, F., Mao, T., Wang, R. and Wei, L. (2022). Inf-convolution, optimal allocations, and model uncertainty for tail risk measures. \emph{Mathematics of Operations Research}, \textbf{47}(3), 2494--2519.



\bibitem[\protect\citeauthoryear{Liebrich}{2024}]{Lieb24} Liebrich, FB. (2024). Risk sharing under heterogeneous beliefs without convexity. \emph{Finance and Stochastics} \textbf{28}, 999--1033.


\bibitem[\protect\citeauthoryear{Ludkovski and R\"uschendorf}{2008}]{LM08}
Ludkovski, M. and R\"uschendorf, L. (2008). On comonotonicity of Pareto optimal risk sharing. \emph{Statistics and Probability Letters}, \textbf{78}(10), 1181--1188.

\bibitem[\protect\citeauthoryear{Liu et al.}{2020}]{LWW20}
Liu, P., Wang, R. and Wei, L. (2020). Is the inf-convolution of law-invariant preferences law-invariant? \emph{Insurance: Mathematics and Economics}, \textbf{91}, 144--154.

\bibitem[\protect\citeauthoryear{Markowitz}{1952}]{MARK52}
Markowitz, H. (1952). Portfolio selection. \emph{The Journal of Finance}, \textbf{7}(1), 77--91.

\bibitem[\protect\citeauthoryear{Markowitz}{2014}]{M14}
Markowitz, H. (2014). Mean–variance approximations to expected utility. \emph{European Journal of Operational Research}, \textbf{234}(2), 346--355.


\bibitem[\protect\citeauthoryear{Maccheroni et al.}{2013}]{MMR13}
Maccheroni, F., Marinacci, M. and Ruffino, D. (2013). Alpha as ambiguity: Robust mean‐variance portfolio analysis. \emph{Econometrica}, \textbf{81}(3), 1075--1113.


\bibitem[\protect\citeauthoryear{Marinacci and Montrucchio}{Marinacci and Montrucchio}{2004}]{MM04}
{Marinacci, M. and Montrucchio L.} (2004). Introduction to the mathematics of ambiguity.  In \emph{Uncertainty in Economic Theory}, I. Gilboa, Ed.  Routledge, New York, NY, USA. 46--107.




\bibitem[\protect\citeauthoryear{Negishi}{Negishi}{1960}]{N60}
Negishi, T. (1960). Welfare economics and existence of an equilibrium for a competitive economy. \emph{Metroeconomica}, \textbf{12}(2‐3), 92--97.


\bibitem[\protect\citeauthoryear{Kusuoka}{Kusuoka}{2001}]{K01}
{Kusuoka, S.} (2001). On law invariant coherent risk measures. \emph{Advances in Mathematical Economics}, \textbf{3}, 83--95.

 \bibitem[\protect\citeauthoryear{Puccetti and Wang}{Puccetti and
  Wang}{2015}]{PW15}
Puccetti, G. and Wang R.  (2015).
Extremal dependence concepts.
 \emph{Statistical Science},  \textbf{30}(4), 485--517.

      \bibitem[\protect\citeauthoryear{Ravanelli and Svindland}{Ravanelli and Svindland}{2014}]{RS14}
Ravanelli, C. and Svindland, G. (2014). Comonotone Pareto optimal allocations for law invariant robust
utilities on $L^1$. \emph{Finance and Stochastics}, \textbf{18}(1), 249--269.
 
\bibitem[\protect\citeauthoryear{Rockafellar et al.}{Rockafellar et al.}{2006}]{RUZ06}
Rockafellar, R. T., Uryasev, S. and Zabarankin, M. (2006). Generalized deviation in risk analysis. \emph{Finance and Stochastics}, \textbf{10}, 51--74.
 
\bibitem[\protect\citeauthoryear{Rostek}{Rostek}{2010}]{R10}
 Rostek, M. (2010).
Quantile maximization in
decision theory. \emph{Review of Economic Studies}, \textbf{77}, 339--371.


 

 
\bibitem[\protect\citeauthoryear{Rothschild and Stiglitz}{1970}]{RS70}
Rothschild, M. and Stiglitz, J. E. (1970). Increasing risk: I. A definition. \emph{Journal of Economic Theory}, \textbf{2}(3), 225--243.






\bibitem[\protect\citeauthoryear{R{\"u}schendorf}{R{\"u}schendorf}{2013}]{R13}
R{\"u}schendorf, L. (2013).
  {\em Mathematical Risk Analysis. Dependence, Risk Bounds, Optimal
  Allocations and Portfolios}.
  Springer, Heidelberg.
  
\bibitem[\protect\citeauthoryear{Schmeidler}{Schmeidler}{1989}]{S89}
{Schmeidler, D.} (1989). Subjective probability and expected utility without additivity.
\emph{Econometrica}, \textbf{57}(3), 571--587.


  
 \bibitem[\protect\citeauthoryear{Shaked and Shanthikumar}{Shaked and Shanthikumar}{2007}]{SS07} Shaked, M. and Shanthikumar, J. G. (2007). \emph{Stochastic Orders}. Springer Series in Statistics.

 \bibitem[\protect\citeauthoryear{Shalit and Yitzhaki}{1984}]{SY84}
Shalit, H. and Yitzhaki, S. (1984). Mean-Gini, portfolio theory, and the pricing of risky assets. \emph{The Journal of Finance}, \textbf{39}(5), 1449--1468. 

\bibitem[\protect\citeauthoryear{Wakker}{Wakker}{2010}]{W10}
Wakker, P. P. (2010). \emph{Prospect Theory: For Risk and Ambiguity}. Cambridge University Press.

\bibitem[\protect\citeauthoryear{Wang et al.}{Wang et al.}{2020a}]{WWW20a}
Wang, Q., Wang, R. and Wei, Y.  (2020a). Distortion riskmetrics on general spaces. \emph{ASTIN Bulletin}, \textbf{50}(4), 827--851. 


\bibitem[\protect\citeauthoryear{Wang}{Wang}{1996}]{W96}
Wang, S. (1996). Premium calculation by transforming the layer premium density. \emph{ASTIN Bulletin}, \textbf{26}, 71--92.


\bibitem[\protect\citeauthoryear{Wang et al.}{Wang et al.}{2020b}]{WWW20b}
Wang, R., Wei, Y. and Willmot, G. E. (2020b). Characterization, robustness and aggregation of signed Choquet integrals. \emph{Mathematics of Operations Research}, \textbf{45}(3), 993--1015.

\bibitem[\protect\citeauthoryear{Wang and Zitikis}{2021}]{WZ21}
 Wang, R. and Zitikis, R. (2021). An axiomatic foundation for the Expected Shortfall. \emph{Management Science}, \textbf{67}, 1413--1429.
 
\bibitem[\protect\citeauthoryear{Wang and Wei}{Wang and Wei}{2020}]{WW20} Wang, R. and Wei, Y. (2020). Risk functionals with convex level sets. \emph{Mathematical Finance},  \textbf{30}(4), 1337--1367.

\bibitem[\protect\citeauthoryear{Weber}{Weber}{2018}]{W18} Weber, S. (2018). Solvency II, or how to sweep the downside risk under the carpet. \emph{Insurance: Mathematics and Economics}, \textbf{82}, 191--200.

\bibitem[\protect\citeauthoryear{Yaari}{Yaari}{1987}]{Y87}
{Yaari, M. E.} (1987). The dual theory of choice under risk. \emph{Econometrica}, \textbf{55}(1), 95--115.


\bibitem[\protect\citeauthoryear{Yule}{Yule}{1911}]{Y11}
Yule, G. U. (1911). \emph{An Introduction to the Theory of Statistics}. Charles Griffin and Company.





\end{thebibliography}
\end{document}